\documentclass{article}

\usepackage[draft]{todonotes}
\usepackage[utf8]{inputenc}
\usepackage[english]{babel}
\usepackage{amsmath, amsthm}
\usepackage{amsfonts}
\usepackage{pst-node}
\usepackage{braket}
\usepackage{tikz-cd}
\usepackage{amssymb}
\usepackage{mathrsfs}
\usepackage{textcomp}
\usepackage{mathabx}
\usepackage{graphicx}
\usepackage{bm}
\usepackage{caption}
\usepackage{enumerate}
\usepackage{subcaption}
\usepackage{framed}
\usepackage[margin=2.8cm]{geometry}
\usepackage{fancybox}
\usepackage{dsfont}
\usepackage{tikz}
\usepackage{mathtools,xparse}
\usepackage{hyperref}
\hypersetup{
    colorlinks=true, 
    linktoc=all,     
    linkcolor=blue,  
}
\usepackage{scalerel,stackengine}
\stackMath

\DeclarePairedDelimiter{\norm}{\lVert}{\rVert}
\NewDocumentCommand{\normL}{ s O{} m }{%
  \IfBooleanTF{#1}{\norm*{#3}}{\norm[#2]{#3}}_{L_2(\Omega)}%
}

\newtheorem{theorem}{Theorem}[section]
\newtheorem{lemma}[theorem]{Lemma}
\newtheorem{prop}[theorem]{Proposition}

\theoremstyle{definition}
\newtheorem{definition}[theorem]{Definition}

\usepackage{framed}
\usepackage{comment}
\iffalse  
\excludecomment{detail}
\else

\fi

\newcommand{\bb}{\mathbb}
\newcommand{\ud}{\underline}
\newcommand{\cl}{\mathcal}

\newcommand{\m}{\mathrm}
\newcommand{\utr}{\underline{\mathrm{Tr}}}

\newcommand{\bnorm}{\norm[\bigg]}

\newcommand{\rmi}{\mathrm{i}}
\newcommand{\rme}{\mathrm{e}}

\newcommand{\toinfty}{\xrightarrow[n\to\infty]{} }

\theoremstyle{remark}
\newtheorem{remark}[theorem]{Remark}

\numberwithin{equation}{section}
\newlength\dlf  

\newcommand\N{{\mathbb N}}

\newcommand\R{{\mathbb R}}
\newcommand\cK{{\mathcal K}}
\newcommand\cU{{\mathcal U}}
\newcommand{\gS}{{\mathfrak{S}}}
\newcommand{\br}{\mathbf r}

\begin{document}

\title{$\mbox{A reduced Hartree-Fock model of slice-like defects in the Fermi sea}$}
\author{\'{E}ric Canc\`{e}s$^{1,2}$, Ling-Ling Cao$^{1}$, Gabriel Stoltz$^{1,2}$ \\
\small $^1$ CERMICS, Ecole des Ponts ParisTech, 6 \& 8 avenue Blaise Pascal, 77455 Marne-la-Vall\'ee, France \\
\small $^2$ Inria, 2 rue Simone Iff, 75589 Paris, France}
\maketitle

\begin{abstract}
Studying the electronic structure of defects in materials is an important subject in condensed matter physics. From a mathematical point of view, nonlinear mean-field models of localized defects in insulators are well understood. We present here a mean-field model to study a particular instance of extended defects in metals. These extended defects typically correspond to taking out a slice of finite width in the three-dimensional homogeneous electron gas. We work in the framework of the reduced Hartree-Fock model with either Yukawa or Coulomb interactions. Using techniques developed in~\cite{Frank2011energy, frank2013} to study local perturbations of the free-electron gas, we show that our model admits minimizers, and that Yukawa ground state energies and density matrices converge to ground state Coulomb energies and density matrices as the Yukawa parameter tends to zero. These minimizers are unique for Yukawa interactions, and are characterized by a self-consistent equation. We moreover present numerical simulations where we observe Friedel oscillations in the total electronic density.
\end{abstract}

\section{Introduction}
\label{introSec}

The study of the electronic structure of defects in materials is an important topic in condensed matter physics (see e.g.~\cite{RevModPhys.86.253,kaxiras_2003, doi:10.1063/1.3318261, RevModPhys.50.797, stoneham2001theories} and references therein). The case of linear one-body Hamiltonians describing independent electrons has been thoroughly investigated, in particular to study the effect of disorder on transport properties (see e.g.~\cite{Bellissard94,Hundertmark2007} and references therein). Nonlinear mean-field models such as Hartree-Fock or Kohn-Sham type models are much more difficult to handle, because of two major difficulties: (i) their non-convexity, leading to possible symmetry breaking and existence of multiple ground-state densities for a given nuclear configuration, (ii) the long-range of Coulomb interactions. The reduced Hartree-Fock model~\cite{Solovej1991} is of particular interest in mathematical physics. First, being strictly convex in the density, it allows to address the latter difficulty while getting rid of the former one. Second, it can be seen as a good approximation of the (extended) Kohn-Sham LDA model~\cite{KohnSham,DreizlerGross,LewinLiebSeiringer}, which is commonly used in solid-state and condensed matter physics. Most results obtained on the reduced Hartree-Fock model can be extended to the Kohn-Sham LDA model~\cite{AnantharamanCances}, up to possibly some assumptions on the uniqueness of the ground-state density matrix or on the coercivity of the second-order derivative at the ground-state under consideration.

For insulators, a reduced Hartree-Fock model with Coulomb interactions has been proposed in~\cite{Cances2008} to study a local defect in an insulating (or semiconducting) crystal, based on the ideas and techniques from~\cite{Hainzl2005,0305-4470-38-20-014,Hainzl2009,HAINZL2005TheMA}. This model is variational and consists of minimizing some renormalized energy formally obtained by taking the difference between the (infinite) energies of the crystal with the defect, and of the perfect crystal. This approach can be mathematically justified by a thermodynamic limit argument. The zero-frequency dielectric polarizability properties of insulating crystals can be inferred from this model by a homogenization procedure~\cite{Cances2010}. Extensions to the time-dependent setting are discussed in~\cite{CANCES2012887}. The numerical analysis of the steady case is dealt with in~\cite{Gontier2016supercell} (see also \cite{Cances2008b}). Let us also mention the works~\cite{CancesLahbabiLewin2013, Lahbabi2014} in which crystals with stationary random distributions of local defects have been studied.

The above mentioned works are only valid for insulators and semiconductors, and crucially rely on the existence of a spectral gap in the spectrum of the mean-field Hamiltonian of the corresponding perfect crystal. Mean-field model for defects in metals are much more difficult to analyze since small perturbations can cause electrons to escape at infinity. On the other hand, many interesting physical problems, such as electronic transport, occur in metals. The typical setting corresponds to fixing different chemical potentials for the electrons in infinite perfect leads to which a system of interest is connected, as studied for instance in~\cite{AJOO07,Cornean2008, Cornean2012, BJLP15}. 

In~\cite{frank2013}, the authors have considered local perturbations of the Fermi sea of the free-electron gas, for a fixed Fermi level. This setting is natural for unbounded metallic systems for which the charge per unit volume cannot be fixed since a local electronic neutrality is needed. The well-posedness of the dynamics of defects of the Fermi sea of the free-electron gas is proved in~\cite{LewinSabin1}. These works are important milestones in the construction of mathematical sound mean-field models for local defects in real (nonuniform) metals.

Lastly, several correlated-electron models for crystals with defects have been proposed in the physics literature, using either heuristic formulations such as Hubbard models, or approximations to many-body systems as provided by Dynamical Mean-Field Theory (DMFT), Green's function methods (GW, Bethe--Salpeter), or Monte Carlo methods, see e.g. the recent monograph~\cite{MartinReiningCeperley} and references therein.

In this work, we study a particular instance of extended defects in metals within the reduced Hartree-Fock model. More precisely, we consider 2D-translational invariant defects in a 3D homogeneous electron gas. A typical situation is the case when a slice of finite width of the jellium modeling the uniform nuclear distribution is taken out. This gives rise to a model describing the uncharged state of a capacitor composed of two semi-infinite leads separated by some dielectric medium or vacuum. This could be a first step toward the construction of a mean-field model for electronic transport~\cite{Lang2001}. Our mathematical analysis heavily relies on the translation invariance in the directions parallel to the slice. Technically, this allows us to reduce the study of a three-dimensional model to the one of a family of one-dimensional problems labeled by a two-dimensional quasimomentum.

Let us emphasize that our analysis could be adapted to treat local defects in the free electron gas or other types of defects with some sort of symmetry (\emph{e.g.} cylinder-shaped defects). Somehow, the situation we consider in this work is the one for which the technical issues are more acute since the family of effective problems is one dimensional, which raises integrability issues due to Peierls oscillations~\cite{frank2013}.

\medskip

This article is organized as follows. In Section~\ref{sec:2}, we introduce a reduced Hartree-Fock model amenable to describe an extended two-dimensional defect in the three-dimensional Fermi sea, under the assumption that the defect is translation invariant in the $(x,y)$-directions. After introducing the functional setting in Section~\ref{kineticSec}, we define renormalized free kinetic and potential energy functionals for an $(x,y)$-translation invariant defect in Section~\ref{defineDefects}. In Section~\ref{mimimizerSec}, we use these elementary bricks to define reduced Hartree-Fock (rHF) energy functionals for $(x,y)$-translation invariant defects, both for Coulomb and Yukawa interactions, and prove the existence of a ground state. We also show that the Yukawa ground states converge to the Coulomb ground states when the characteristic length of the Yukawa interaction goes to infinity, and uniquely characterize the minimizers for Yukawa interactions. The proof of the results presented in Section~\ref{sec:2} can be read in Section~\ref{sec:proofs}, some technical results being postponed to the appendix.


\section{Construction of the model}
\label{sec:2}

Let us first introduce some notation. Unless otherwise specified, the functions on $\R^d$ considered in this article are complex-valued. Elements of $\R^3$ are denoted by $\br = (r,z)$, where $r= (x,y)\in \bb{R}^2$ and $z \in \R$. We denote respectively by ${\mathscr S}(\R^d)$ the Schwartz space of rapidly decreasing functions on~$\R^d$, and by ${\mathscr S}'(\R^d)$ the space of tempered distributions on~$\R^d$.

Let $\mathfrak{H}$ be a separable Hilbert space. We denote by $\mathcal{L}(\mathfrak{H})$ the space of bounded (linear) operators on $\mathfrak{H}$, by $\mathcal{S}(\mathfrak{H})$ the space of bounded self-adjoint operators on $\mathfrak{H}$, and by $\cl K(\mathfrak{H})$ the space of compact operators on $\mathfrak{H}$.
We denote by $\mathfrak{S}_p(\mathfrak{H})$ the $p$-Schatten class on $\mathfrak{H}$, for $1 \leq p < \infty$: $A\in \cl K(\mathfrak{H})$ is in $\mathfrak{S}_p(\mathfrak{H})$ if and only if $\norm{A}_{\mathfrak{S}_p}=(\m{Tr}(|A|^p))^{1/p} <\infty$. Recall that operators in $\mathfrak{S}_1(\mathfrak{H})$ and $\mathfrak{S}_2(\mathfrak{H})$ are respectively called trace-class and Hilbert--Schmidt.

If $A \in \mathfrak{S}_1(L^2(\R^d))$, there exists a unique function $\rho_A \in L^1(\R^d)$ such that
\[
\forall W \in L^\infty(\R^d), \quad \m{Tr}(AW) = \int_{\R^d} \rho_A W.
\]
The function $\rho_A$ is called the density of the operator $A$. If the integral kernel $A(\br,\br')$ of $A$ is continuous on $\R^d \times \R^d$, then $\rho_A(\br)=A(\br,\br)$ for all $\br \in \R^d$. This relation still stands in some weaker sense for a generic trace-class operator.

An operator $A \in \mathcal{L}(L^2(\R^d))$ is called locally trace-class if the operator $\chi A \chi$ is trace-class for any $\chi \in C^\infty_c(\R^d)$. The density of a locally trace-class operator $A\in \mathcal{L}(L^2(\R^d))$ is the unique function $\rho_A \in L^1_{\rm loc}(\R^d)$ such that
\[
\forall W \in C^\infty_c(\R^d), \quad \m{Tr}(AW) = \int_{\R^d} \rho_A W.
\]

We denote respectively by $\widehat u$ and $\widecheck{u}$ the Fourier transform and the inverse Fourier transform of a tempered distribution $u \in {\mathscr S}'(\R^d)$. We use the normalization convention for which
\[
\forall \phi \in L^1(\R^d), \quad \widehat{\phi}(\zeta):=\frac{1}{(2\pi)^{d/2}}\int_{\bb R^d} \phi(t)\rme^{-\rmi t \cdot \zeta}\,dt \quad \mbox{and} \quad
\widecheck{\phi}(t) := \frac{1}{(2\pi)^{d/2}}\int_{\bb R^d}\phi(\zeta)\rme^{\rmi t\cdot \zeta}\,d\zeta.
\]
With this normalization convention, the Fourier transform defines a unitary operator on $L^2(\R^d)$.

\subsection{Functional setting}
\label{kineticSec}

In Section \ref{directIntegralsec}, we introduce a natural decomposition of $(x,y)$-translation invariant operators based on partial Fourier transform. In Section~\ref{densitymatrices}, we apply it to the special case of $(x,y)$-translation invariant one-body density matrices.

\subsubsection{Decomposition of $(x,y)$-translation invariant operators}
\label{directIntegralsec}

For $r = (x,y)\in \bb R^2$, we denote by $\tau_r$ the translation operator acting on $L_{\mathrm{loc}}^2(\bb R^3)$ as
$$
\forall u\in L_{\mathrm{loc}}^2(\bb R^3),  \quad (\tau_r u)(\cdot,z) = u(\cdot - r,z) \quad \mbox{ for a.a. } z \in \R.
$$
An operator $A$ on $L^2(\bb R^3)$ is called $(x,y)$-translation invariant if it commutes with $\tau_r$ for all $r \in \R^2$. In order to decompose $(x,y)$-translation invariant operators on $L^2(\R^3)$, we introduce the constant fiber direct integral \cite[Section~XIII.16]{ReeSim4}
$$
L^2\left(\bb R^2;L^2(\bb R)\right) \equiv \int_{\bb R^2}^{\oplus} L^2(\bb R) \,dq
$$
with base $\R^2$, and the unitary operator $\cU : L^2(\bb R^3) \to L^2(\bb R^2;L^2(\bb R))$ defined on the dense subspace ${\mathscr S}(\R^3)$ of $L^2(\bb R^3)$ by
\begin{equation}
(\mathcal{U}\Phi)_q(z):= \frac{1}{2\pi}\int_{\bb R^2}\rme^{-\rmi q\cdot r}\Phi(r,z)\,dr.
\label{unitaryT}
\end{equation}
The unitary $\cU$ is simply the partial Fourier transform along the $x$ and $y$ directions. It has the property that $(x,y)$-translation invariant operators on $L^2(\bb R^3)$ are decomposed by $\cU$: for any $A\in \mathcal{L}(L^2(\bb R^3))$ such that $\tau_rA=A\tau_r$, there exists $A_\bullet \in L^\infty(\R^2;\mathcal{L}(L^2(\bb R)))$ such that for all $u \in L^2(\R^3)$,
$$
(\,\mathcal{U}(Au))_q = A_q(\,\mathcal{U}u)_q \quad \mbox{ for a.a. } q \in \R^2.
$$
Hence we use the following notation for the decomposition of $(x,y)$-translation invariant operator $A$
\[
A=\,\mathcal{U}^{-1}\left( \int_{\bb R^2}^{\oplus }A_{q}\,dq\right)\,\mathcal{U}.
\]
In addition, $\norm{A}_{\mathcal{L}(L^2(\bb R^3))} = \norm[\big]{\norm{A_\bullet}_{\mathcal{L}(L^2(\bb R))}}_{L^{\infty}(\bb R^2)}$. Note that, formally, the kernel of $A$ is related to the kernels of the operators $A_q$ by the formula:
\[
A(r,z;r',z') = \frac{1}{(2\pi)^2} \int_{\R^2} A_q(z,z') \, \rme^{\rmi q (r-r')} \, dq.
\]
In particular, if $A$ is positive and locally trace-class, then for almost all $q \in \R^2$, $A_q$ is locally trace-class. The densities of these operators are functions of the variable $z$ only, and are related by the formula
$$
\rho_A(z) = \frac{1}{(2\pi)^2} \int_{\R^2} \rho_{A_q}(z) \, dq.
$$
Likewise, if $A$ is a (not necessarily bounded) self-adjoint operator such that $\tau_r(A+\rmi)^{-1}=(A+\rmi)^{-1}\tau_r$ for all $r\in \bb R^2$, then $A$ is decomposed by $\,\mathcal{U}$ (see~\cite[Theorem XIII.84 and~XIII.85]{ReeSim4}). In particular, the kinetic energy operator $T=-\frac 12 \Delta$ on $L^2(\R^3)$ is decomposed by $\cU$ as follows:
\begin{equation}
T=\,\mathcal{U}^{-1}\left( \int_{\bb R^2}^{\oplus }T_{q}\,dq\right)\,\mathcal{U} \qquad \mbox{with} \qquad T_{q} :=-\frac{1}{2}\frac{d^2}{dz^2}+\frac{|q|^2}{2}.
\label{kineticOpDecomp}
\end{equation}

\subsubsection{One-body density matrices}
\label{densitymatrices}

In Hartree-Fock and Kohn-Sham models, electronic states are described by one-body density matrices (see e.g. \cite{Coleman, Cances2008, frank2013}).
Recall that for a finite system with $N$ electrons, a density matrix is a trace-class self-adjoint operator $\gamma\in \mathcal{S}(L^2(\bb R^3)) \cap \gS_1(L^2(\R^3))$ satisfying the Pauli principle $0\leq \gamma\leq 1$ and the normalization condition $\m{Tr}(\gamma)= \int_{\R^3} \rho_\gamma = N$. The kinetic energy of $\gamma$ is given by $\m{Tr}(-\frac{1}{2}\Delta \gamma):= \frac 12 \m{Tr}(|\nabla|\gamma|\nabla|)$ (see~\cite{CATTO2001687, Cances2008}).

Let us from now on focus on the reduced Hartree-Fock(rHF) model, \textit{i.e.}~the Hartree-Fock model without exchange terms. In this case, the ground state density matrix of a homogeneous electron gas with density $\rho_0$ can be uniquely defined by a thermodynamic limit argument (relying on the strict convexity of the rHF model with respect to the density). It is given by
\begin{equation} \label{eq:gamma0}
\gamma_0 := \mathds 1_{(-\infty,\epsilon_F]}\left(T\right),
\end{equation}
with the Fermi level
$$
\epsilon_F:= \frac 12 (6\pi^2\rho_0)^{2/3},
$$
which is the chemical potential of the electrons. As discussed in the introduction, this Fermi level will be fixed in the sequel. Although $\gamma_0$ is not trace-class, it is locally trace-class and its density is $\rho_0$ by construction. The operator $\gamma_0$ can be seen as the rHF ground-state density matrix of an infinite, locally neutral system, whose nuclear distribution is a jellium of uniform density $\rho_{\rm nuc}^0=\rho_0$. 

Since $T$ is decomposed by $\cU$, so is $\gamma_0$, and we have
\begin{equation}
\gamma_0 = \,\mathcal{U}^{-1}\left( \int_{\bb R^2}^{\oplus} \gamma_{0,q}\, dq \right)\,\mathcal{U},
\label{defGamma0}
\end{equation}
where $\{\gamma_{0,q}\}_{q\in \bb R^2}$ are orthogonal projectors acting on $L^2(\bb R)$: $$\gamma_{0,q}:=\left\{
\begin{aligned}
\mathds 1_{(-\infty,\epsilon_F]}\left(T_{q}\right)\quad &\text{if } q\in \overline{\mathfrak{B}}_{\epsilon_F}, \\
0 \quad\quad\quad &\text{if } q\in \bb R^2 \setminus \overline{\mathfrak{B}}_{\epsilon_F}.
\end{aligned}
\right.$$
Here and in the sequel, $\mathfrak{B}_{R}:=\left\{q\in \bb R^2 \left| \frac{|q|^2}{2}< R\right.\right\}$ and $\overline{\mathfrak{B}}_{R}:=\left\{q\in \bb R^2 \left| \frac{|q|^2}{2}\leq R\right.\right\} $ respectively denote the open and closed balls of $\R^2$ of radius $\sqrt{2R}$ centered at the origin.

If we consider an $(x,y)$-translation invariant perturbation $\rho_{\rm nuc}=\rho_{\rm nuc}^0+\nu$ of the nuclear distribution, and keep the Fermi level $\epsilon_F>0$ fixed, we expect the perturbed ground state density matrix $\gamma_\nu=\gamma_0+Q_\nu$ to be $(x,y)$-translation invariant as well (see Remark~\ref{rmk:trans_inv} below), and therefore the operators $\gamma_\nu$ and $Q_\nu$ to be decomposed by $\cU$:
\[
\gamma_\nu = \,\mathcal{U}^{-1}\left( \int_{\bb R^2}^{\oplus} \gamma_{\nu,q}\, dq \right)\,\mathcal{U} \quad \mbox{and} \quad
Q_\nu = \,\mathcal{U}^{-1}\left( \int_{\bb R^2}^{\oplus} Q_{\nu,q}\, dq \right)\,\mathcal{U}.
\]
We will see that $Q_\nu$ can be characterized as the unique minimizer of a variational problem consisting in minimizing some renormalized free energy functional.

\begin{remark}[On the translation invariance of the defect]
  \label{rmk:trans_inv}
  The problem we consider here consists in characterizing the perturbed state of minimal energy among the class of $(x,y)$-translation invariant perturbations. An interesting question, left open for future work, is to justify through a thermodynamic limit argument that there is indeed no symmetry-breaking. The proof in~\cite{Cances2008} is only valid for local defects in insulating materials, and should therefore be adapted to account for the nonlocal nature of the defect we consider, and the fact that the reference perfect crystal is metallic.
\end{remark}

\subsection{Renormalized free energy functionals}
\label{defineDefects}

Defects that are $(x,y)$-translation invariant are extended (non-local) defects, and therefore, do not fall into the frameworks of  \cite{frank2013, LewinSabin1} (nor {\it a fortiori} of  \cite{Cances2008} since the homogeneous electron gas is a metal). However, the approach consisting in characterizing the ground states as the minimizers of some renormalized free energy functional can still be used.

In Section \ref{perturbedKineticEnergy}, we define a renormalized kinetic free energy per unit area adapted to $(x,y)$-translation invariant perturbations of the homogeneous electron gas. In Section \ref{potentialSec}, we focus on the potential energy contributions, and define renormalized energies per unit area for $(x,y)$-translation invariant systems, both for Yukawa and Coulomb interactions.

\subsubsection{Renormalized kinetic free energy functional}
\label{perturbedKineticEnergy}

Let us start with a formal (non-rigorous) argument. The kinetic energy densities of the operator $\gamma_0$ and of an operator of the form $\gamma=\gamma_0+Q$ can be defined as
\begin{align*}
&t_{\gamma_0}(\br):= \rho_{T^{1/2}\gamma_0T^{1/2}}(\br) ,  \\
&t_\gamma(\br):=\rho_{T^{1/2}\gamma T^{1/2}}(\br) =t_{\gamma_0}(\br)+t_Q(\br) \quad \mbox{with} \quad t_Q(\br):=\rho_{T^{1/2}Q T^{1/2}}(\br).
\end{align*}
By $(x,y)$-translation invariance, the functions $t_{\gamma_0}$, $t_\gamma$ and $t_Q$ are in fact functions of the transverse variable~$z$ only. Fixing the Fermi level $\epsilon_F>0$, we can therefore define a renormalized kinetic free energy per unit area as
\begin{align*}
\underline{T}_{\rm ren}(Q)&:= \left( \int_\R t_\gamma(z) \, dz - \epsilon_F \int_\R \rho_{\gamma}(z) \, dz \right) - \left( \int_\R t_{\gamma_0}(z) \, dz - \epsilon_F \int_\R \rho_{\gamma_0}(z) \, dz \right)\\
& = \int_\R (t_{Q}(z)-\epsilon_F \rho_{Q}(z) ) \, dz.
\end{align*}
Decomposing by $\cU$ and using the fact that $\gamma_{0,q}$ is an orthogonal projector commuting with $T_q$ and such that
$$
(T_q-\epsilon_F)\gamma_{0,q}= - |T_q-\epsilon_F|\gamma_{0,q}, \quad (T_q-\epsilon_F)(1-\gamma_{0,q})= |T_q-\epsilon_F| (1-\gamma_{0,q}),
$$
we obtain
\begin{align}
\underline{T}_{\rm ren}(Q)
&= \int_\R \left( \rho_{T^{1/2}Q T^{1/2}}(\br) -\epsilon_F \rho_{Q}(z) \right) \, dz   \nonumber \\
& =\int_\R  \left[  \frac{1}{(2\pi)^2} \int_{\R^2} \left( \rho_{T_q^{1/2}Q_qT_q^{1/2}}(z)-\epsilon_F \rho_{Q_q}(z) \right) \, dz \right] dq   \nonumber  \\
&= \frac{1}{(2\pi)^2} \int_{\R^2} \left( \m{Tr}(T_q^{1/2}  Q_qT_q^{1/2})-\epsilon_F \m{Tr}(Q_q) \right) dq   \nonumber  \\
&=  \frac{1}{(2\pi)^2} \int_{\R^2} \m{Tr}\left((T_q-\epsilon_F)  Q_q\right) dq   \nonumber  \\
&= \frac{1}{(2\pi)^2} \int_{\R^2} \m{Tr}\left[(T_q-\epsilon_F)  (\gamma_{0,q}+(1-\gamma_{0,q})) Q_q\right] dq\nonumber  \\
&= \frac{1}{(2\pi)^2} \int_{\R^2} \m{Tr}\left(|T_q-\epsilon_F|^{1/2} (Q_q^{++}-Q_q^{--})  |T_q-\epsilon_F|^{1/2} \right) dq,
\label{eq:Tren}
\end{align}
where
\[
Q_q^{++} :=  (1-\gamma_{0,q}) Q_q (1-\gamma_{0,q}) \ge 0 \quad \mbox{and}  \quad Q_q^{--} :=  \gamma_{0,q} Q_q \gamma_{0,q} \le 0.
\]
It follows that the integrand in the right-hand side of~\eqref{eq:Tren} is non-negative. We also observe that
\[
0 \le \gamma_0+Q \le 1 \quad \Leftrightarrow \quad \left( -\gamma_{0,q} \le Q_q \le 1-\gamma_{0,q} \quad \mbox{a.e.} \right) \quad \Leftrightarrow \quad \left( Q_q^2 \le Q_q^{++}-Q_q^{--} \quad \mbox{a.e.} \right),
\]
so that
\begin{align*}
\int_{\R^2}  \left\| |T_q-\epsilon_F|^{1/2}Q_q \right\|_{\mathfrak{S}_2(L^2(\bb R))}^2 \, dq &= \int_{\R^2}  \m{Tr}\left(|T_q-\epsilon_F|^{1/2} Q_q^2 |T_q-\epsilon_F|^{1/2} \right) dq \\
& \le \int_{\R^2}  \m{Tr}\left(|T_q-\epsilon_F|^{1/2} (Q_q^{++}-Q_q^{--})  |T_q-\epsilon_F|^{1/2}\right) dq .
\end{align*}

Reasoning as in \cite{frank2013, LewinSabin1}, the above formal manipulations lead us to introduce
\begin{itemize}
\item the functional space
\begin{align*}
\mathcal{X}_{q} :=  \left\{ Q_{q} \in \mathcal{S}(L^2(\bb R))\left| \,|T_q-\epsilon_F|^{1/2}Q_{q}\in\mathfrak{S}_2(L^2(\bb R)), |T_q-\epsilon_F|^{1/2}Q_{q}^{\pm\pm}|T_q-\epsilon_F|^{1/2}\in \mathfrak{S}_1(L^2(\bb R))\right. \right\},
\end{align*}
which, equipped with the norm
\begin{align*}
\norm{Q_{q}}_{ \mathcal{X}_{q}}:=\norm{Q_{q}}_{\mathcal{L}(L^2(\bb R))} + \bnorm{|T_q-\epsilon_F|^{1/2}Q_{q}}_{ \mathfrak{S}_2\left(L^2(\bb R)\right)}&+ \sum_{\alpha \in \{+,-\}} \bnorm{ |T_q-\epsilon_F|^{1/2}Q_{q}^{\alpha\alpha}|T_q-\epsilon_F|^{1/2}}_{\mathfrak{S}_1(L^2(\bb R))},
\end{align*}
is a Banach space;
\item the convex set $\displaystyle \mathcal{K}_{q}:= \Big\{ Q_{q} \in \mathcal{X}_{q} \, \Big| \, -\gamma_{0,q} \leq Q_{q}\leq 1-\gamma_{0,q}\Big\}$;
\item the linear form
 \begin{equation}
\utr\left(\left(T-\epsilon_F\right)Q\right) := \frac{1}{(2\pi)^2} \int_{\R^2} \m{Tr}\left(|T_q-\epsilon_F|^{1/2} (Q_q^{++}-Q_q^{--})  |T_q-\epsilon_F|^{1/2}\right) dq,
\label{defkin}
\end{equation}
which is well-defined with values in $[0,+\infty]$ whenever $\R^2 \ni q \mapsto Q_q \in {\cal S}(L^2(\R))$ is measurable with $Q_q \in \mathcal{K}_{q}$ for almost all $q \in \R^2$.
\end{itemize}

\begin{definition}(Density matrices with finite renormalized kinetic free energy per unit area) An $(x,y)$-translation invariant density matrix
\[
\gamma = \gamma_0+Q
\]
has a finite renormalized kinetic free energy per unit area if $Q \in \cK$, where
\begin{equation}
\hspace{-2mm}
\mathcal{K}:= \left\{\left. Q = \mathcal{U}^{-1}\left(\int_{\bb R^2}^{\oplus}Q_{q}\, dq \right)\mathcal{U}\, \right| \, q\mapsto Q_{q}\in L^{\infty}\left(\bb R^2;\mathcal{S}(L^2(\bb R))\right),Q_q\in \mathcal{K}_q\,\,\mathrm{a.e.}, \utr\left((T-\epsilon_F)Q\right) <\infty \right\}.
\end{equation}
\end{definition}

It is not obvious {\it a priori} that operators in $\mathcal{K}$, which are not trace-class, nor even compact, have densities. However, it is in fact possible to define the density $\rho_Q$ of any state $Q\in \mathcal{K}$, which will be useful to define renormalized  rHF free energy functionals involving Yukawa or Coulomb interactions (see Section~\ref{mimimizerSec}). The precise result, whose proof relies on Lieb--Thirring type inequalities, can be read in Section~\ref{propoDefofDensity}.

\subsubsection{Coulomb and Yukawa energy functionals}
\label{potentialSec}

The extended defect being $(x,y)$-translation invariant, the renormalized total charge density
$$
\rho := \rho_{\gamma_0+Q}-(\rho_{\rm nuc}^0+\nu) = \rho_Q-\nu
$$
is a function of the variable $z$ only. The Coulomb potential generated by this density is therefore obtained by solving the 1D Poisson equation $-v_{\rho,0}''=2\rho$, which also reads in Fourier representation $|k|^2 \widehat v_{\rho,0}(k) = 2 \widehat \rho(k)$. Formally, the Coulomb energy of $\rho$ per unit area is thus given by
$$
\int_\R \rho(z) \,  v_{\rho,0}(z) \, dz =  \int_\R \overline{\widehat \rho(k)}  \, \widehat v_{\rho,0}(k) = 2  \int_{\bb R}\frac{|\widehat \rho(k)|^2}{|k|^2}\,dk.
$$
This motivates the following definition of the 1D Coulomb space
\begin{equation}
\ud{\mathcal{C}} :=\left\{ \rho \in\mathscr{S}'(\bb R)\,\left| \, \widehat{\rho} \in L_{\rm{loc}}^1(\bb R), \frac{\widehat \rho (k)}{|k|} \in L^2(\bb R)\right. \right\},
\end{equation}
which, endowed with the inner product
\begin{equation}
\ud D(\rho_1,\rho_2) :=2 \int_{\bb R}\frac{\overline{\widehat \rho_1(k)}\widehat{\rho_2}(k)}{|k|^2}\,dk,
\end{equation}
is a Hilbert space. The quantity $\frac 12 \ud D(\rho,\rho) \in [0,+\infty]$ represents the Coulomb energy per unit area of the $(x,y)$-translation invariant renormalized charge density $\rho$.

\begin{remark}
Note that charge densities in $\ud{\mathcal{C}}$ are neutral in some weak sense. In particular, if $\rho \in\ud{\mathcal{C}}\bigcap L^1(\bb R)$, then $\int_\R \rho = (2\pi)^{1/2} \widehat{\rho}(0)=0$ since the function $k \mapsto \frac{|\widehat \rho(k)|^2}{|k|^2}$ has to be integrable in the vicinity of $0$.
\label{neutralityofCharge}
\end{remark}

Likewise, the Yukawa potential of parameter $m > 0$ generated by the renormalized charge density $\rho$ of the extended defect is obtained by solving the 1D Yukawa equation $-v_{\rho,m}''+m^2v_{\rho,m}=2\rho$, and its Yukawa energy per unit area is formally given by
$$
 \int_{\bb R}\frac{{|\widehat{\rho}(k)|^2}}{|k|^2+m^2}\,dk.
$$
This leads us to introduce the Yukawa space of parameter $m$
\begin{equation}
\ud{\mathcal{C}_m} :=\left\{ \rho\in\mathscr{S}'(\bb R)\left| \, \widehat{\rho} \in L_{\rm{loc}}^1(\bb R), \frac{\widehat \rho (k)}{\sqrt{|k|^2+m^2}} \in L^2(\bb R) \right.\right\},
\end{equation}
endowed with the inner product
\begin{equation} \label{eq:Yukawa}
\ud{D_m}(\rho_1,\rho_2) :=2 \int_{\bb R}\frac{\overline{\widehat \rho_1(k)}\widehat{\rho_2}(k)}{|k|^2+m^2}\,dk.
\end{equation}
We will use in the sequel the consistent notation $\ud{D_0}:= \ud{D}$ for Coulomb interactions.

\begin{remark}
For any $m>0$, the Yukawa space $\ud{\mathcal{C}_m}$ actually coincides with the Sobolev space $H^{-1}(\bb R)$ and the norms $\|\cdot\|_{H^{-1}}$ and $\ud{D_m}(\cdot,\cdot)^{1/2}$ are equivalent. However we will consider in the following $m$ as a parameter and will pass to limit $m\to 0$ to make a connection with the Coulomb interaction. We therefore prefer to keep the notation $\ud{\mathcal{C}_m}$.
\end{remark}

\begin{remark}\label{rem:finiteYukawa}
  Proposition~\ref{prop:densityK} implies that the density associated to any $Q \in \cK$ has a finite renormalized Yukawa energy per unit area (using the embedding $L^p(\R)+L^2(\R) \hookrightarrow H^{-1}(\R)$ for $1 < p < 5/3$). On the other hand, its renormalized Coulomb energy can be either finite or infinite.
\end{remark}

\subsection{Formulation and mathematical properties of the model}
\label{mimimizerSec}

We now consider an $(x,y)$-translation invariant nuclear defect $\nu$, typically a sharp trench
$$
\nu=-\rho_{\rm nuc}^0 \mathds 1_{[-a,a]}(z)
$$
for some $a > 0$, where $\mathds 1_{[-a,a]}: \R \to \R$ is the characteristic function of the range $[-a,a]$. Mollified versions of this indicator function can also be considered.

Based on the content of Section~\ref{defineDefects}, we can define the renormalized free energy per unit area associated with a trial density matrix $\gamma=\gamma_0+Q$ by
\begin{equation}
\mathcal{E}_{\nu,m}(Q) = \utr\left(\left(T-\epsilon_F\right)Q\right)+ \frac{1}{2}\ud{D_m} (\rho_{Q} - \nu,\rho_Q-\nu),
\label{energyfuc}
\end{equation}
where the renormalized kinetic free energy per unit area is given by (\ref{defkin}), and when the Yukawa ($m >0$) or Coulomb ($m=0$) potential energy functional  per unit area is given by~\eqref{eq:Yukawa}. For any $Q \in \cK$, the right-hand side of~\eqref{energyfuc} is the sum of two non-negative terms. The former is always finite. The latter is always finite for Yukawa interactions as soon as $\nu \in H^{-1}(\R)$, but can {\it a priori} be infinite for Coulomb interactions. For this reason, we introduce the set
\[
\mathcal{F}_\nu := \left\{Q\in \mathcal{K} \mid \rho_Q - \nu \in \ud{\mathcal{C}}\right\}.
\]
Recall that this set may be empty (see Remark~\ref{rem:finiteYukawa}). We can then state the following result.

\begin{theorem}[Existence of minimizers] $\,$
\begin{enumerate}[(1)]
\item Yukawa interaction: for any $\nu \in H^{-1}(\R)$, the minimization problem
\begin{equation}
\boxed{I_{\nu,m} = \inf\{\mathcal{E}_{\nu,m}(Q), Q \in \mathcal{K}\}}
\label{minimizationpbyoka}
\end{equation}
has a minimizer $Q_{\nu,m}$ and all the minimizers share the same density $\rho_{\nu,m}$.
\item Coulomb interaction: for any $\nu \in L^1(\R)$ such that $|\cdot|\nu(\cdot) \in L^1(\R)$, the set ${\mathcal F}_\nu$ is non-empty, the minimization problem
\begin{equation}
\boxed{I_{\nu,0} = \inf\{\mathcal{E}_{\nu,0}(Q), Q \in \mathcal{F}_{\nu}\}}
\label{minimizationpb}
\end{equation}
has a minimizer $Q_{\nu,0}$, and all the minimizers share the same density $\rho_{\nu,0}$.
\item For any $\nu \in H^{-1}(\R)$, the function $(0,+\infty) \ni m \mapsto I_{\nu,m}\in \bb R_+$ is continuous, non-increasing,
\[
\lim_{m\to 0}I_{\nu,m} \leq I_{\nu,0} \quad \mbox{and} \quad \lim_{m\to +\infty}I_{\nu,m} = 0,
\]
with the convention that $I_{\nu,0}=+\infty$ if ${\mathcal F}_\nu$ is empty. When $\nu \in L^1(\R)$ and $|\cdot|\nu(\cdot) \in L^1(\R)$,
\[
\lim_{m\to 0}I_{\nu,m} = I_{\nu,0}.
\]

Moreover, if $\nu \in L^1(\R)$ and $|\cdot|\nu(\cdot) \in L^1(\R)$, there exists a sequence $(m_k)_{k \in \N}$ of positive real numbers decreasing to zero, and a sequence $(Q_{\nu,m_k})_{k \in \N}$ of elements of $\mathcal K$ such that, for each $k \in \N$, $Q_{\nu,m_k}$ is a minimizer of \eqref{minimizationpbyoka} for $m=m_k$, converging to a minimizer $Q_{\nu,0}$ of~\eqref{minimizationpb} in the following sense:
\begin{align}
& \cU \,\displaystyle Q_{\nu,m_k} \,\cU^{-1} \mathop{\longrightarrow}_{k \to \infty} \cU\, Q_{\nu,0} \,\cU^{-1} \quad \mbox{for the weak-$\ast$ topology of } L^\infty(\R^2;{\cal S}(L^2(\R))); \label{eq:cv_Q_mk_1} \\
& \cU\, |T-\epsilon_F|^{1/2} Q_{\nu,m_k}\, \cU^{-1} \mathop{\longrightarrow}_{k \to \infty}  \cU \,|T-\epsilon_F|^{1/2}Q_{\nu,0} \,\cU^{-1} \quad \mbox{weakly in } L^2(\R^2;\gS_2(L^2(\R))). \label{eq:cv_Q_mk_2}
\end{align}
\end{enumerate}
\label{thmexistence}
\end{theorem}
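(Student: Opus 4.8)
The plan is to run the direct method of the calculus of variations in the fibered representation, exploiting that $\mathcal E_{\nu,m}$ is a sum of two nonnegative terms: the renormalized kinetic functional $\utr((T-\epsilon_F)Q)$, which is affine on the convex set $\cK$ and weakly lower semicontinuous, and $\frac12\ud{D_m}(\rho_Q-\nu,\rho_Q-\nu)$, the square of a Hilbert-space seminorm of the \emph{linear} quantity $\rho_Q-\nu$, hence convex and weakly lower semicontinuous. The common mechanism is: from a bound $\mathcal E_{\nu,m}(Q_n)\le C$ one reads off, via the inequality $Q_q^2\le Q_q^{++}-Q_q^{--}$ and definition~\eqref{defkin}, the uniform bounds $\|Q_{n,q}\|_{\mathcal L(L^2(\R))}\le1$, $\|\,|T_\bullet-\epsilon_F|^{1/2}Q_{n,\bullet}\|_{L^2(\R^2;\gS_2(L^2(\R)))}^2\le\utr((T-\epsilon_F)Q_n)\le C$, while the interaction bound controls $\rho_{Q_n}-\nu$ in $H^{-1}(\R)$ (when $m>0$, since $\ud{\mathcal{C}_m}=H^{-1}$ with equivalent norms) or in $\ud{\mathcal{C}}$ (when $m=0$). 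These give weak-$\ast$ compactness in $L^\infty(\R^2;\mathcal S(L^2(\R)))$ (the dual of $L^1(\R^2;\gS_1(L^2(\R)))$) and weak compactness in the Hilbert space $L^2(\R^2;\gS_2(L^2(\R)))$.

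\textbf{Item (1).} For a minimizing sequence $(Q_n)$ of $I_{\nu,m}$ (finite since $I_{\nu,m}\le\mathcal E_{\nu,m}(0)=\frac12\ud{D_m}(\nu,\nu)<\infty$ for $\nu\in H^{-1}(\R)$), I extract as above $\cU Q_n\cU^{-1}\rightharpoonup\cU Q\cU^{-1}$ weak-$\ast$ and $\cU|T-\epsilon_F|^{1/2}Q_n\cU^{-1}\rightharpoonup\cU|T-\epsilon_F|^{1/2}Q\cU^{-1}$ weakly. The fiberwise constraint $-\gamma_{0,q}\le Q_{n,q}\le1-\gamma_{0,q}$ passes to the weak-$\ast$ limit; Proposition~\ref{prop:densityK} and the continuity of $Q\mapsto\rho_Q$ it yields give $\rho_{Q_n}\rightharpoonup\rho_Q$ weakly in $H^{-1}(\R)$. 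Writing the fiber trace in~\eqref{defkin} as a supremum of weakly continuous functionals of $Q_q$ and applying Fatou in $q$ gives lower semicontinuity of $\utr((T-\epsilon_F)\cdot)$; that of $\frac12\ud{D_m}(\cdot-\nu,\cdot-\nu)$ is lower semicontinuity of a squared Hilbert norm. Hence $Q\in\cK$ and $\mathcal E_{\nu,m}(Q)\le\liminf_n\mathcal E_{\nu,m}(Q_n)=I_{\nu,m}$, so $Q$ is a minimizer. Uniqueness of the density is the usual convexity argument: $\mathcal E_{\nu,m}$ is convex on the convex set $\cK$ and strictly convex in $\rho_Q$, so if $Q_1,Q_2$ both minimize, then so does $\tfrac12(Q_1+Q_2)$, forcing $\rho_{Q_1}=\rho_{Q_2}$.

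\textbf{Item (2).} The genuinely new ingredient is that $\mathcal F_\nu\neq\emptyset$. Since $\nu\in L^1(\R)$ and $|\cdot|\nu(\cdot)\in L^1(\R)$, $\widehat\nu$ is $C^1$ with $\widehat\nu(k)=\widehat\nu(0)+O(|k|)$ near $k=0$. I would construct an explicit $\widetilde Q\in\cK$ whose density is a fixed smooth, integrable screening profile of total mass $\int_\R\nu$, so that $\widehat{\rho_{\widetilde Q}}$ is $C^1$ near $0$ with $\widehat{\rho_{\widetilde Q}}(0)=(2\pi)^{-1/2}\int_\R\nu=\widehat\nu(0)$; this relies on the metallic character of $\gamma_0$ (no spectral gap), which lets one accommodate an arbitrary amount of screening charge. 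Then $\widehat{\rho_{\widetilde Q}}(k)-\widehat\nu(k)=O(|k|)$ near $0$, hence $(\widehat{\rho_{\widetilde Q}}-\widehat\nu)/|\cdot|\in L^2(\R)$ and $\widetilde Q\in\mathcal F_\nu$. With $\mathcal F_\nu$ non-empty, existence follows exactly as in item (1), the extra point being that the weak-$\ud{\mathcal{C}}$ limit of $\rho_{Q_n}-\nu$ is $\rho_Q-\nu$ (both are the distributional limit), so $Q\in\mathcal F_\nu$; uniqueness of the density is again by convexity.

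\textbf{Item (3).} For fixed $Q$, $m\mapsto\ud{D_m}(\rho_Q-\nu,\rho_Q-\nu)=2\int_\R|\widehat{(\rho_Q-\nu)}(k)|^2/(|k|^2+m^2)\,dk$ is nonincreasing, so $I_{\nu,m}$ is nonincreasing and $\ge 0$; testing with $Q=0$ and dominated convergence ($|\widehat\nu|^2/(|k|^2+m^2)\le|\widehat\nu|^2/(1+|k|^2)\in L^1$ for $m\ge1$) gives $\lim_{m\to\infty}I_{\nu,m}=0$. Continuity on $(0,+\infty)$ follows softly from the explicit kernel: using the minimizer $Q_{\nu,m}$, one estimates $I_{\nu,m'}-I_{\nu,m}$ by the discrepancy $\tfrac12(\ud{D_{m'}}-\ud{D_m})(\rho_{\nu,m}-\nu,\cdot)$, which tends to $0$ as $m'\to m$ by monotone/dominated convergence (using finiteness of $\ud{D_m}(\rho_{\nu,m}-\nu,\cdot)$ and, for $m<m'$, $0\le\ud{D_m}-\ud{D_{m'}}\le\tfrac{m'^2-m^2}{m^2}\ud{D_{m'}}$), together with the symmetric estimate swapping $m\leftrightarrow m'$; this forces both one-sided limits to equal $I_{\nu,m}$. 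Testing with any $Q\in\mathcal F_\nu$ and using $\ud{D_m}\uparrow\ud D$ as $m\downarrow0$ gives $\lim_{m\to0}I_{\nu,m}=\sup_{m>0}I_{\nu,m}\le\mathcal E_{\nu,0}(Q)$, hence $\le I_{\nu,0}$ (with $I_{\nu,0}=+\infty$ if $\mathcal F_\nu=\emptyset$). The reverse inequality when $\nu\in L^1(\R)$, $|\cdot|\nu(\cdot)\in L^1(\R)$ is what I expect to be the main obstacle. Taking $m_k\downarrow0$ and minimizers $Q_k=Q_{\nu,m_k}$, we have $I_{\nu,m_k}\le I_{\nu,0}<\infty$, so the kinetic part is bounded and $g_k:=\rho_{Q_k}-\nu$ is bounded in $H^{-1}(\R)$ (since $\ud{D_{m_k}}(g_k,g_k)\gtrsim\|g_k\|_{H^{-1}}^2$ for $m_k\le1$); extract the weak limits~\eqref{eq:cv_Q_mk_1}--\eqref{eq:cv_Q_mk_2} with limit $Q_\ast\in\cK$ and $\rho_{Q_k}\rightharpoonup\rho_{Q_\ast}$ in $H^{-1}(\R)$, so $g_k\rightharpoonup g_\ast:=\rho_{Q_\ast}-\nu$. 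For each $\delta>0$, $\rho\mapsto\int_{|k|\ge\delta}|\widehat\rho(k)|^2/|k|^2\,dk$ is a bounded nonnegative quadratic form on $H^{-1}(\R)$ (since $|k|^{-2}\le(1+\delta^{-2})(1+|k|^2)^{-1}$ there), hence weakly lower semicontinuous; combined with $\int_{|k|\ge\delta}|\widehat{g_k}|^2/|k|^2\le\tfrac{\delta^2+m_k^2}{\delta^2}\cdot\tfrac12\ud{D_{m_k}}(g_k,g_k)$, letting $k\to\infty$ then $\delta\to0$ yields $g_\ast\in\ud{\mathcal{C}}$ and $\tfrac12\ud D(g_\ast,g_\ast)\le\liminf_k\tfrac12\ud{D_{m_k}}(g_k,g_k)$. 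With weak lower semicontinuity of the kinetic term and $\liminf a_k+\liminf b_k\le\liminf(a_k+b_k)$ this gives $\mathcal E_{\nu,0}(Q_\ast)\le\liminf_k I_{\nu,m_k}=\lim_{m\to0}I_{\nu,m}$, so (since $Q_\ast\in\mathcal F_\nu$) $I_{\nu,0}\le\lim_{m\to0}I_{\nu,m}$, forcing equality, and $Q_\ast$ is a Coulomb minimizer realizing~\eqref{eq:cv_Q_mk_1}--\eqref{eq:cv_Q_mk_2}. The two genuine difficulties in the whole argument are the continuity of $Q\mapsto\rho_Q$ along weak limits (Proposition~\ref{prop:densityK}) and the explicit construction of a screening state in $\mathcal F_\nu$ in item (2).
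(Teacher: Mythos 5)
Your overall strategy coincides with the paper's: direct method in the fibered representation, weak-$\ast$ compactness in $L^\infty(\R^2;\mathcal S(L^2(\R)))$ plus weak compactness of $q\mapsto|T_q-\epsilon_F|^{1/2}Q_{n,q}$ in $L^2(\R^2;\gS_2)$, lower semicontinuity of the two nonnegative terms, strict convexity in the density for uniqueness, an explicit screening state with matched total charge for the non-emptiness of $\mathcal F_\nu$, and monotonicity of $m\mapsto \ud{D_m}$ for item (3). One genuine variation: for the $m\to0$ limit you prove $\rho_{Q_\ast}-\nu\in\ud{\mathcal C}$ and the lower bound on the Coulomb energy by truncating the quadratic form to $\{|k|\ge\delta\}$, where it is bounded on $H^{-1}(\R)$, and then letting $\delta\to0$; the paper instead extracts a weak $L^2$ limit of $(\widehat{\rho}_{Q_{\nu,m_k}}-\widehat\nu)/\sqrt{|\cdot|^2+m_k^2}$ and identifies it distributionally. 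Both routes are sound; yours avoids the identification step for the weighted Fourier transforms at the price of an extra limiting parameter.

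There is, however, one step you state as if it were immediate and which is in fact the hardest point of the whole existence proof. You write that ``Proposition~\ref{prop:densityK} and the continuity of $Q\mapsto\rho_Q$ it yields give $\rho_{Q_n}\rightharpoonup\rho_Q$ weakly.'' Proposition~\ref{prop:densityK} only gives the \emph{a priori} bound $\|\rho_{Q_n}\|_{L^2+L^p}\lesssim\utr((T-\epsilon_F)Q_n)$, hence weak compactness of the densities; it does not identify the weak limit of $\rho_{Q_n}$ with the density of the weak limit $\overline Q$. That identification is Lemma~\ref{consistencyOfDensity} in the paper, and it cannot be dispatched by soft duality: to show $\int_{\R^2}\m{Tr}(wQ_{n,q})\,dq\to\int_{\R^2}\m{Tr}(w\overline Q_q)\,dq$ for a test function $w$, one must decompose the constant operator-valued map $q\mapsto w$ into pieces lying in the preduals $L^1(\R^2;\gS_1)$ and $L^2(\R^2;\gS_2)$ against which the convergences \eqref{faible1_bis}--\eqref{fabile3} hold, and $q\mapsto w$ itself belongs to neither (there are obstructions both at large $|q|$ and near the Fermi surface, where $|T_q-\epsilon_F|^{-1/2}$ degenerates). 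The paper handles this by splitting $\R^2$ into $\overline{\mathfrak B}_{\epsilon_F+c}$, an intermediate annulus, and the exterior of a large ball, and by further decomposing $w$ on the first region using the spectral projectors $\Pi_{1,q},\Pi_{2,q}$ together with a commutator identity (Lemma~\ref{lem:decomposition_Pi2_w_Pi1}). You do flag ``continuity of $Q\mapsto\rho_Q$ along weak limits'' as one of the two genuine difficulties, which is the right instinct, but attributing it to Proposition~\ref{prop:densityK} is incorrect, and without this lemma the passage from \eqref{Dbound} onward (and likewise the identification of $\rho_{Q_{\nu,0}}$ in item (3)) does not close. The rest of the argument, including the construction in item (2) (matching $\widehat{\rho_{\widetilde Q}}(0)=\widehat\nu(0)$ and using $C^1$ regularity of the Fourier transforms near $k=0$), matches the paper.
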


The proof of Theorem~\ref{thmexistence} can be read in Section \ref{ProofThmExistence}.

\medskip

In the Yukawa case ($m>0$), we are able to characterize the minimizers of~\eqref{minimizationpbyoka}. By Theorem~\ref{thmexistence}, all the minimizers of the problem~\eqref{minimizationpbyoka} share the same density $\rho_{\nu,m} \in \cK$. In view of Proposition~\ref{prop:densityK}, the function $\rho_{\nu,m}$ is in $L^p(\R)+L^2(\R)$ for some $1 < p < 5/3$, thus in $H^{-1}(\R)$ (see Remark~\ref{rem:finiteYukawa}). The Yukawa potential
\begin{equation}
V_{\nu,m}:= \left( -\frac{d^2}{dz^2}+m^2 \right)^{-1} (\rho_{\nu,m}-\nu) = \frac{\rme^{-m|\cdot|}}{m}\star(\rho_{\nu,m}-\nu),
\label{yukawaPotential}
\end{equation}
is therefore well-defined in $H^1(\R)$. In particular, $V_{\nu,m}$ is a continuous function vanishing at infinity. The following result shows that this is sufficient to ensure the uniqueness of the ground-state density matrix in~\eqref{minimizationpbyoka}.

\begin{theorem}[Uniqueness and characterization of the minimizer for the Yukawa case] 
 Let $\nu\in H^{-1}(\bb R)$ and $m > 0$. The minimizer $Q_{\nu,m}$ of the problem~\eqref{minimizationpbyoka} is unique and is the unique solution in $\cK$ to the self-consistent equations:
\begin{equation}
\left\{
\begin{aligned}
\gamma_{\nu,m} &:= \mathds 1_{(-\infty,\epsilon_F]}(T +V_{\nu,m}),\\
V_{\nu,m} &:= \frac{\mathrm{e}^{-m|\cdot|}}{m} \star (\rho_{Q_{\nu,m}}- \nu), \\
Q_{\nu,m} &:= \gamma_{\nu,m} -\gamma_0.
\end{aligned}\right.
\label{self-consist-eq}
\end{equation}
\label{yukawaMinimizerthm}
\end{theorem}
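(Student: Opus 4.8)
The plan is to show that the map $Q \mapsto \mathcal{E}_{\nu,m}(Q)$ is strictly convex on $\cK$, so that the minimizer is unique, and then to identify its Euler--Lagrange equation with the self-consistent system~\eqref{self-consist-eq}. For strict convexity, note that $Q \mapsto \utr((T-\epsilon_F)Q)$ is linear on $\cK$ by~\eqref{defkin} — more precisely, it is affine once we remember that the convex constraint $Q_q\in\cK_q$ is what makes this expression nonnegative — and that $Q \mapsto \frac12\ud{D_m}(\rho_Q-\nu,\rho_Q-\nu)$ is convex in $\rho_Q$ since $\ud{D_m}$ is an inner product. The only thing that could fail is strict convexity, and the standard rHF argument (as in~\cite{Cances2008, frank2013}) is: if $Q_0$ and $Q_1$ are two minimizers, then by convexity of $\cK$ and of $\mathcal{E}_{\nu,m}$ the midpoint $Q_{1/2}=(Q_0+Q_1)/2$ is also a minimizer; but $\rho_{Q_0}=\rho_{Q_1}=\rho_{\nu,m}$ by Theorem~\ref{thmexistence}, so the potential term is the same for $Q_0, Q_1, Q_{1/2}$, hence the kinetic terms must also coincide, $\utr((T-\epsilon_F)Q_0)=\utr((T-\epsilon_F)Q_1)=\utr((T-\epsilon_F)Q_{1/2})$. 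Working fiber by fiber and using the block decomposition $\utr((T-\epsilon_F)Q) = \frac{1}{(2\pi)^2}\int \m{Tr}(|T_q-\epsilon_F|^{1/2}(Q_q^{++}-Q_q^{--})|T_q-\epsilon_F|^{1/2})\,dq$ together with the inequality $Q_q^2 \le Q_q^{++}-Q_q^{--}$ derived in Section~\ref{perturbedKineticEnergy}, one shows that equality in the convexity of the kinetic term forces the off-diagonal blocks $Q_q^{+-}$ to vanish and the diagonal blocks to be related in a rigid way; combined with the fact that both operators reproduce the same self-consistent Hamiltonian (see below), this will give $Q_0=Q_1$.

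The identification with~\eqref{self-consist-eq} proceeds by a first-order optimality argument. Since $V_{\nu,m} = \frac{\rme^{-m|\cdot|}}{m}\star(\rho_{\nu,m}-\nu)$ is, by the discussion preceding the theorem, a \emph{bounded continuous potential vanishing at infinity} (it lies in $H^1(\R)\hookrightarrow L^\infty(\R)$), the perturbed operator $H_{\nu,m,q} := T_q + V_{\nu,m}$ is a bounded self-adjoint perturbation of $T_q$ on each fiber, so its spectrum below $\epsilon_F$ consists of the essential spectrum $[|q|^2/2,+\infty)$ shifted by nothing at the bottom plus possibly eigenvalues; crucially, because $V_{\nu,m}\to 0$ at infinity, $\epsilon_F$ is not an eigenvalue of $H_{\nu,m,q}$ for a.e. $q$ (here one needs an absence-of-embedded/threshold-eigenvalue argument, or the observation that the set of $q$ for which $\epsilon_F$ is an eigenvalue has measure zero). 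Then $\gamma_{\nu,m,q} := \mathds 1_{(-\infty,\epsilon_F]}(H_{\nu,m,q})$ is a well-defined spectral projector, and one checks $\gamma_{\nu,m} := \cU^{-1}(\int^\oplus \gamma_{\nu,m,q}\,dq)\cU$ satisfies $Q := \gamma_{\nu,m}-\gamma_0 \in \cK$ using resolvent expansions and the Kato--Seiler--Simon / Lieb--Thirring estimates already invoked in Section~\ref{perturbedKineticEnergy} (the difference of two spectral projectors differing by a nice potential is controlled in the relevant Schatten norms fiber-wise, with $L^2$-in-$q$ integrability). The key algebraic identity is the standard one: for any admissible $Q' \in \cK$ with density $\rho_{Q'}$,
\begin{equation*}
\mathcal{E}_{\nu,m}(Q') - \mathcal{E}_{\nu,m}(Q) = \utr\bigl((T+V_{\nu,m}-\epsilon_F)(Q'-Q)\bigr) + \tfrac12 \ud{D_m}(\rho_{Q'}-\rho_{\nu,m},\rho_{Q'}-\rho_{\nu,m}),
\end{equation*}
where the linear term is $\ge 0$ because $Q = \gamma_{\nu,m}-\gamma_0$ makes $T+V_{\nu,m}-\epsilon_F$ "diagonal with the right sign" with respect to $\gamma_{\nu,m}$ — precisely, writing $P = \gamma_{\nu,m,q}$ one has $(H_{\nu,m,q}-\epsilon_F)$ nonpositive on $\mathrm{Ran}\,P$ and nonnegative on $\mathrm{Ran}(1-P)$, and the constraint $-\gamma_{0,q}\le Q_q' \le 1-\gamma_{0,q}$ rearranged relative to $P$ forces $\utr((H_{\nu,m}-\epsilon_F)(Q'-Q))\ge 0$, vanishing iff $Q'=Q$. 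This simultaneously proves that $Q=\gamma_{\nu,m}-\gamma_0$ is \emph{a} minimizer, that it is the \emph{unique} one, and that any minimizer solves~\eqref{self-consist-eq}; conversely any solution of~\eqref{self-consist-eq} is a minimizer by the same identity read backwards.

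The main obstacle I anticipate is \emph{making the linear-term manipulation rigorous in the non-trace-class setting}. The operators $Q', Q \in \cK$ are not trace-class, not even compact, so $\utr((T+V_{\nu,m}-\epsilon_F)(Q'-Q))$ is not literally a trace; it must be interpreted fiber-by-fiber via the block decomposition of Section~\ref{perturbedKineticEnergy}, and one must verify that adding the bounded potential $V_{\nu,m}$ preserves membership in $\cK$ and that all the rearrangements (cyclicity of the fiber traces, splitting into $++$, $--$, $+-$ blocks now \emph{relative to $\gamma_{\nu,m,q}$ rather than $\gamma_{0,q}$}) are justified — the subtlety being that the "$\pm\pm$" decomposition in the definition of $\cK$ is with respect to $\gamma_{0,q}$, whereas the natural decomposition for the optimality argument is with respect to $\gamma_{\nu,m,q}$, and one needs that $\gamma_{\nu,m,q}-\gamma_{0,q}$ is sufficiently small (Hilbert--Schmidt on each fiber, with good $q$-integrability) for these two decompositions to be interchangeable up to trace-class errors. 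A secondary technical point is the a.e.-in-$q$ absence of the eigenvalue $\epsilon_F$ for $H_{\nu,m,q}$, needed so that $\gamma_{\nu,m,q}$ is unambiguous and so that there is no "fractional occupation" freedom at the Fermi level; this should follow from analyticity of $q\mapsto H_{\nu,m,q}$ (it depends on $q$ only through the additive constant $|q|^2/2$) combined with a unique-continuation or a simple counting argument, exactly as in~\cite{frank2013}. Once these two points are settled, the convexity/optimality scheme yields the statement directly, and I would lean on Theorem~\ref{thmexistence} (shared density of all minimizers) and Proposition~\ref{prop:densityK} (so that $\rho_{\nu,m}\in H^{-1}$ and $V_{\nu,m}\in H^1$) throughout.
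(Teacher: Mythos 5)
Your overall strategy coincides with the paper's: all minimizers share the density $\rho_{\nu,m}$ (Theorem~\ref{thmexistence}), hence minimize the linear functional $F(Q)=\utr((T-\epsilon_F)Q)+\int_\R V_{\nu,m}\rho_Q$; the candidate $\gamma_{\nu,m}-\gamma_0$ with $\gamma_{\nu,m}=\mathds 1_{(-\infty,\epsilon_F]}(T+V_{\nu,m})$ is shown to belong to $\cK$ and to make the linear term nonnegative when the trace is re-expressed relative to $\gamma_{\nu,m,q}$; and the residual freedom is confined to the $\epsilon_F$-eigenspaces of $T_q+V_{\nu,m}$, which occur only for a null set of $q$ (because the $q$-dependence is an additive constant $|q|^2/2$, exactly as you say). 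One remark on your first paragraph: the kinetic term is linear in $Q$, so there is no ``equality in strict convexity'' to exploit for the operator itself; uniqueness does not come from convexity but precisely from the structure of the minimizers of $F$, which your second paragraph correctly captures.

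The genuine gap is the step you dispatch with ``resolvent expansions and Kato--Seiler--Simon estimates,'' namely proving that $\overline{Q}_m=\gamma_{\nu,m}-\gamma_0\in\cK$ and justifying the rewriting of $F(\overline{Q}_m)$ in the $\gamma_{\nu,m,q}$-blocks. For every $q$ in the Fermi ball $\overline{\mathfrak{B}}_{\epsilon_F}$, the Fermi level $\epsilon_F$ lies \emph{inside} the essential spectrum $[|q|^2/2,+\infty)$ of $T_q+V_{\nu,m}$, so there is no contour in $\bb C$ enclosing the spectrum below $\epsilon_F$ without crossing the spectrum: Cauchy's formula for the difference of the two spectral projectors is simply unavailable on the physically relevant fibers, and no Kato--Seiler--Simon bound rescues a contour integral that passes through continuous spectrum. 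This is exactly why the paper introduces the gap-opening functions $h_q^\eta$ and the regularized operators $K_q^\eta+V_{\nu,m}$ (Proposition~\ref{prop:defectstateQ_m}, Lemmas~\ref{regularizedOpProp}--\ref{boundedQR}), which open an artificial spectral gap of width $2\eta$ around $\epsilon_F$, allow uniform contour estimates locally in $|q|$, and then pass to the limit $\eta\to0$ by weak compactness in $L^2(\R^2;\gS_2)$ (Lemma~\ref{lemma10}); an extra null set $\mathfrak{M}_{\rm pp}^\eta$ of $q$'s where $\epsilon_F$ becomes a discrete eigenvalue of the regularized operator must also be excised. You correctly flag that the linear-term manipulation is the delicate point and that the $\gamma_{0,q}$- versus $\gamma_{\nu,m,q}$-block decompositions must be reconciled, but the idea that actually resolves it --- the $\eta$-regularization with its uniform-in-$q$ contour bounds --- is missing, and as written your construction of the candidate state would fail on $\overline{\mathfrak{B}}_{\epsilon_F}$, which is the whole content of the problem.
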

The proof of Theorem~\ref{yukawaMinimizerthm} can be read in Section~\ref{thmYukawaMinimizer}.

\begin{remark}
  Proving that self--consistent equations similar to~\eqref{self-consist-eq} hold for Coulomb interactions is much more challenging. The first step would be to properly define the potential $V_{\nu,0} = \rho_{\nu,0} \star |\cdot|$, as well as the self--adjoint extension of the operator $T+V_{\nu,0}$ (see~\cite{Oliveira2009}). The technique of proof we use in the Yukawa case relies on the fact that the potential is bounded and in~$L^2(\R)$ (although it would be possible to work in more general Lebesgue spaces). It is not obvious at all that $V_{\nu,0}$ satisfies these properties. 
\end{remark}

Let us conclude this section by presenting some numerical simulations illustrating the behavior of the perturbation of the electronic density induced by sharp trenches modeling a capacitor. More precisely, we consider $\nu(z) = -\rho_0\mathds 1_{|z| \leq w}$ for $w>0$. The physical parameters are chosen as $\epsilon_F = 2.0$ and~$w=4$. We refer to~\cite{Cao} for details on how the simulations are performed. We plot in Figures~\ref{FriedelYukawa4} and~\ref{FriedelYukawa2} the total electronic density $\rho_0 + \nu + \rho_{\nu,m}$ for two values of the Yukawa parameter~$m>0$. We can observe Friedel oscillations~\cite{Friedel} in the densities, which can be fitted away from the defect as
\[
\rho_{\nu,m}(z) = \rho_0+a\frac{\cos\left(2\epsilon z + \delta\right)}{|z|^3};
\]
see the values of $a,\delta,\epsilon$ obtained by our fit in the captions of the figures. Remark that the fitted value of~$\epsilon$ is close to the Fermi level, as predicted by~\cite{Friedel}.

\begin{figure}[h!]
\includegraphics[width=0.49\textwidth ]{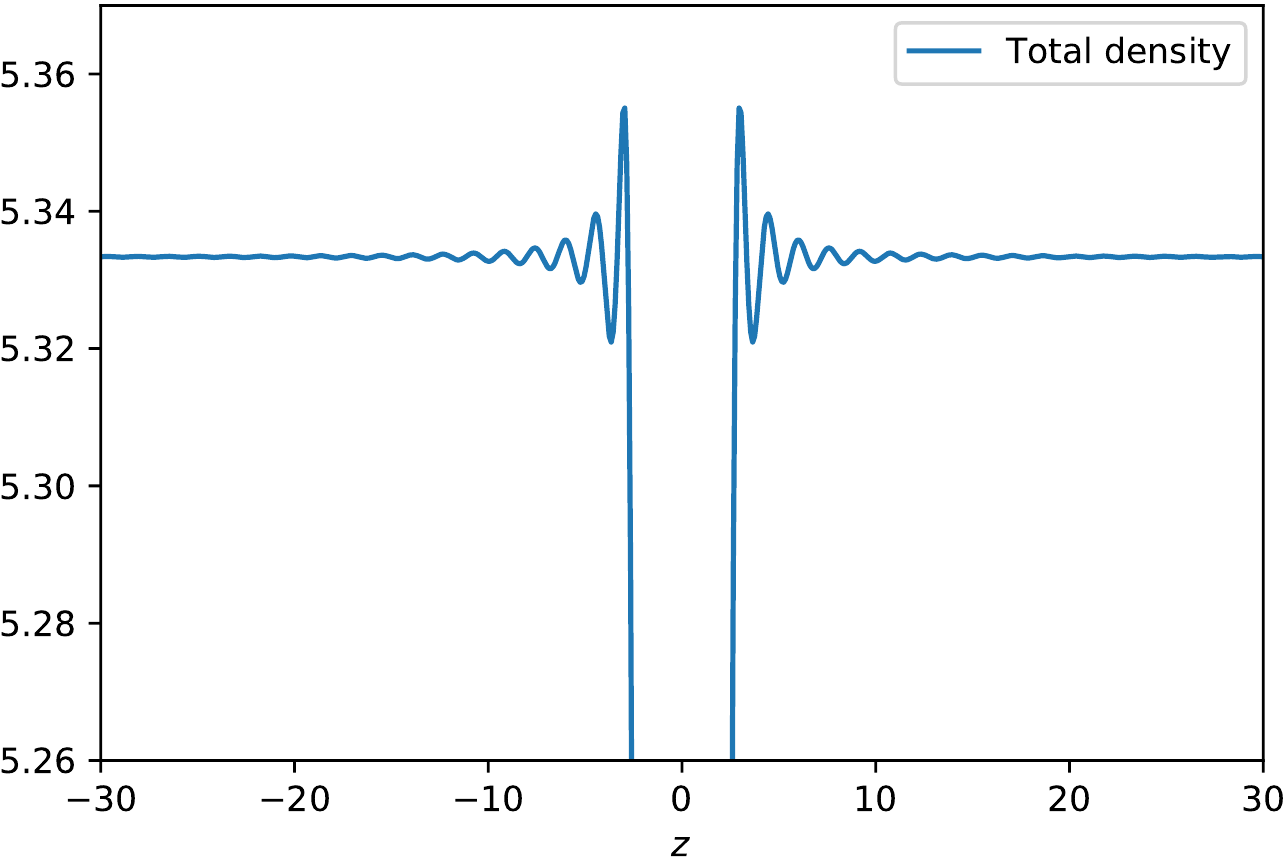}~
\includegraphics[width=0.49\textwidth ]{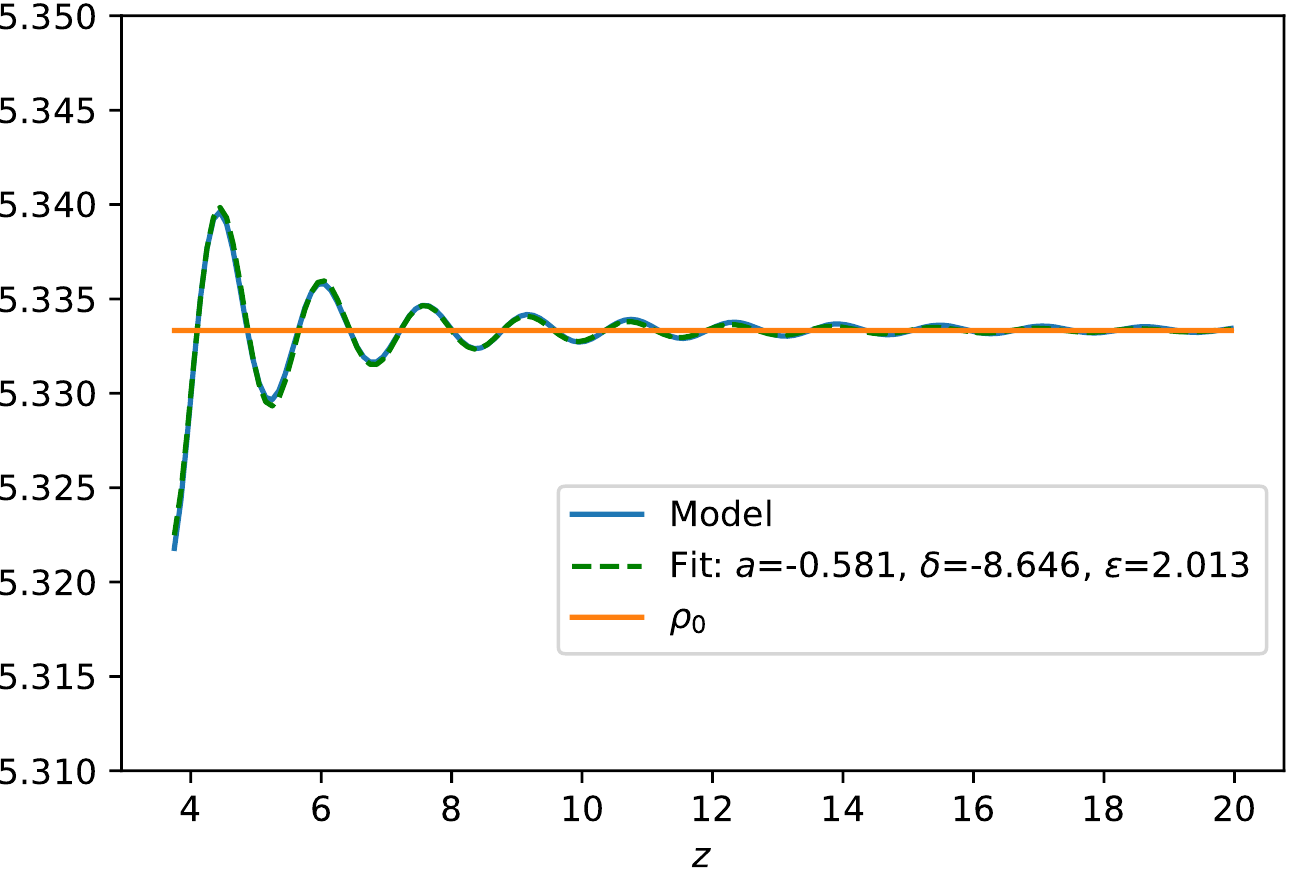}~
\caption{Electronic density $\rho_{\nu,m}$ for $m=4$.}
\label{FriedelYukawa4}
\end{figure}

\begin{figure}[h!]
\includegraphics[width=0.49\textwidth ]{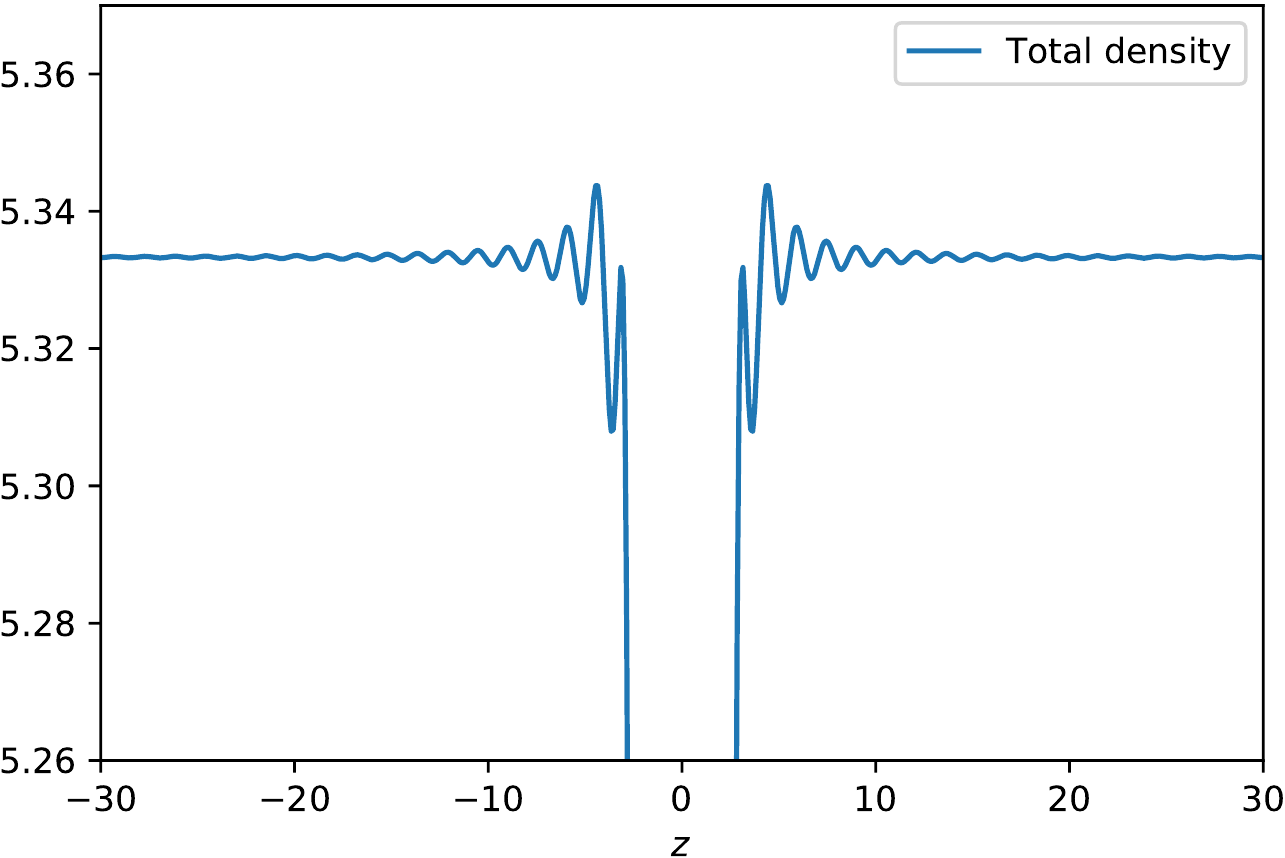}~
\includegraphics[width=0.49\textwidth ]{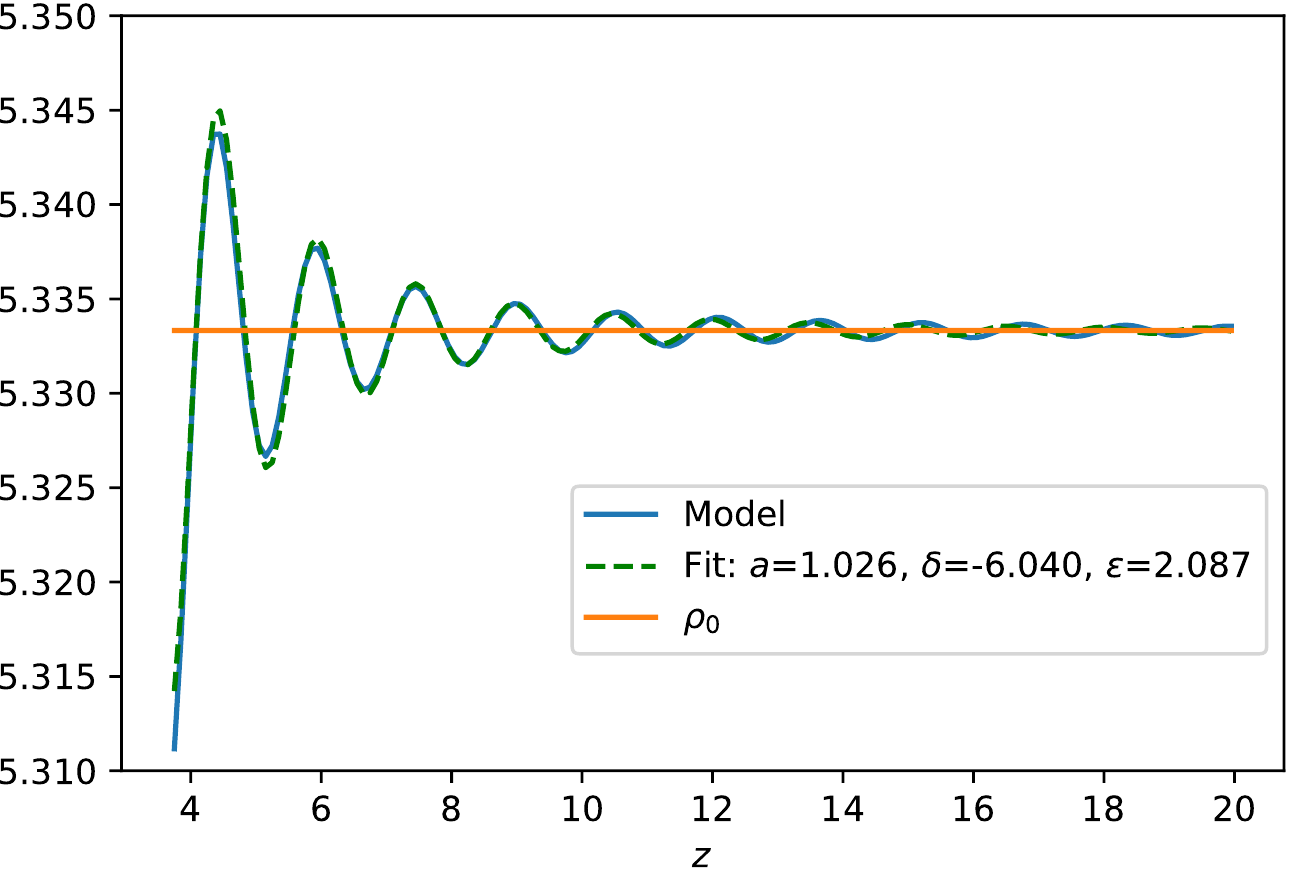}~
\caption{Electronic density $\rho_{\nu,m}$ for $m=2$.}
\label{FriedelYukawa2}
\end{figure}

\section{Proof of the results}
\label{sec:proofs}

Unless otherwise specified, we simply write $\mathfrak{S}_p$ instead of $ \mathfrak{S}_p\left(L^2(\bb R)\right)$ in all the proofs.

\subsection{Proof of Proposition \ref{prop:densityK}}
\label{propoDefofDensity}

The following result shows in which sense the density of operators in $\mathcal{K}$ should be understood.

\begin{prop}(Densities of operators in $\mathcal{K}$)
  \label{prop:densityK}
  Any $Q\in\mathcal{K}$ is locally trace-class, its density $\rho_Q$ is a function of the variable $z$ only, and $\rho_Q \in L^p(\bb R)+ L^2(\bb R)$ for any $1<p<5/3$.

  In addition, for all $1<p<5/3$ and all $c > 0$, there exists two positive constants $\eta_{c,-}, \eta_{p,c,+}$ such that
  \begin{equation}
    \forall Q \in \mathcal{K}, \quad \eta_{c,-} \|\rho_Q^{c,-}\|_{L^2(\R)}^2 + \eta_{p,c,+} \|\rho_Q^{c,+}\|_{L^p(\R)}^p \le \utr\left((T-\epsilon_F)Q\right),
    \label{LTtotalenergybound}
  \end{equation}
  where $\rho_Q = \rho_Q^{c,-}+\rho_Q^{c,+}$ with
  \[
  \rho_Q^{c,-}:=\frac{1}{(2\pi)^2}\int_{\overline{\mathfrak{B}}_{\epsilon_F+c}}\rho_{Q_{q}}\,dq \in L^2(\bb R),  \quad \mbox{and} \quad \rho_Q^{c,+}:= \frac{1}{(2\pi)^2}\int_{\bb R^2\backslash \overline{\mathfrak{B}}_{\epsilon_F+c}}\rho_{Q_{q}}\,dq  \in L^p(\bb R).
  \]
\end{prop}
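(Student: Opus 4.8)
The plan is to work fiber by fiber in the quasimomentum $q$ and to combine the fiberwise information using the decomposition $\rho_Q = \rho_Q^{c,-} + \rho_Q^{c,+}$, which separates the ``small-$q$'' fibers (where $T_q - \epsilon_F$ can vanish, and $Q_q$ lives near the Fermi surface) from the ``large-$q$'' fibers (where $T_q - \epsilon_F \ge c > 0$ is bounded below, so $Q_q = Q_q^{++} \ge 0$ is automatically a nonnegative trace-class-weighted operator). The key analytic input is a one-dimensional Lieb--Thirring-type inequality for the fibers $T_q = -\tfrac12 \tfrac{d^2}{dz^2} + \tfrac{|q|^2}{2}$, transferring bounds on $\mathrm{Tr}(|T_q-\epsilon_F|^{1/2} (Q_q^{++}-Q_q^{--}) |T_q-\epsilon_F|^{1/2})$ into bounds on suitable $L^r$-norms of $\rho_{Q_q}$, with the exponent and the weight depending on whether one is in a fiber near or away from the Fermi surface.

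\textbf{Step 1: the large-$q$ region.} For $q \notin \overline{\mathfrak{B}}_{\epsilon_F + c}$ one has $\gamma_{0,q} = 0$, hence $Q_q = Q_q^{++} \ge 0$ and $0 \le Q_q \le 1$, and $(T_q - \epsilon_F) \ge c$. Then $\mathrm{Tr}(|T_q-\epsilon_F|^{1/2} Q_q |T_q-\epsilon_F|^{1/2}) = \mathrm{Tr}((T_q-\epsilon_F)Q_q) \ge c\,\mathrm{Tr}(Q_q) \ge c\int_\R \rho_{Q_q}$, which already controls the $L^1$-mass. To get the $L^p$ bound with $1 < p < 5/3$, I would interpolate: use in addition that $\mathrm{Tr}((T_q - \epsilon_F)Q_q) \gtrsim \mathrm{Tr}(T_q Q_q) - C|q|^2\mathrm{Tr}(Q_q)$ controls the kinetic energy of the fiber operator, and apply the one-dimensional Gagliardo--Nirenberg / Lieb--Thirring bound $\int \rho_{Q_q}^{3} \lesssim (\text{kinetic energy})\cdot(\text{something})$ for $0 \le Q_q \le 1$ on $L^2(\R)$ — more precisely, for a density matrix on $\R$ one has $\int \rho_\gamma^3 \lesssim \mathrm{Tr}(-\gamma'')$ only after accounting for the $0\le\gamma\le1$ constraint, so the clean statement is $\int \rho_{Q_q}^{p} \lesssim$ (mass)$^{1-\theta}$(kinetic energy)$^\theta$ with $p = 1 + 2\theta$ and $\theta < 1/3$, i.e. $p < 5/3$. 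Integrating over $q \in \R^2 \setminus \overline{\mathfrak{B}}_{\epsilon_F+c}$ and using that $\rho_Q^{c,+} = (2\pi)^{-2}\int \rho_{Q_q}\,dq$ together with the triangle inequality in $L^p$ (convexity of $\|\cdot\|_{L^p}^p$ after rescaling, or Minkowski's integral inequality) yields $\|\rho_Q^{c,+}\|_{L^p}^p \lesssim \int_{\R^2\setminus\overline{\mathfrak{B}}_{\epsilon_F+c}} \mathrm{Tr}((T_q-\epsilon_F)Q_q)\,dq$; combined with~\eqref{defkin} this is the $\eta_{p,c,+}$ term.

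\textbf{Step 2: the small-$q$ region.} For $q \in \overline{\mathfrak{B}}_{\epsilon_F+c}$, $Q_q$ need not be sign-definite; here the relevant quantity is $\mathrm{Tr}(|T_q-\epsilon_F|^{1/2}(Q_q^{++}-Q_q^{--})|T_q-\epsilon_F|^{1/2})$, and since $Q_q^2 \le Q_q^{++}-Q_q^{--}$ this also bounds $\||T_q-\epsilon_F|^{1/2}Q_q\|_{\gS_2}^2$. The idea is to use a Hilbert--Schmidt (rather than trace-class) estimate on the density: write $\rho_{Q_q}(z) = Q_q(z,z)$ and split $Q_q = |T_q-\epsilon_F|^{-1/2}\big(|T_q-\epsilon_F|^{1/2}Q_q\big)$, so that $\rho_{Q_q}$ is controlled in $L^2$ (in the $z$ variable) by $\||T_q-\epsilon_F|^{1/2}Q_q\|_{\gS_2}$ times an operator norm factor involving $|T_q-\epsilon_F|^{-1/2}$, which however is \emph{not} bounded as an operator because of the vanishing of $T_q - \epsilon_F$ at the classical turning points — this is the Peierls/Friedel oscillation integrability issue flagged in the introduction. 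The remedy is to integrate over $q$ \emph{before} taking norms: by the Kato--Seiler--Simon / resolvent-type bounds, the map $q \mapsto \rho_{Q_q}$ can be estimated in $L^2_q(\overline{\mathfrak{B}}_{\epsilon_F+c}; L^2_z(\R))$, and the singularity $|T_q-\epsilon_F|^{-1}$ is integrable in $q \in \R^2$ near the Fermi surface because $\{|q|^2/2 + (\text{1D spectrum}) = \epsilon_F\}$ meets the $2$-dimensional $q$-space transversally — the $2$D measure $dq$ tames the $1$D singularity. After this $q$-integration one obtains $\|\rho_Q^{c,-}\|_{L^2(\R)}^2 \lesssim \int_{\overline{\mathfrak{B}}_{\epsilon_F+c}} \||T_q-\epsilon_F|^{1/2}Q_q\|_{\gS_2}^2\,dq \le \int \mathrm{Tr}(|T_q-\epsilon_F|^{1/2}(Q_q^{++}-Q_q^{--})|T_q-\epsilon_F|^{1/2})\,dq$, giving the $\eta_{c,-}$ term.

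\textbf{Main obstacle and conclusion.} The main obstacle is precisely Step 2: making rigorous that $\rho_{Q_q}$ is well-defined for a.e.\ $q$ (operators in $\cK_q$ are not trace-class) and that its $L^2_z$-norm, after the $dq$-integration over the small-$q$ ball, is controlled by the kinetic energy despite the nonintegrable-looking weight $|T_q - \epsilon_F|^{-1}$ on individual fibers. I expect to handle this by a careful change of variables in $q$ adapted to the level sets of $T_q$, or equivalently by the explicit one-dimensional resolvent kernel of $T_q$ (a harmonic-oscillator-free, constant-coefficient operator whose resolvent is an explicit exponential), for which the relevant Schatten-norm estimates and their $q$-integrability can be computed by hand. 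Once both the $L^2$-control of $\rho_Q^{c,-}$ and the $L^p$-control of $\rho_Q^{c,+}$ are established, local trace-classness of $Q$ and the fact that $\rho_Q$ depends only on $z$ follow from the fiber decomposition and the formula $\rho_Q(z) = (2\pi)^{-2}\int_{\R^2}\rho_{Q_q}(z)\,dq$; the inequality~\eqref{LTtotalenergybound} is then just the sum of the two pieces, with $c$ fixed and $p$ arbitrary in $(1,5/3)$.
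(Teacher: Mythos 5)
Your overall architecture --- fiberwise Lieb--Thirring-type bounds followed by a weighted H\"older/Minkowski argument in $q$, with the large-$q$ and small-$q$ regions treated separately --- is the paper's. Step~1 is essentially the paper's treatment of $\R^2\setminus\overline{\mathfrak{B}}_{\epsilon_F+c}$, except that the restriction $p<5/3$ does not come from a fiberwise interpolation exponent: it comes from requiring the weight $q\mapsto(|q|^2-2\epsilon_F)^{-\alpha\beta/(\beta-1)}$ produced by the H\"older step to be integrable at infinity in $q$. (Also note that the annulus $\overline{\mathfrak{B}}_{\epsilon_F+c}\setminus\overline{\mathfrak{B}}_{\epsilon_F}$, which sits in your Step~2 region, is still handled by the sign-definite Lieb--Thirring bound of Lemma~\ref{qNotInM} plus a H\"older inequality whose weight is now required to be integrable near the Fermi surface, forcing $\beta>5/3$ and hence the choice $\beta=2$.)

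The genuine gap is in Step~2, for $q\in\overline{\mathfrak{B}}_{\epsilon_F}$. The mechanism you propose --- factor $Q_q=|T_q-\epsilon_F|^{-1/2}\bigl(|T_q-\epsilon_F|^{1/2}Q_q\bigr)$, control $\rho_{Q_q}$ through $\bigl\||T_q-\epsilon_F|^{1/2}Q_q\bigr\|_{\gS_2}$, and absorb the singular factor by integrating in $q$ --- does not close. Testing the density against $V\in L^2(\R)$, the factor you must control is $\bigl\|V|T_q-\epsilon_F|^{-1/2}\bigr\|_{\gS_2}^2=\frac{1}{2\pi}\|V\|_{L^2}^2\int_\R\bigl|\tfrac{k^2+|q|^2}{2}-\epsilon_F\bigr|^{-1}dk$, and this inner $k$-integral is already $+\infty$ for every $q$ with $|q|^2<2\epsilon_F$ (first-order, non-integrable singularity at $k=\pm\sqrt{2\epsilon_F-|q|^2}$), so no subsequent $q$-integration can repair it; the joint $(k,q)$-integral of a first-power singularity over a neighborhood of the codimension-one Fermi sphere is likewise logarithmically divergent. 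What the paper actually does here is decompose $Q_q$ into its blocks relative to $\gamma_{0,q}$ and use two \emph{different} estimates. For the diagonal blocks $Q_q^{\pm\pm}$ it exploits the full trace-class-with-sign information (not merely the $\gS_2$ bound) via a Rumin-type Lieb--Thirring inequality (Lemma~\ref{qInMLemma}), yielding $\int|\rho_{Q_q^{\pm\pm}}|^3+\sqrt{2\epsilon_F-|q|^2}\int\rho_{Q_q^{\pm\pm}}^2$, followed by a weighted H\"older in $q$. For the off-diagonal blocks it uses a duality argument in which the power $-1/2$ is split as $|T_q-\epsilon_F|^{-1/4}\cdots|T_q-\epsilon_F|^{-1/4}$ with $1-\gamma_{0,q}$ and $\gamma_{0,q}$ inserted on either side (Lemma~\ref{offdiagonal}); the resulting Lindhard-type symbol $\Phi_q(k)$ has only a square-root singularity, located at the Peierls momentum $|k|=2\sqrt{2\epsilon_F-|q|^2}$, and it is this (and only this) singularity that the $q$-integration tames, through the boundedness of $G(k)=\int_{\overline{\mathfrak{B}}_{\epsilon_F}}g(k,q)^{-1}dq$. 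Your intuition that the two-dimensional $dq$-measure tames the one-dimensional singularity is correct, but it applies to the square-root singularity produced by this two-sided projected splitting, not to the first-power singularity your factorization produces. Without the diagonal/off-diagonal decomposition and these two distinct estimates, the $L^2$ bound on $\rho_Q^{c,-}$ is not obtained.
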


The proof of Proposition~\ref{prop:densityK} is based on the following technical results, which show that the decomposed kinetic energy of defects actually satisfies Lieb--Thirring-like inequalities~\cite{Lieb2005}. The density associated with the state of the defect can therefore be controlled by the kinetic energy of the defect.

For $q\in \overline{\mathfrak{B}}_{\epsilon_F}$, Lemma \ref{qInMLemma} (resp. \ref{offdiagonal}) provides a lower bound of the densities of diagonal blocks (resp. off-diagonal blocks) of operators in~$\mathcal{K}_{q}$. The proof of these results, obtained by the same techniques as in~\cite{frank2013}, can be read in Section~\ref{lemmaQ} (resp. Section \ref{lemmaOffdiagonal}).

\begin{lemma}
  \label{qInMLemma}
  There exist positive constants $C_{1,{\rm diag}}$, $C_{2,{\rm diag}}$ such that, for all $q\in\overline{\mathfrak{B}}_{\epsilon_F}$ and $Q_{q}\in \mathcal{K}_{q}$,
  \begin{equation}
    \label{qInM}
    \pm \m{Tr}_{L^2(\bb R)}\left(|T_q-\epsilon_F|^{1/2}Q_{q}^{\pm\pm}|T_q-\epsilon_F|^{1/2}\right) \geq C_{1,{\rm diag}}\int_{\bb R} \left|\rho_{Q_{q}^{\pm\pm}}(z)\right|^3 dz + C_{2,{\rm diag}} \sqrt{2\epsilon_F-|q|^2} \int_{\bb R} \rho_{Q_{q}^{\pm\pm}}^2(z)\,dz.
  \end{equation}
\end{lemma}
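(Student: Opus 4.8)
The plan is to adapt the Lieb--Thirring argument of~\cite{frank2013} to the one-dimensional fiber operators $T_q = -\frac12 \frac{d^2}{dz^2} + \frac{|q|^2}{2}$. Fix $q \in \overline{\mathfrak{B}}_{\epsilon_F}$ and write $\lambda_q := \epsilon_F - \frac{|q|^2}{2} \ge 0$, so that $T_q - \epsilon_F = -\frac12 \frac{d^2}{dz^2} - \lambda_q$ and $|T_q - \epsilon_F| = \big| -\frac12 \frac{d^2}{dz^2} - \lambda_q \big|$. I will treat the two signs symmetrically; consider the ``$++$'' case (the ``$--$'' case is identical after replacing $Q_q^{++}$ by $-Q_q^{--}$ and using $-\gamma_{0,q} \le Q_q$). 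Since $Q_q \in \mathcal{K}_q$, the block $A := Q_q^{++} = (1-\gamma_{0,q}) Q_q (1-\gamma_{0,q})$ satisfies $0 \le A \le 1 - \gamma_{0,q} \le 1$, and $B := |T_q - \epsilon_F|^{1/2} A |T_q - \epsilon_F|^{1/2}$ is trace-class and non-negative with $\rho_B = \rho_{Q_q^{++}}$ (in the sense that $\m{Tr}(BW) = \int \rho_{Q_q^{++}} W$). The point is to bound $\|\rho_{Q_q^{++}}\|_{L^3}^3$ and $\|\rho_{Q_q^{++}}\|_{L^2}^2$ from above by $\m{Tr}(B)$.

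The key step is a suitable Lieb--Thirring / Cwikel--Lieb--Rozenblum-type inequality for the pair $(A, T_q)$. First I would write $A = (1 - \gamma_{0,q}) A (1 - \gamma_{0,q})$ and note that on the range of $1 - \gamma_{0,q}$ we have the operator inequality $|T_q - \epsilon_F| = T_q - \epsilon_F \ge 0$; moreover for the portion of the spectrum of $T_q$ near $\epsilon_F$ one has the elementary two-sided bound comparing $T_q - \epsilon_F$ with $\frac{1}{c'}(1 + (T_q - \epsilon_F))$-type weights. Concretely, I would use the dual formulation: for $0 \le A \le 1$, the density $\rho$ of $|T_q-\epsilon_F|^{1/2} A |T_q-\epsilon_F|^{1/2}$ restricted to the spectral subspace where $T_q \ge \epsilon_F$ obeys, for any $z$,
\[
\rho_{Q_q^{++}}(z) \le \sup\Big\{ \langle \delta_z, |T_q-\epsilon_F|^{1/2} A |T_q-\epsilon_F|^{1/2} \delta_z \rangle\ \text{tested against a normalized trial state}\Big\},
\]
but more usefully I would run the standard argument: by the Lieb--Thirring inequality in 1D (in its kinetic-energy form), for any fermionic density matrix $0 \le A \le 1$ on $L^2(\R)$ with $\rho_A$ its density, $\m{Tr}\big((-\tfrac12 \tfrac{d^2}{dz^2}) A\big) \ge K \int_\R \rho_A^3$. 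Applying this with the shift by $\lambda_q$ handled via $1-\gamma_{0,q}$ projecting onto energies $\ge \epsilon_F$ gives the cubic term with constant $C_{1,\rm diag}$. For the quadratic term with the prefactor $\sqrt{2\epsilon_F - |q|^2} = \sqrt{2\lambda_q}$, I would exploit that on $\mathrm{Ran}(1-\gamma_{0,q})$, which corresponds to $\tfrac12 k^2 \ge \lambda_q$ in the Fourier variable of $z$, one has $\big|\tfrac12 k^2 - \lambda_q\big| \ge c\sqrt{\lambda_q}\,|k| - (\text{lower order})$ away from $k=0$, equivalently a Hardy--Littlewood--Sobolev / Gagliardo--Nirenberg interpolation: $\int \rho_A^2 \lesssim \big(\int \rho_A^3\big)^{1/2}\big(\int \rho_A\big)^{1/2}$ combined with $\int \rho_A \le$ (something controlled by $\frac{1}{\sqrt{\lambda_q}}\m{Tr}(B)$), which after rearranging yields $\sqrt{\lambda_q}\int\rho_A^2 \lesssim \m{Tr}(B)$. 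Tracking the universal constants carefully and checking they are independent of $q$ (only the explicit $\sqrt{2\epsilon_F-|q|^2}$ factor depends on $q$) gives the claimed $C_{2,\rm diag}$.

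The main obstacle I anticipate is the \emph{degeneracy at the Fermi surface}: when $|q|^2 \to 2\epsilon_F$, i.e. $\lambda_q \to 0$, the operator $T_q - \epsilon_F$ has a gap closing at $k=0$, and the naive one-dimensional Lieb--Thirring constant degenerates; this is precisely the Peierls-oscillation integrability issue flagged in the introduction. The resolution is to split the fiber into the ``bulk'' region $\{ \tfrac12 k^2 \le \epsilon_F + c\}$ and the ``tail'' $\{\tfrac12 k^2 > \epsilon_F + c\}$ — which is exactly why the statement of Proposition~\ref{prop:densityK} carries the decomposition $\rho_Q = \rho_Q^{c,-} + \rho_Q^{c,+}$ with $L^2$ and $L^p$ control respectively — and to prove the two terms on the right-hand side of~\eqref{qInM} separately, the $\sqrt{2\epsilon_F-|q|^2}$-weighted $L^2$ bound coming from the near-threshold contribution and the $L^3$ bound from the genuinely positive part of the spectrum. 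Once the fiberwise inequality~\eqref{qInM} is established with $q$-uniform constants, integrating over $q \in \overline{\mathfrak{B}}_{\epsilon_F}$ (and handling $q \notin \overline{\mathfrak{B}}_{\epsilon_F}$, where $\gamma_{0,q}=0$ and only the $++$ block survives with $T_q - \epsilon_F > 0$) will give Proposition~\ref{prop:densityK}; but that integration step is deferred and not part of this lemma.
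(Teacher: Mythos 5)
There is a genuine gap: neither of the two terms in~\eqref{qInM} follows from the mechanisms you sketch. For the cubic term, the plain one-dimensional Lieb--Thirring inequality gives $\m{Tr}\big(-\tfrac12\tfrac{d^2}{dz^2}A\big)\ge K\int_\R\rho_A^3$ for $0\le A\le 1$, but the quantity you must bound from below is
$\m{Tr}\big(|T_q-\epsilon_F|^{1/2}A|T_q-\epsilon_F|^{1/2}\big)=\m{Tr}\big(-\tfrac12\tfrac{d^2}{dz^2}A\big)-\lambda_q\m{Tr}(A)$ with $\lambda_q=\epsilon_F-|q|^2/2$, and the subtracted term $\lambda_q\m{Tr}(A)$ is not controlled by anything on the right-hand side. "Handling the shift via $1-\gamma_{0,q}$" is precisely the nontrivial content of the lemma: one needs an \emph{improved} Lieb--Thirring inequality for density matrices supported on $\{T_q\ge\epsilon_F\}$. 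For the quadratic term, your interpolation $\int\rho_A^2\le(\int\rho_A^3)^{1/2}(\int\rho_A)^{1/2}$ requires a bound $\int\rho_A\lesssim\lambda_q^{-1/2}\,\m{Tr}(B)$, which is false: a state occupying a thin momentum shell $[\sqrt{2\lambda_q},\sqrt{2\lambda_q}+\delta]$ carries density of order $\delta$ at renormalized kinetic cost of order $\delta^2$, so no $L^1$ control of $\rho_A$ by $\m{Tr}(B)$ is available. (This is exactly the metallic degeneracy you correctly flag, but your proposed cure -- splitting the longitudinal momentum into bulk and tail -- confuses the $k$-variable inside a fiber with the $q$-decomposition $\rho_Q=\rho_Q^{c,-}+\rho_Q^{c,+}$ of Proposition~\ref{prop:densityK}, which is a decomposition of the \emph{transverse} quasimomentum and plays no role in this fiberwise lemma. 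Both terms of~\eqref{qInM} hold simultaneously for the full density; they are not obtained separately from two spectral regions.) A minor additional slip: the density of $B=|T_q-\epsilon_F|^{1/2}A|T_q-\epsilon_F|^{1/2}$ is a kinetic energy density, not $\rho_{Q_q^{++}}$.

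The actual proof proceeds by Rumin's spectral-shell (layer-cake) method, following~\cite{frank2013,rumin2011balanced}: after reducing to smooth finite-rank $0\le\gamma\le1$ by the density result of Lemma~\ref{continuityLemma}, one writes
$\m{Tr}\big(|T_q-\epsilon_F|^{1/2}\gamma|T_q-\epsilon_F|^{1/2}\big)=\int_0^{+\infty}\big(\int_\R\rho_{e,q}\big)\,de$ with $\rho_{e,q}$ the density of $P_{e,q}\gamma P_{e,q}$, $P_{e,q}=\mathds 1_{[e,+\infty)}(|T_q-\epsilon_F|)$, and deduces the pointwise lower bound $\int_\R R_q(\rho_\gamma(z))\,dz$, where $R_q(t)=\int_0^{f_q^{-1}(t)}(\sqrt t-\sqrt{f_q(e)})^2\,de$ and $f_q(e)$ is the phase-space volume function of $|T_q-\epsilon_F|$. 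The rescaling $f_q(e)=\omega_qF(e/\omega_q^2)$ with $\omega_q=\sqrt{2\epsilon_F-|q|^2}$ then yields $R_q(t)\ge C_{1,{\rm diag}}\,\omega_q t^2+C_{2,{\rm diag}}\,t^3$ from the $T\to0$ and $T\to\infty$ asymptotics of a single universal profile; both terms of~\eqref{qInM}, including the correct $\omega_q$ weight, come from this one estimate. If you want to salvage your write-up, you should replace both of your key steps by this argument (or an equivalent improved Lieb--Thirring inequality relative to the Fermi level).
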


The absolute value in the integrand of the first integral on the right-hand side is motivated by the fact that $Q_q^{--} \leq 0$, so that $\rho_{Q_q^{--}} \leq 0$.

\begin{lemma}
\label{offdiagonal}
There exist $C_{+-} \in \mathbb{R}_+$ such that, for all $q\in \overline{\mathfrak{B}}_{\epsilon_F}$ and $Q_{q}\in \mathcal{K}_{q}$,
\begin{equation}
\begin{aligned}
& \m{Tr}_{L^2(\bb R)}\left(|T_q-\epsilon_F|^{1/2}\left(Q_{q}^{++}-Q_q^{--}\right)|T_q-\epsilon_F|^{1/2}\right) \\
& \qquad \geq C_{+-} \left(2\epsilon_F-|q|^2\right)^{1/4} \int_{\bb R}\left\lvert\widehat{\rho_{Q_{q}^{\pm\mp}}}(k)\right\rvert^2 \left\lvert |k|-2\sqrt{2\epsilon_F-|q|^2}\right\rvert^{1/2} dk.
\end{aligned}
\label{diag_LT}
\end{equation}
\end{lemma}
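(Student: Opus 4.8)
The plan is to pass to the Fourier representation of $L^2(\R)$, reduce the left-hand side of~\eqref{diag_LT} to a weighted Hilbert--Schmidt norm of the off-diagonal block $Q_q^{+-}$, and then obtain the bound from a weighted Cauchy--Schwarz inequality whose kernel reduces to an explicit one-dimensional integral. Fix $q\in\overline{\mathfrak{B}}_{\epsilon_F}$ and set $\kappa:=(2\epsilon_F-|q|^2)^{1/2}\geq 0$; on the Fourier side $T_q$ is multiplication by $\tfrac12(k^2+|q|^2)$, $\gamma_{0,q}$ is multiplication by $\mathds 1_{\{|k|\leq\kappa\}}$, and $|T_q-\epsilon_F|$ is multiplication by $\tfrac12\bigl||k|^2-\kappa^2\bigr|$ (if $\kappa=0$ then $\gamma_{0,q}=0$ and both sides of~\eqref{diag_LT} vanish, so assume $\kappa>0$). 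Denote by $M$ the Fourier-representation integral kernel of $Q_q^{+-}$, supported in $\{|k_1|>\kappa\}\times\{|k_2|\leq\kappa\}$.

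First, since $Q_q\in\mathcal{K}_q$ we have $Q_q^2\leq Q_q^{++}-Q_q^{--}$, so the left-hand side of~\eqref{diag_LT} is bounded below by $\bigl\||T_q-\epsilon_F|^{1/2}Q_q\bigr\|_{\mathfrak{S}_2}^2$. Splitting $Q_q$ into its four blocks $Q_q^{\alpha\beta}$ and using that $|T_q-\epsilon_F|$ commutes with $\gamma_{0,q}$ together with $Q_q=Q_q^\ast$, the operators $|T_q-\epsilon_F|^{1/2}Q_q^{\alpha\beta}$ turn out to be pairwise orthogonal in $\mathfrak{S}_2$; dropping the two non-negative diagonal contributions and using $(Q_q^{+-})^\ast=Q_q^{-+}$ yields
\[
\mathrm{Tr}\!\left(|T_q-\epsilon_F|^{1/2}(Q_q^{++}-Q_q^{--})|T_q-\epsilon_F|^{1/2}\right)\ \geq\ \int_{\R^2} w_\kappa(k_1,k_2)\,|M(k_1,k_2)|^2\,dk_1\,dk_2 ,
\]
where $w_\kappa(k_1,k_2):=\tfrac12\bigl(|k_1^2-\kappa^2|+|k_2^2-\kappa^2|\bigr)$. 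A key simplification, which I would record at this point, is that on the support of $M$ (assuming $k>0$, the case $k<0$ being symmetric) one has $p+k>\kappa>0$ and $|p|\leq\kappa$, so that $w_\kappa(p+k,p)=\tfrac12\bigl((p+k)^2-p^2\bigr)=\tfrac12\,k(2p+k)$; this linearization is what makes the computation below explicit.

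Next, from $\widehat{\rho_{Q_q^{+-}}}(k)=(2\pi)^{-1/2}\int_\R M(p+k,p)\,dp$ and Cauchy--Schwarz with the weight $w_\kappa$,
\[
\bigl|\widehat{\rho_{Q_q^{+-}}}(k)\bigr|^2\ \leq\ \frac{G_\kappa(k)}{2\pi}\int_\R w_\kappa(p+k,p)\,|M(p+k,p)|^2\,dp,\qquad G_\kappa(k):=\int_{\{|p+k|>\kappa,\ |p|\leq\kappa\}}\frac{dp}{w_\kappa(p+k,p)} .
\]
Multiplying by $\bigl||k|-2\kappa\bigr|^{1/2}$, integrating in $k$, and using Fubini with the change of variables $(k_1,k_2)=(p+k,p)$ (Jacobian $1$) reduces the whole estimate to the pointwise claim $\sup_{k\in\R}\bigl||k|-2\kappa\bigr|^{1/2}G_\kappa(k)\leq C\kappa^{-1/2}$ for a universal constant $C$. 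For this, the linearization above gives by elementary integration $G_\kappa(k)=\tfrac1{|k|}\log\bigl|\tfrac{|k|+2\kappa}{|k|-2\kappa}\bigr|$ for $|k|\neq 2\kappa$, and the substitution $|k|=2\kappa t$ turns the claim into the boundedness on $(0,\infty)\setminus\{1\}$ of $t\mapsto \tfrac{|1-t|^{1/2}}{2t}\log\bigl|\tfrac{1+t}{1-t}\bigr|$, which holds because this function is continuous with finite limits $1,0,0$ as $t\to 0^+,1,\infty$. Combining with the display of the previous paragraph gives~\eqref{diag_LT} with $C_{+-}=2\pi/C$ (recall $\kappa^{1/2}=(2\epsilon_F-|q|^2)^{1/4}$); the statement with $Q_q^{-+}$ in place of $Q_q^{+-}$ follows since $|\widehat{\rho_{Q_q^{-+}}}(k)|=|\widehat{\rho_{Q_q^{+-}}}(-k)|$ and the weight is even.

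I expect the pointwise bound $\bigl||k|-2\kappa\bigr|^{1/2}G_\kappa(k)\leq C\kappa^{-1/2}$ to be the only genuinely delicate point: it is precisely here that the one-dimensional geometry of the Fermi set $\{|k|\leq\kappa\}$ dictates both the exponent $\tfrac14$ on $2\epsilon_F-|q|^2$ and the weight $\bigl||k|-2\kappa\bigr|^{1/2}$ degenerating at $|k|=2\kappa$ — any weaker decay of the weight there would make $G_\kappa$ non-integrable, which is the analytic shadow of the Peierls oscillations discussed in~\cite{frank2013}. A secondary, more routine technical point is to justify rigorously that $\rho_{Q_q^{+-}}$ is well defined and that the stated formula for $\widehat{\rho_{Q_q^{+-}}}$ holds for every $Q_q\in\mathcal{K}_q$ (the kernels involved are only Hilbert--Schmidt after multiplication by $|T_q-\epsilon_F|^{1/2}$); this is handled by a density argument, approximating $Q_q$ within $\mathcal{K}_q$ by operators whose Fourier kernels are smooth and supported away from $\{|k|=\kappa\}$, exactly as in~\cite{frank2013}.
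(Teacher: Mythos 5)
Your proposal is correct, and at its core it is the same strategy as the paper's proof (itself modeled on \cite{frank2013}): reduce the left-hand side to a weighted Hilbert--Schmidt norm of the off-diagonal block, apply a weighted Cauchy--Schwarz inequality in the momentum variable transverse to $k$, and bound the resulting explicit one-variable function, which is exactly where the weight $\lvert |k|-2\sqrt{2\epsilon_F-|q|^2}\rvert^{1/2}$ and the prefactor $(2\epsilon_F-|q|^2)^{1/4}$ emerge. The execution differs in two respects. The paper argues by duality against a test potential $V\in L^2(\R)$ and computes the Hilbert--Schmidt norm of $\tfrac{1-\gamma_{0,q}}{|T_q-\epsilon_F|^{1/4}}V\tfrac{\gamma_{0,q}}{|T_q-\epsilon_F|^{1/4}}$; this produces the function $\Phi_q$, whose denominator is the \emph{geometric} mean of $|T_q-\epsilon_F|$ evaluated at the two momenta, and whose bound is imported from the asymptotics of $\Psi$ established in \cite{frank2013}. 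You instead work directly with the Fourier kernel of $Q_q^{+-}$ and the \emph{arithmetic}-mean weight $|T_q-\epsilon_F|(k_1)+|T_q-\epsilon_F|(k_2)$, which linearizes on the support of the kernel (the $\kappa^2$ terms cancel) and makes $G_\kappa(k)=|k|^{-1}\log\lvert(|k|+2\kappa)/(|k|-2\kappa)\rvert$ available in closed form, so the key sup bound is an elementary calculus check rather than a quoted asymptotic estimate. Since the arithmetic mean dominates the geometric mean, $G_\kappa$ is pointwise controlled by $\Phi_q$, so nothing is lost; the price is the slightly cruder entry point through $Q_q^2\le Q_q^{++}-Q_q^{--}$ combined with the block orthogonality of $|T_q-\epsilon_F|^{1/2}Q_q^{\alpha\beta}$ in $\mathfrak{S}_2$, which you justify correctly. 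The remaining points you flag — the degenerate case $|q|^2=2\epsilon_F$, the relation $|\widehat{\rho_{Q_q^{-+}}}(k)|=|\widehat{\rho_{Q_q^{+-}}}(-k)|$, and the density argument (as in Lemma~\ref{continuityLemma}) needed to give meaning to $\widehat{\rho_{Q_q^{+-}}}$ for general $Q_q\in\mathcal{K}_q$ — are handled the same way in the paper, so I see no gap.
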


For $q\in \bb R^2 \backslash  \overline{\mathfrak{B}}_{\epsilon_F}$, it holds $\gamma_{0,q} \equiv 0$ and  $0 \leq Q_{q} \leq 1$, so that $Q_{q}^{++} \equiv  Q_{q}$ and $ Q_{q}^{--} \equiv 0$. In particular, $\rho_{Q_q}\geq 0$. The following estimate therefore follows from the one dimensional Lieb--Thirring inequality~\cite{Lieb2005}.

\begin{lemma}
  \label{qNotInM}
  There exists a positive constant $C$ such that, for all $ q\in \bb R^2\backslash \overline{\mathfrak{B}}_{\epsilon_F}$ and $Q_{q}\in \mathcal{K}_{q}$,
  \begin{equation}
    \m{Tr}_{L^2(\bb R)}\left(|T_q-\epsilon_F|^{1/2}\left(Q_{q}^{++}-Q_q^{--}\right)|T_q-\epsilon_F|^{1/2}\right) \geq  C\int_{\bb R} \rho_{Q_{q}}^3(z) \,dz + \left(\frac{|q|^2}{2}-\epsilon_F\right)\int_{\bb R} \rho_{Q_{q}}(z) \,dz.
  \end{equation}
\end{lemma}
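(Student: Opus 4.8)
The plan is to exploit the fact that, for $q$ outside the ball $\overline{\mathfrak{B}}_{\epsilon_F}$, the Fermi projector $\gamma_{0,q}$ vanishes, so the lemma reduces to a single application of the one-dimensional Lieb--Thirring kinetic inequality. First I would record the simplifications valid for $q \in \bb R^2 \setminus \overline{\mathfrak{B}}_{\epsilon_F}$. Since $T_q = -\frac12\frac{d^2}{dz^2}+\frac{|q|^2}{2}$ has spectrum $[\frac{|q|^2}{2},+\infty) \subset (\epsilon_F,+\infty)$, one has $\gamma_{0,q}=0$, hence for any $Q_q \in \mathcal{K}_q$ we get $Q_q^{--}=0$, $Q_q^{++}=Q_q$, $0\le Q_q \le 1$, and $\rho_{Q_q}\ge 0$. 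Setting $\mu := \frac{|q|^2}{2}-\epsilon_F>0$, we also have $T_q-\epsilon_F = -\frac12\frac{d^2}{dz^2}+\mu \ge \mu\,\mathrm{Id}$, so $|T_q-\epsilon_F|=T_q-\epsilon_F$ and the left-hand side of the claimed inequality is exactly $\m{Tr}\big((T_q-\epsilon_F)^{1/2}Q_q(T_q-\epsilon_F)^{1/2}\big)$, which is finite because $Q_q \in \mathcal{X}_q$.

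Next I would split this energy into a kinetic part and a mass part. Because $T_q-\epsilon_F \ge \mu\,\mathrm{Id}$, the operator $(T_q-\epsilon_F)^{-1/2}$ is bounded, so positivity and trace-class-ness of $(T_q-\epsilon_F)^{1/2}Q_q(T_q-\epsilon_F)^{1/2}$ force $Q_q \in \mathfrak{S}_1$ with $\m{Tr}(Q_q)<\infty$; in particular $\rho_{Q_q}$ is a genuine $L^1(\bb R)$ density. The Hilbert--Schmidt factorization $|\nabla_z|Q_q^{1/2}\in\mathfrak{S}_2$, a consequence of $|T_q-\epsilon_F|^{1/2}Q_q\in\mathfrak{S}_2$, then justifies, exactly as in~\cite{frank2013}, the identity
\[
\m{Tr}\big((T_q-\epsilon_F)^{1/2}Q_q(T_q-\epsilon_F)^{1/2}\big) = \m{Tr}\Big(-\tfrac12\tfrac{d^2}{dz^2}\,Q_q\Big) + \mu\, \m{Tr}(Q_q),
\]
where $\m{Tr}\big(-\frac12\frac{d^2}{dz^2}Q_q\big)=\frac12\big\|\,|\nabla_z|Q_q^{1/2}\big\|_{\mathfrak{S}_2}^2 \ge 0$ is the kinetic energy of $Q_q$. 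The mass term is the second term on the right-hand side verbatim: by definition of the density, $\m{Tr}(Q_q)=\int_{\bb R}\rho_{Q_q}$, so $\mu\,\m{Tr}(Q_q) = \big(\frac{|q|^2}{2}-\epsilon_F\big)\int_{\bb R}\rho_{Q_q}(z)\,dz$.

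Finally, I would invoke the one-dimensional Lieb--Thirring kinetic inequality~\cite{Lieb2005}: since $0\le Q_q\le 1$ on $L^2(\bb R)$, there is a universal constant $C>0$ (independent of $q$) such that $\m{Tr}\big(-\frac12\frac{d^2}{dz^2}Q_q\big)\ge C\int_{\bb R}\rho_{Q_q}^3(z)\,dz$ — the exponent being $1+2/d=3$ in dimension $d=1$, with no absolute value needed since $\rho_{Q_q}\ge 0$. Adding this lower bound on the kinetic term to the exact expression of the mass term yields the claim.

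The main (and essentially only) obstacle is the bookkeeping of the splitting identity: verifying that all traces involved are finite and that the cyclic rearrangement is legitimate for operators that are trace-class on each fiber $L^2(\bb R)$ but not compact on $L^2(\bb R^3)$. This is dispatched by the same $\mathfrak{S}_2$-factorization arguments used in~\cite{frank2013}. Unlike Lemmas~\ref{qInMLemma} and~\ref{offdiagonal}, there is no spectral-gap issue here, because the whole spectrum of $T_q$ sits above $\epsilon_F$; this is precisely why the present lemma is a direct corollary of the standard Lieb--Thirring inequality.
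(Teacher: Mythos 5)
Your proposal is correct and follows exactly the route the paper intends: it observes that $\gamma_{0,q}=0$ for $|q|^2>2\epsilon_F$, so $Q_q^{--}=0$, $Q_q^{++}=Q_q$, $0\le Q_q\le 1$, splits $T_q-\epsilon_F$ into $-\tfrac12\tfrac{d^2}{dz^2}$ plus the constant $\tfrac{|q|^2}{2}-\epsilon_F$, and applies the one-dimensional Lieb--Thirring inequality to the kinetic part. The paper itself gives no more detail than this (it simply states that the estimate follows from the 1D Lieb--Thirring inequality after these observations), so your write-up is a faithful and correct elaboration of the same argument.
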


We are now in position to prove Proposition~\ref{prop:densityK}. Fix $Q\in \mathcal{K}$. Bounds on the densities are obtained by separating the estimates for $q\in \bb R^2 \backslash  \overline{\mathfrak{B}}_{\epsilon_F}$ and $q\in  \overline{\mathfrak{B}}_{\epsilon_F}$. More precisely, defining
\begin{equation}
f(q):= \left||q|^2-2\epsilon_F\right| = \left\{ \begin{aligned}
&|q|^2-2\epsilon_F >  0 \quad \quad \text{if } q\in \bb R^2\backslash  \overline{\mathfrak{B}}_{\epsilon_F},\\
& 2\epsilon_F- |q|^2 \geq  0  \quad \quad \text{if } q\in  \overline{\mathfrak{B}}_{\epsilon_F},
\end{aligned}\right.
\end{equation}
the key ingredient in our proof is the following H\"{o}lder inequality, written in a generic form for real numbers $\beta>1$, $\alpha>0$ and an integration domain $A\subset \bb R^2$:
\begin{equation}
  \label{estimateQNOTINM}
  \begin{aligned}
    &\int_{\bb R}\left(\int_A\left|\rho_{Q_{q}}(z)\right|dq\right)^\beta\,dz \leq \int_{\bb R}\left(\int_{A} \left|\rho_{Q_{q}}(z)\right|^\beta f^{\alpha \beta}(q) \,dq \right)\left(\int_A  f^{-\alpha \beta/(\beta-1)}(q)\,dq\right)^{\beta-1}\,dz.
  \end{aligned}
\end{equation}
We successively consider three situations: total density for $q \in \overline{\mathfrak{B}}_{\epsilon_F}$, density associated with the diagonal blocks of~$Q$ for $q \in  \overline{\mathfrak{B}}_{\epsilon_F}$, and density associated with the off-diagonal blocks of~$Q$ for $q \in  \overline{\mathfrak{B}}_{\epsilon_F}$.

\paragraph{Estimates for the total density on $\bb R^2 \backslash \overline{\mathfrak{B}}_{\epsilon_F}$.}
Lemma~\ref{qNotInM} shows that
\begin{align*}
& \int_{\bb R^2\backslash \overline{\mathfrak{B}}_{\epsilon_F}}\left(C \int_{\bb R} \rho_{Q_{q}}^3(z)\,dz + \left(\frac{|q|^2}{2}-\epsilon_F\right)\int_{\bb R} \rho_{Q_{q}}(z) \,dz\right)dq \\
& \qquad \leq \int_{\bb R^2\backslash \overline{\mathfrak{B}}_{\epsilon_F}} \m{Tr}_{L^2(\bb R)}\left(|T_q-\epsilon_F|^{1/2}\left(Q_{q}^{++}-Q_q^{--}\right)|T_q-\epsilon_F|^{1/2}\right)dq \leq \utr((T-\epsilon_F)Q) < +\infty.
\end{align*}
The above inequality implies that
\begin{equation}
  \label{separateQNInM}
  \int_{\bb R^2\backslash \overline{\mathfrak{B}}_{\epsilon_F}} \int_{\bb R} \rho_{Q_{q}}^3(z) \,dz \,dq \leq \frac1C \utr((T-\epsilon_F)Q),
  \qquad
  \int_{\bb R^2\backslash  \overline{\mathfrak{B}}_{\epsilon_F}}  \int_{\bb R}f(q)\,\rho_{Q_{q}}(z) \,dz \,dq \leq 2 \utr((T-\epsilon_F)Q).
\end{equation}
In order to obtain bounds on the density, we separate the integration domain in two pieces: large values of~$|q|$, and values close to $\{|q|^2=2\epsilon_F\}$ where $f(q)$ vanishes. More precisely, for a given $c>0$, we decompose the integration domain as $\bb R^2 \backslash \overline{\mathfrak{B}}_{\epsilon_F} = \left(\bb R^2\backslash \overline{\mathfrak{B}}_{\epsilon_F+c}\right) \cup \left(\overline{\mathfrak{B}}_{\epsilon_F+c} \backslash  \overline{\mathfrak{B}}_{\epsilon_F}\right)$.

We first set $A = \bb R^2\backslash \overline{\mathfrak{B}}_{\epsilon_F+c} $ in~\eqref{estimateQNOTINM}. The last integral in this inequality reads
\[
\int_A  f^{-\alpha \beta/(\beta-1)}(q)\,dq =\int_{\bb R^2\backslash \overline{\mathfrak{B}}_{\epsilon_F+c} } \left(|q|^2-2\epsilon_F\right)^{-\alpha \beta/(\beta-1)}\,dq = 2\pi \int_{\sqrt{2(\epsilon_F+c)}}^{+\infty} r\left(r^2-2\epsilon_F\right)^{-\alpha \beta/(\beta-1)}\,dr.
\]
The latter integral is finite if and only if $\alpha \beta/(\beta-1)>1$. Moreover, a H\"older inequality combined with~\eqref{separateQNInM} implies that the following integral is finite for $a>1$:
\[
\begin{aligned}
&\int_{\bb R^2\backslash \overline{\mathfrak{B}}_{\epsilon_F+c}} \int_{\bb R}\rho_{Q_{q}}^{3-\frac{2}{a}}(z)f^{1/a}(q)\,dz\,dq  \\
&\quad \leq \left(\int_{\bb R^2\backslash \overline{\mathfrak{B}}_{\epsilon_F+c}}  \int_{\bb R}f(q)\rho_{Q_{q}}(z)\,dz\,dq\right)^{1/a} \left(\int_{\bb R^2\backslash \overline{\mathfrak{B}}_{\epsilon_F+c}} \int_{\bb R} \rho_{Q_{q}}^3(z)\,dz\,dq\right)^{(a-1)/a}\\
&\quad \leq \max\left(2,\frac{1}{C}\right) \utr((T-\epsilon_F)Q).
\end{aligned}
\]
In view of~\eqref{estimateQNOTINM}, this suggests taking $\alpha \beta = 1/a$ and $\beta = 3-2/a$. The condition $\alpha \beta/(\beta-1)>1$ can be rephrased as $1/(2a-2)>1$. The latter inequality is satisfied for $1<a<3/2$, which is equivalent to $1< \beta < 5/3$. For the latter choice, inequality~\eqref{estimateQNOTINM} combined with~\eqref{separateQNInM} then shows that there exists $K_{\beta, \bb R^2 \backslash \overline{\mathfrak{B}}_{\epsilon_F+c}} \in \mathbb{R}_+$ such that
\begin{equation}
  \label{slash1}
  \int_{\bb R}\left(\int_{ \bb R^2 \backslash \overline{\mathfrak{B}}_{\epsilon_F+c}} \rho_{Q_{q}}(z)\,dq\right)^\beta\,dz  \leq K_{\beta, \bb R^2 \backslash \overline{\mathfrak{B}}_{\epsilon_F+c}} \utr((T-\epsilon_F)Q).
\end{equation}

We next set $A = \overline{\mathfrak{B}}_{\epsilon_F+c}\backslash\overline{\mathfrak{B}}_{\epsilon_F}$ in~\eqref{estimateQNOTINM} and follow the same strategy as in the previous case. We still take $\alpha \beta = 1/a$ and $\beta = 3-2/a$, but need now that $\alpha \beta/(\beta-1)<1$ in order to ensure that the last integral in~\eqref{estimateQNOTINM} is finite. This condition is equivalent to $a>3/2$, \textit{i.e.} $5/3<\beta<3$. We therefore consider $\beta = 2$. The inequality~\eqref{estimateQNOTINM} combined with~\eqref{separateQNInM} then shows that there exists $K_{\overline{\mathfrak{B}}_{\epsilon_F+c}\backslash \overline{\mathfrak{B}}_{\epsilon_F}}\in \mathbb{R}_+$ such that
\begin{equation}
  \label{firstPieceDensityEstimationTotal}
  \int_{\bb R}\left(\int_{\overline{\mathfrak{B}}_{\epsilon_F+c}\backslash \overline{\mathfrak{B}}_{\epsilon_F}} \rho_{Q_{q}}(z)\,dq\right)^2\,dz \leq K_{\overline{\mathfrak{B}}_{\epsilon_F+c}\backslash \overline{\mathfrak{B}}_{\epsilon_F}} \utr((T-\epsilon_F)Q).
\end{equation}

\paragraph{Estimates for $q\in \overline{\mathfrak{B}}_{\epsilon_F}$, diagonal blocks.} We write the estimates for $\rho_{Q^{++}_q}$ only, the bounds for $\rho_{Q^{--}_q}$ being similar. Lemma~\ref{qInMLemma} shows that
\begin{equation}
  \label{qinBestim}
  \begin{aligned}
    \int_{\overline{\mathfrak{B}}_{\epsilon_F}}\int_{\bb R} \rho_{Q_{q}^{++}}^3(z) \,dz\,dq  \leq \frac{1}{C_{1,{\rm diag}}} \utr((T-\epsilon_F)Q), \\
    \int_{\overline{\mathfrak{B}}_{\epsilon_F}}\int_{\bb R} f^{1/2}(q) \rho_{Q_{q}^{++}}^2(z) \,dz \,dq \leq \frac{1}{C_{2,{\rm diag}}} \utr((T-\epsilon_F)Q),
  \end{aligned}
\end{equation}
so that, by a H\"older inequality for $a>1$,
\begin{equation}
  \label{halfHolder}
  \begin{aligned}
    \int_{\overline{\mathfrak{B}}_{\epsilon_F}} \int_{\bb R}\rho_{Q_{q}^{++}}^{3-1/a}(z)f^{1/(2a)}(q)\,dz\,dq & \leq \left(\int_{\overline{\mathfrak{B}}_{\epsilon_F}}  \int_{\bb R}f^{1/2}(q)\rho_{Q_{q}^{++}}^2(z)\,dz\,dq\right)^{1/a} \left(\int_{\overline{\mathfrak{B}}_{\epsilon_F}} \int_{\bb R} \rho_{Q_{q}^{++}}^3(z)\,dz\,dq\right)^{(a-1)/a} \\
    & \leq C_{1,{\rm diag}}^{1/a-1}C_{2,{\rm diag}}^{-1/a} \utr((T-\epsilon_F)Q)<+\infty.
  \end{aligned}
\end{equation}
We now consider~\eqref{estimateQNOTINM} with $A = \overline{\mathfrak{B}}_{\epsilon_F}$ and $\rho_{Q_q}$ replaced by $\rho_{Q_{q}^{++}}$. The previous inequality suggests choosing $\beta =3-1/a$ and $\alpha \beta = 1/(2a)$. The last integral in~\eqref{estimateQNOTINM} is finite if and only if $\alpha \beta /(\beta-1)< 1$, which is equivalent to $1/(4a-2) < 1$, \textit{i.e.} $a>3/4$ and $5/3<\beta<3$. We therefore choose $\beta = 2$. The inequality~\eqref{estimateQNOTINM} combined with~\eqref{separateQNInM} then shows that there exists $K_{\overline{\mathfrak{B}}_{\epsilon_F},++} \in \mathbb{R}_+$ such that
\begin{equation}
  \label{diagoestiTotal}
  \int_{\bb R}\left(\int_{ \overline{\mathfrak{B}}_{\epsilon_F}} \rho_{Q_{q}^{++}}(z)\,dq\right)^2\,dz  \leq K_{\overline{\mathfrak{B}}_{\epsilon_F},++} \utr((T-\epsilon_F)Q).
\end{equation}

\paragraph{Estimates for $q\in\overline{\mathfrak{B}}_{\epsilon_F}$, off-diagonal blocks.}
Define, for $k \in \mathbb{R}$ and $q \in \overline{\mathfrak{B}}_{\epsilon_F}$,
\[
g(k,q) = \left\lvert |k|-2\sqrt{2\epsilon_F-|q|^2}\right\rvert^{1/2} \left(2\epsilon_F-|q|^2\right)^{1/4}.
\]
In view of Lemma~\ref{offdiagonal} and the inequality $\left\lvert\widehat{\rho_{Q_{q}^{+-}}}(k)+\widehat{\rho_{Q_{q}^{-+}}}(k) \right\rvert =\left\lvert 2\mathfrak{R}\widehat{\rho_{Q_{q}^{+-}}}(k)\right\rvert\leq 2\left\lvert \widehat{\rho_{Q_{q}^{+-}}}(k)\right\rvert$, it holds
\begin{equation}
\int_{\overline{\mathfrak{B}}_{\epsilon_F}}\int_{\bb R}\left\lvert\widehat{\rho_{Q_{q}^{+-}}}(k)+\widehat{\rho_{Q_{q}^{-+}}}(k)\right\rvert^2g(k,q)\, dk\, dq \leq \frac{4}{C_{+-}} \utr((T-\epsilon_F)Q).
\label{offDia_tmp}
\end{equation}
Note that, by the change of variables $t=|q|/\sqrt{2\epsilon_F}$ and $w=\sqrt{1-t^2}$,
\[
\begin{aligned}
G(k) := \int_{\overline{\mathfrak{B}}_{\epsilon_F}} g(k,q)^{-1}dq & = 2\pi\int_{0}^1\frac{\sqrt{2\epsilon_F} \, t}{\left\lvert (2\epsilon_F)^{-1/2}|k|-2\sqrt{1-t^2}\right\rvert^{1/2} \left(1-t^2\right)^{1/4}} \, dt \\
& = 2\pi\int_{0}^1\frac{\sqrt{2\epsilon_F}\sqrt{w}}{\left\lvert (2\epsilon_F)^{-1/2}|k|-2w\right\rvert^{1/2} } \, dw,
\end{aligned}
\]
from which it is easy to see that $G$ is a bounded positive function tending to 0 as $|k|\to\infty$. Define
\[
\rho_{Q^{\mathrm{offdiag}}}(z):= \frac{1}{(2\pi)^2} \int_{\overline{\mathfrak{B}}_{\epsilon_F}}\left(\rho_{Q_{q}^{+-}}(z)+\rho_{Q_{q}^{-+}}(z)\right)dq.
\]
Using the isometry property of the Fourier transform, the Cauchy--Schwarz inequality and~\eqref{offDia_tmp}, we obtain the following bound on the $L^2$ norm of $\rho_{Q^{\mathrm{offdiag}}}$:
\begin{align}
  \left\|\rho_{Q^{\mathrm{offdiag}}}\right\|_{L^2(\mathbb{R})}^2 = \int_{\bb R}\left|\widehat{\rho_{Q^{\mathrm{offdiag}}}}(k) \right|^2dk &= \frac{1}{(2\pi)^4}\int_{\bb R}\left| \int_{\overline{\mathfrak{B}}_{\epsilon_F}}\left(\widehat{\rho_{Q_{q}^{+-}}}(k)+\widehat{\rho_{Q_{q}^{-+}}}(k)\right) dq\right|^2 dk \nonumber \\
    &\leq  \frac{1}{(2\pi)^4}\int_{\bb R}\left[\int_{\overline{\mathfrak{B}}_{\epsilon_F}}\left\lvert\widehat{\rho_{Q_{q}^{+-}}}(k)+\widehat{\rho_{Q_{q}^{-+}}}(k)\right\rvert^2 g(k,q) \, dq \right]G(k)\,dk \nonumber \\
    & \leq  \frac{1}{(2\pi)^4}\norm{G}_{L^{\infty}(\bb R)} \int_{\bb R}\int_{\overline{\mathfrak{B}}_{\epsilon_F}}\left\lvert\widehat{\rho_{Q_{q}^{+-}}}(k)+\widehat{\rho_{Q_{q}^{-+}}}(k)\right\rvert^2 g(k,q)\,dq\,dk \nonumber \\
    & \leq \frac{1}{4 \pi^4 C_{+-}}\norm{G}_{L^{\infty}(\bb R)} \utr((T-\epsilon_F)Q). \label{offD}
  \end{align}

\paragraph{Conclusion of the proof.}
The estimate~\eqref{LTtotalenergybound} finally follows from~\eqref{slash1}, \eqref{firstPieceDensityEstimationTotal}, \eqref{diagoestiTotal} and \eqref{offD}.

\subsection{Proof of Theorem \ref{thmexistence}}
\label{ProofThmExistence}

We first show in Section~\ref{CoulombSpaceLemmaSec} that the minimization set $\mathcal{F}_\nu$ for Coulomb interactions is not empty. We then prove in Section~\ref{thm:ExistenceMinimizerSec} the existence of minimizers for Yukawa and Coulomb interactions. The uniqueness of the densities relies on a technical result whose proof is postponed to Section~\ref{consistencyOfDensitySec}. Finally, we show in Section~\ref{sec:proof_cv_energies} that the Yukawa ground state converges to the Coulomb ground state when $m \to 0$.

\subsubsection{The set $\mathcal{F}_\nu$ is not empty}
\label{CoulombSpaceLemmaSec}

We prove in this section that the set ${\mathcal F}_\nu = \left\{Q\in \mathcal{K} \mid  \rho_Q - \nu \in \ud{\mathcal{C}}\right\}$ is non-empty for all $\nu \in L^1(\R)$ such that $|\cdot|\nu(\cdot) \in L^1(\R)$. We do so by explicitly constructing an element in~${\mathcal F}_\nu$. We distinguish the cases $\kappa \geq 0$ and $\kappa<0$, where $\kappa$ is the total charge per unit area of the defect:
\[
\kappa= \int_{-\infty}^{+\infty}\nu(z)\,dz.
\]

\paragraph{Non-negative total charge per unit area.} Consider first the case when $\kappa\geq 0$. We introduce an even cut-off function $\chi \in C_c^{\infty}(\bb R)$ such that
\[
0\leq \chi\leq 1, \qquad \int_{\mathbb{R}} \chi^2 = 1.
\]
For a parameter $\mu \in (\epsilon_F,+\infty)$ to be specified and almost all $q\in \bb R^2$, we then define the self-adjoint operator
\begin{equation}
Q_{\mu}:= \mathcal{U}\left(\int_{\bb R^2}^{\oplus}Q_{\mu,q}\,dq \right)\mathcal{U}^{-1},
\qquad
Q_{\mu,q} :=\mathds 1_{(\epsilon_F,\mu]}(T_q) \chi^2 \mathds 1_{(\epsilon_F,\mu]}(T_q).
\label{Qkappapositive}
\end{equation}
Note first that $Q_{\mu,q} = 0$ when $|q|^2>2\mu$. The operator inequality $0\leq \mathds 1_{(\epsilon_F,\mu]}(T_q) \leq  \mathds 1_{(\epsilon_F,+\infty)}(T_q)=1-\gamma_{0,q} $ also implies that $0 \leq Q_{\mu,q} \equiv Q_{\mu,q}^{++}\leq 1-\gamma_{0,q}$. Therefore, $q\mapsto Q_{\mu,q}\in L^{\infty}(\bb R^2;\mathcal{S}(L^2(\bb R)))$. Moreover, the Kato--Seiler--Simon inequality~\cite[Theorem 4.1]{BarryS}
 \begin{equation}
   \label{KatoSeilerSimon}
   \forall p\geq 2, \quad \norm{f(-i\nabla)g}_{\mathfrak{S}_p(L^2(\R))}\leq (2\pi)^{-1/p}\norm{g}_{L^p}\norm{f}_{L^p},
 \end{equation}
shows that
\begin{equation}
  \label{eq:decomposition_F_nu_non_empty}
  |T_q-\epsilon_F|^{1/2}Q_{\mu,q}|T_q-\epsilon_F|^{1/2} =\left(|T_q-\epsilon_F|^{1/2}\mathds 1_{(\epsilon_F,\mu]}(T_q) \chi \right)\left(\chi\mathds 1_{(\epsilon_F,\mu]}(T_q)|T_q-\epsilon_F|^{1/2}\right) \in\mathfrak{S}_1
\end{equation}
as the product of two Hilbert--Schmidt operators. We have therefore proven at this stage that $Q_{\mu,q}\in\mathcal{K}_q$ for all $q\in \bb R^2$.

In addition, since the kernel of the operator $|T_q-\epsilon_F|^{1/2}\mathds 1_{(\epsilon_F,\mu]}(T_q) \chi$ is $(z,z') \mapsto g_q(z-z')\chi(z')$ with
\[
g_q(z) = \frac{1}{2\pi} \int_\R \left|\frac{k^2+|q|^2}{2} - \epsilon_F \right|^{1/2} \mathds 1\left\{ \frac{k^2+|q|^2}{2} \in (\epsilon_F,\mu] \right\} \rme^{\rmi k z} \, dk,
\]
we obtain
\begin{equation}
\label{eq:trace_bar_finite_test_state}
\begin{aligned}
\utr\left((T-\epsilon_F)Q_{\mu}\right) & = \frac{1}{(2\pi)^2}\int_{\bb R^2} \m{Tr}_{L^2(\bb R)}\left(|T_q-\epsilon_F|^{1/2}Q_{\mu,q}|T_q-\epsilon_F|^{1/2}\right) dq = \frac{1}{(2\pi)^2} \int_{\bb R^2}\left(\int_\mathbb{R} \chi^2\right)\left(\int_\mathbb{R} \left|g_q\right|^2\right) dq\\
& = \frac{1}{(2\pi)^3} \int_{\bb R^2} \int_{\epsilon_F < \frac{|q|^2}{2}+\frac{k^2}{2} \leq \mu} \left(\frac{k^2+|q|^2}{2} - \epsilon_F \right) dk \, dq < \infty,
\end{aligned}
\end{equation}
which shows that $Q_{\mu}\in \mathcal{K}$.

It remains to prove that $\rho_{Q_{\mu}} - \nu \in \ud{\mathcal{C}}$. First, it easily follows from~\eqref{Qkappapositive} that $\rho_{Q_{\mu}}$ is smooth and compactly supported. By computations similar to the ones used to establish~\eqref{eq:trace_bar_finite_test_state}, and noting that $Q_{\mu ,q} \in \mathfrak{S}_1$ by a decomposition similar to~\eqref{eq:decomposition_F_nu_non_empty},
\[
\utr(Q_\mu) = \int_\R \rho_{Q_\mu} = \frac{1}{(2\pi)^3}\int_{\bb R^2} \int_{\epsilon_F < \frac{|q|^2}{2}+\frac{k^2}{2} \leq \mu} \,dk\,dq = \frac{\sqrt{2}}{3\pi^2}\left(\mu^{3/2}-\epsilon_F^{3/2}\right).
\]
There exists therefore a (unique) value~$\mu(\kappa)$ such that $\int_{\bb R}\rho_{Q_{\mu(\kappa)}}(z)\,dz = \kappa $. The latter equality is equivalent to~$\widehat{\rho_{Q_{\mu(\kappa)}}}(0)-\widehat{\nu}(0)=0$. Since $\widehat{\rho_{Q_{\mu(\kappa )}}}-\widehat{\nu}$ is $C^1$ and bounded, we can therefore conclude that the function $k \mapsto |k|^{-1} (\widehat{\rho_{Q_{\mu(\kappa)}}}(k)-\widehat{\nu}(k))$ is in $L^2(\mathbb{R})$, \textit{i.e.} $\rho_{Q_{\mu}} - \nu \in \ud{\mathcal{C}}$. This allows to conclude that $Q_{\mu(\kappa)} \in \mathcal{F}_{\nu}$, and so $\mathcal{F}_{\nu}$ is not empty.

\paragraph{Negative total charge per unit area.} We now consider the case when $\kappa <0$. We define the following self-adjoint operator, for a parameter $\alpha \in (0,+\infty)$ to be specified later on:
\[
Q_{\alpha} := \mathcal{U}\left(\int_{\bb R^2}^{\oplus}Q_{\alpha,q}\,dq \right)\mathcal{U}^{-1},
\qquad
Q_{\alpha,q} := - \alpha \gamma_{0,q} \chi^2 \gamma_{0,q}.
\]
It is easy to see that $-\gamma_{0,q}\leq Q_{\alpha,q} = Q_{\alpha,q}^{--} \leq 0$ and that $Q_{\alpha,q} = 0$ when $|q|^2>2\epsilon_F$, so that $q\mapsto Q_{\alpha,q}\in L^{\infty}(\bb R^2;\mathcal{S}(L^2(\bb R)))$. Moreover, $|T_q-\epsilon_F|^{1/2}Q_{\alpha,q}\in\mathfrak{S}_2$ by the Kato--Seiler--Simon inequality~\eqref{KatoSeilerSimon}, so that $|T_q-\epsilon_F|^{1/2}Q_{\alpha,q}|T_q-\epsilon_F|^{1/2}\in\mathfrak{S}_1$. This shows that $Q_{\alpha,q}\in\mathcal{K}_q$ for all $q\in \bb R^2$. In addition, by computations similar to the ones performed for the case $\kappa \geq 0$,
\[
\utr\left((T-\epsilon_F)Q_{\mu}\right) = \frac{\alpha}{(2\pi)^3} \int_{\bb R^2} \int_{0 \leq k^2 + |q|^2 \leq 2\epsilon_F} \left(\epsilon_F - \frac{k^2+|q|^2}{2}\right) dk \, dq < \infty,
\]
so that $Q_{\alpha}\in \mathcal{K}$.

It can be shown same as for the case $\kappa \geq 0$ that $\rho_{Q_\alpha}$ is smooth and compactly supported. Moreover,
\[
\int_{\mathbb{R}} \rho_{Q_\alpha} = -\frac{\alpha}{(2\pi)^3} \int_{\bb R^2} \int_{\frac{|q|^2}{2}+\frac{k^2}{2} \leq \epsilon_F} \,dk\,dq = -\frac{\alpha \sqrt{2} \epsilon_F^{3/2}}{3\pi^2},
\]
so that the choice $\alpha(\kappa) = 3\pi^2 \epsilon_{F}^{-3/2} |\kappa| / \sqrt{2}$ ensures that $\int_{\mathbb{R}} \rho_{Q_{\alpha(\kappa)}} = \kappa$. We can then conclude, by the same reasoning as for the case $\kappa \geq 0$, that $\rho_{Q_{\alpha(\kappa)}} - \nu \in \ud{\mathcal{C}}$, and finally that $\mathcal{F}_{\nu}$ is not empty.

\subsubsection{Existence of ground state for Yukawa (resp. Coulomb) interactions}
\label{thm:ExistenceMinimizerSec}

We write a detailed proof for Coulomb interactions, the result for Yukawa interactions following the same lines. We first construct a candidate minimizer $\overline{Q}$ by as the limit of some minimizing sequence for~\eqref{minimizationpbyoka} and next show that $\overline{Q}$ is an admissible state (\textit{i.e.} $\overline{Q} \in \mathcal{K}$). We finally prove that $\overline{Q}$ is a minimizer and that all minimizers share the same density.

\paragraph{Construction of a candidate minimizer.}
It is easy to see that the functional $\mathcal{E}_{\nu,0}(Q)$ is well defined on the non-empty set~${\mathcal F}_\nu$. Consider a minimizing sequence $\left\{Q_n \right\}_{n\geq 1} \subset \mathcal{F}_\nu$. There exists $C \in \mathbb{R}_+$ such that
\begin{equation}
  \label{uniformBound}
  \forall  n\geq 1, \quad \utr\left(\left(T-\epsilon_F\right)Q_n\right)\leq C, \quad \ud{D} (\rho_{Q_n} - \nu,\rho_{Q_n}-\nu) \leq C.
\end{equation}
Consider any exponent $1 < p < 5/3$. Inequality~\eqref{LTtotalenergybound} shows that the sequence of densities $\{\rho_{Q_{n}}\}_{n\geq 1} $ is uniformly bounded in $L^2(\bb R) + L^p(\bb R)$. Up to extraction, there exist
\[
\overline{Q} := \mathcal{U}^{-1}\left(\int_{\bb R^2}^{\oplus} \overline{Q}_{q}\,dq\right)\mathcal{U},
\qquad
\overline{\overline{\rho}}_Q \in L^2(\bb R) + L^p(\bb R),
\qquad
\widetilde \rho_{Q}-\nu \in \ud{\mathcal{C}}
\]
such that $\rho_{Q_n} \rightharpoonup \overline{\overline{\rho}}_{Q}$ weakly in $L^2(\bb R)+L^p(\bb R)$ and $\rho_{Q_n}-\nu \rightharpoonup \widetilde \rho_{Q}-\nu$ weakly in $\ud{\mathcal{C}}$, while $-\gamma_0 \leq \overline{Q} \leq 1-\gamma_0$ 
and $Q_n \stackrel{\ast}{\rightharpoonup} \overline{Q}$ in the following sense:
\begin{itemize}
\item for any operator-valued function $q\mapsto U_{q} \in L^1\left(\bb R^2; \mathfrak{S}_1\right)$,
  \begin{equation}
    \label{faible1_bis}
    \int_{\bb R^2}\m{Tr}_{L^2(\bb R)}\left(U_{q}Q_{n,q}\right)\,dq \toinfty \int_{\bb R^2}\m{Tr}_{L^2(\bb R)}\left(U_{q}\overline{Q}_{q}\right)dq;
  \end{equation}
\item for any operator-valued function $q\mapsto M_{q} \in L^2\left(\bb R^2; \mathfrak{S}_2\right)$,
  \begin{equation}
    \label{fabile4}
    \int_{\bb R^2}\m{Tr}_{L^2(\bb R)}\left(|T_q-\epsilon_F|^{1/2}Q_{n,q}M_{q}\right)\,dq \toinfty \int_{\bb R^2}\m{Tr}_{L^2(\bb R)}\left(|T_q-\epsilon_F|^{1/2}\overline{Q}_{q}M_{q}\right) dq;
  \end{equation}
  \begin{equation}
    \label{fabile2}
    \int_{\bb R^2}\m{Tr}_{L^2(\bb R)}\left(M_{q}|T_q-\epsilon_F|^{1/2}Q_{n,q}\right)\,dq \toinfty \int_{\bb R^2}\m{Tr}_{L^2(\bb R)}\left(M_{q}|T_q-\epsilon_F|^{1/2}\overline{Q}_{q}\right) dq;
  \end{equation}
  \begin{equation}
    \label{fabile3}
    \int_{\bb R^2}\m{Tr}_{L^2(\bb R)}\left(M_{q}Q_{n,q}|T_q-\epsilon_F|^{1/2}\right)\,dq \toinfty \int_{\bb R^2}\m{Tr}_{L^2(\bb R)}\left(M_{q}\overline{Q}_{q}|T_q-\epsilon_F|^{1/2}\right) dq.
  \end{equation}
\end{itemize}
The weak-$\ast$ convergence~\eqref{faible1_bis} is a consequence of the fact that the sequence $\{Q_n\}_{n \geq 1}$ is uniformly bounded in $L^{\infty}(\bb R^2;\mathcal{L}(L^2(\bb R)))$, whose pre-dual is $L^1\left(\bb R^2; \mathfrak{S}_1\right)$. The weak convergences~\eqref{fabile4} to~\eqref{fabile3} are a consequence of the inequality
\begin{equation}
  \label{infaible2}
  \int_{\bb R^2}\bnorm{ |T_q-\epsilon_F|^{1/2}Q_{n,q}}_{\mathfrak{S}_2}^2\,dq
  \leq \int_{\bb R^2}\m{Tr}_{L^2(\bb R)} \left(|T_q-\epsilon_F|^{1/2}(Q_{n,q}^{++}-Q_{n,q}^{--}) |T_q-\epsilon_F|^{1/2}\right)\,dq \leq C,
\end{equation}
which shows that the sequence of operator-valued functions $q\mapsto  |T_q-\epsilon_F|^{1/2}Q_{n,q}$ and $q\mapsto Q_{n,q}|T_q-\epsilon_F|^{1/2}$ are uniformly bounded in the Hilbert space $L^2\left(\bb R^2; \mathfrak{S}_2\right)$.

\paragraph{The state $\overline{Q}$ belongs to $\mathcal{K}$.}
Note first that the weak convergence of $\left\{q \mapsto \left| T_q-\epsilon_F\right|^{1/2}Q_{n,q}\right\}_{n\geq 1}$ to $q\mapsto \left| T_q-\epsilon_F\right|^{1/2}Q_{q} $ in $L^2(\bb R^2;\mathfrak{S}_2)$ implies
\[
\int_{\bb R^2}\bnorm{ |T_q-\epsilon_F|^{1/2}\overline{Q}_{q}}_{\mathfrak{S}_2}^2\,dq\leq \liminf_{n\to\infty}\int_{\bb R^2}\bnorm{ |T_q-\epsilon_F|^{1/2}Q_{n,q}}_{\mathfrak{S}_2}^2\,dq.
\]
It is therefore sufficient to show that $\utr\left(\left(T-\epsilon_F\right)\overline{Q}\right)<\infty$ in order to conclude that $\overline{Q} \in \mathcal{K}$. Consider to this end an orthonormal basis $\left\{\phi_i\right\}_{i\in \bb N}\subset H^1(\bb R)$ of $L^2(\bb R)$, and define, for $N \geq 1$ and $R > 0$, the family of operators
\[
M_{q,R}^N:= |T_q-\epsilon_F|^{1/2}\left(\sum_{i=1}^N \Ket{\phi_i}\Bra{\phi_i}\right) g_R(q),
\]
where
\[
g_R(q) := \left\{
\begin{aligned}
  1\quad \text{if $|q|^2<R$}\\
  \frac{1}{(1+|q|^2)^2}\quad  \text{if $|q|^2\geq R$}.
\end{aligned}
\right.
\]
A simple computation shows that $q\mapsto M_{q,R}^N$ is in $L^2(\bb R^2;\mathfrak{S}_2)$. Using~\eqref{fabile4} with $q\mapsto (1-\gamma_{0,q})M_{q,R}^N(1-\gamma_{0,q})$,
\[
\begin{aligned}
  0 \leq &\int_{\bb R^2}\m{Tr}_{L^2(\bb R)}\left(|T_q-\epsilon_F|^{1/2}\overline{Q}_{q}|T_q-\epsilon_F|^{1/2}(1-\gamma_{0,q})M_{q,R}^N(1-\gamma_{0,q}) \right)\,dq\\
  &= \int_{\bb R^2}\m{Tr}_{L^2(\bb R)}\left(|T_q-\epsilon_F|^{1/2}\overline{Q}_{q}^{++}|T_q-\epsilon_F|^{1/2}\left(\sum_{i=1}^N \Ket{\phi_i}\Bra{\phi_i}\right) \right) g_R(q)\,dq \\
  & =  \lim_{n\to\infty}\int_{\bb R^2}\sum_{i=1}^N\left\langle \phi_i\left|  |T_q-\epsilon_F|^{1/2}Q_{n,q}^{++}|T_q-\epsilon_F|^{1/2}\right| \phi_i \right\rangle g_R(q)\, dq\\
  &  \leq \liminf_{n\to\infty}\int_{\bb R^2}\m{Tr}_{L^2(\bb R)}\left(|T_q-\epsilon_F|^{1/2}Q_{n,q}^{++}|T_q-\epsilon_F|^{1/2}\right) dq \leq C,\\
\end{aligned}
\]
where the last inequality is a consequence of the uniform bound~\eqref{uniformBound}. We can then pass to the limits $R \to +\infty$ and $N \to +\infty$ with Fatou's Lemma and get
\[
0 \leq \int_{\bb R^2}\m{Tr}_{L^2(\bb R)}\left(|T_q-\epsilon_F|^{1/2}\overline{Q}_{q}^{++}|T_q-\epsilon_F|^{1/2}\right)dq
\leq \liminf_{n \to +\infty} \int_{\bb R^2}\m{Tr}_{L^2(\bb R)}\left(|T_q-\epsilon_F|^{1/2}\overline{Q}_{n,q}^{++}|T_q-\epsilon_F|^{1/2}\right) dq.
\]
A similar reasoning shows that
\[
0\leq -\int_{\bb R^2}\m{Tr}_{L^2(\bb R)}\left(|T_q-\epsilon_F|^{1/2}\overline{Q}_{q}^{--}|T_q-\epsilon_F|^{1/2} \right)dq\leq -\liminf_{n\to\infty} \int_{\bb R^2}\m{Tr}_{L^2(\bb R)}\left(|T_q-\epsilon_F|^{1/2}Q_{n,q} ^{--}|T_q-\epsilon_F|^{1/2}\right)dq.
\]
The combination of the last two inequalities shows that
\begin{equation}
  \label{kineBound}
  0 \leq \utr\left(\left(T-\epsilon_F\right)\overline{Q}\right) \leq \liminf_{n\to\infty}\utr\left(\left(T-\epsilon_F\right)Q_{n}\right) \leq C,
\end{equation}
so that $\overline{Q} \in \mathcal{K}$.

\paragraph{The state $\overline{Q}$ is a minimizer, and its density is uniquely defined.}
The densitiy $\rho_{\overline{Q}}$ is well defined in view of Proposition~\ref{prop:densityK} since $\overline{Q} \in \mathcal{K}$, but it is \textit{a priori} different from $\overline{\overline{ \rho}}_{Q} $ and $\widetilde{\rho}_Q$. The following lemma shows that all these densities actually coincide.

\begin{lemma}(Consistency of densities)
  \label{consistencyOfDensity}
  It holds $\rho_{\overline{Q}}-\nu =  \overline{\overline{\rho}}_{Q}-\nu  =\widetilde{\rho}_Q-\nu $ in $ \mathcal{D}'(\bb R)$. In particular, $\rho_{\overline{Q}} = \overline{\overline{\rho}}_Q$ as elements of~$L^2(\bb R)+L^p(\bb R)$, and $\rho_{\overline{Q}}-\nu = \widetilde{\rho}_Q-\nu$ as elements of~$\ud{\mathcal{C}}$.
\end{lemma}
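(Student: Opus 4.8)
\emph{Reduction to a duality statement.} Both $\overline{\overline{\rho}}_Q$ and $\widetilde\rho_Q$ are, by construction, limits of $\rho_{Q_n}$ in $\mathcal{D}'(\R)$ — weak convergence in $L^2(\R)+L^p(\R)$, respectively in $\ud{\mathcal{C}}$, implies convergence in $\mathcal{D}'(\R)$ — so by uniqueness of the limit $\overline{\overline{\rho}}_Q-\nu=\widetilde\rho_Q-\nu$ in $\mathcal{D}'(\R)$, and it only remains to prove $\rho_{\overline{Q}}=\overline{\overline{\rho}}_Q$. As both lie in $L^2(\R)+L^p(\R)$, it suffices to show
\[
\forall W\in C^\infty_c(\R),\qquad \int_\R\rho_{Q_n}W\ \xrightarrow[n\to\infty]{}\ \int_\R\rho_{\overline{Q}}W .
\]
Since $Q_n$ and $\overline{Q}$ are locally trace-class and decomposed by $\cU$ (Proposition~\ref{prop:densityK}), both sides equal, respectively, $\frac1{(2\pi)^2}\int_{\R^2}\m{Tr}_{L^2(\R)}(WQ_{n,q})\,dq$ and $\frac1{(2\pi)^2}\int_{\R^2}\m{Tr}_{L^2(\R)}(W\overline{Q}_q)\,dq$, the fibre traces being well defined because $\chi Q_{n,q}\chi,\ \chi\overline{Q}_q\chi\in\mathfrak{S}_1$ for any $\chi\in C^\infty_c(\R)$.

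\emph{Reduction to a bounded set of quasi-momenta.} Fix $W\in C^\infty_c(\R)$ and $S>\epsilon_F$. On $\R^2\setminus\overline{\mathfrak{B}}_S$ one has $\gamma_{0,q}=0$, $T_q-\epsilon_F\ge\frac{|q|^2}2-\epsilon_F>0$ and $Q_{n,q}=Q_{n,q}^{++}\ge0$, whence $\m{Tr}(Q_{n,q})\le\big(\frac{|q|^2}2-\epsilon_F\big)^{-1}\m{Tr}(|T_q-\epsilon_F|^{1/2}Q_{n,q}^{++}|T_q-\epsilon_F|^{1/2})$; integrating and using~\eqref{uniformBound} (resp.~\eqref{kineBound} for $\overline{Q}$) gives $\big|\frac1{(2\pi)^2}\int_{\R^2\setminus\overline{\mathfrak{B}}_S}\m{Tr}(WQ_{n,q})\,dq\big|\le\frac{\|W\|_{L^\infty}C}{S-\epsilon_F}$, uniformly in $n$, and the same bound for $\overline{Q}$. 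So it is enough to prove, for each fixed $S$, that $\int_{\overline{\mathfrak{B}}_S}\m{Tr}(WQ_{n,q})\,dq\to\int_{\overline{\mathfrak{B}}_S}\m{Tr}(W\overline{Q}_q)\,dq$.

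\emph{Passage to the limit on $\overline{\mathfrak{B}}_S$ — the main obstacle.} Decompose $Q_{n,q}=Q_{n,q}^{++}+Q_{n,q}^{--}+Q_{n,q}^{+-}+Q_{n,q}^{-+}$. The difficulty is that the natural test operators for the blocks — $(1-\gamma_{0,q})W(1-\gamma_{0,q})$, $\gamma_{0,q}W(1-\gamma_{0,q})$, etc. — are \emph{not} trace-class, and, because $\epsilon_F$ lies in the interior of $\sigma(T_q)$ for $q\in\mathfrak{B}_{\epsilon_F}$, conjugating them by $|T_q-\epsilon_F|^{-1/2}$ does not even make them Hilbert--Schmidt (the integral $\int_\R|k^2/2+|q|^2/2-\epsilon_F|^{-1}\,dk$ diverges at the Fermi momentum): this is precisely the Peierls-oscillation integrability issue. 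The remedy is a momentum cutoff. For the block $Q^{--}$ none is needed: $\gamma_{0,q}$ projects onto the modes $|k|\le\sqrt{2\epsilon_F-|q|^2}$, a momentum range bounded uniformly on $\overline{\mathfrak{B}}_{\epsilon_F}$, so by~\eqref{KatoSeilerSimon} $q\mapsto\mathds1_{\overline{\mathfrak{B}}_{\epsilon_F}}(q)\,\gamma_{0,q}W\gamma_{0,q}\in L^1(\R^2;\mathfrak{S}_1)$ and~\eqref{faible1_bis} yields $\int_{\overline{\mathfrak{B}}_S}\m{Tr}(WQ_{n,q}^{--})\,dq\to\int_{\overline{\mathfrak{B}}_S}\m{Tr}(W\overline{Q}_q^{--})\,dq$. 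For the blocks $Q^{++},Q^{+-},Q^{-+}$, which involve the infinite-band projector $1-\gamma_{0,q}$, write $1-\gamma_{0,q}=P_{q,\Lambda}+\mathds1_{(\Lambda,+\infty)}(T_q)$ with $P_{q,\Lambda}:=\mathds1_{(\epsilon_F,\Lambda]}(T_q)$, $\Lambda>\epsilon_F$: replacing one factor $1-\gamma_{0,q}$ by $P_{q,\Lambda}$ turns the test operator into an element of $L^1(\R^2;\mathfrak{S}_1)$ on $\overline{\mathfrak{B}}_S$ (both $P_{q,\Lambda}$ and $\gamma_{0,q}$ are now finite-band), so~\eqref{faible1_bis} handles the truncated contribution, while every remainder carries a factor $\mathds1_{(\Lambda,+\infty)}(T_q)$, on whose range $|T_q-\epsilon_F|^{-1/2}\le(\Lambda-\epsilon_F)^{-1/2}$; trading $|T_q-\epsilon_F|^{1/2}$ onto that factor and using~\eqref{KatoSeilerSimon} together with the uniform bound $\|q\mapsto|T_q-\epsilon_F|^{1/2}Q_{n,q}\|_{L^2(\R^2;\mathfrak{S}_2)}\le\sqrt C$ (which follows from $Q_{n,q}^2\le Q_{n,q}^{++}-Q_{n,q}^{--}$ and~\eqref{infaible2}, and holds for $\overline{Q}$ as well) shows each remainder is $O_S(\Lambda^{-1/4})$ uniformly in $n$. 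A $3\varepsilon$ argument (let $n\to\infty$ at fixed $\Lambda$, then $\Lambda\to\infty$) then gives the convergence of every block contribution.

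\emph{Conclusion.} Summing the four block contributions on $\overline{\mathfrak{B}}_S$, adding the uniformly small tail on $\R^2\setminus\overline{\mathfrak{B}}_S$ and letting $S\to\infty$, we obtain $\int_\R\rho_{Q_n}W\to\int_\R\rho_{\overline{Q}}W$ for every $W\in C^\infty_c(\R)$, hence $\rho_{\overline{Q}}=\overline{\overline{\rho}}_Q$, which with the first paragraph proves the lemma. The heart of the proof is the third step: because the gap closes at the Fermi surface the weighted-trace norms degenerate, so one cannot pass to the limit in $\m{Tr}(WQ_{n,q})$ by testing directly against $\overline{Q}$, and the momentum-truncation/uniform-remainder scheme — with its several mixed terms $P_{q,\Lambda}Q_{n,q}\mathds1_{(\Lambda,+\infty)}(T_q)$, $\mathds1_{(\Lambda,+\infty)}(T_q)Q_{n,q}P_{q,\Lambda}$, $\mathds1_{(\Lambda,+\infty)}(T_q)Q_{n,q}\mathds1_{(\Lambda,+\infty)}(T_q)$ to be estimated — is where the real bookkeeping lies.
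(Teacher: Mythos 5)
Your proof is correct and follows the same overall skeleton as the paper's: first, uniqueness of distributional limits gives $\overline{\overline{\rho}}_Q=\widetilde{\rho}_Q$ (the paper makes the implication ``weak convergence in $\ud{\mathcal{C}}$ $\Rightarrow$ convergence in $\mathcal{D}'$'' explicit by exhibiting the representative $f=-w''/2\in\ud{\mathcal{C}}$ of the pairing with $w$, which you assert but do not write out); second, a duality argument with a uniform tail bound for large $|q|$ and the weak convergences \eqref{faible1_bis}--\eqref{fabile2} on a bounded set of quasimomenta. Where you genuinely diverge is in the treatment of the near-Fermi-ball region, which is indeed the heart of the matter. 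The paper keeps $Q_{n,q}$ whole and instead decomposes the \emph{test operator} using a \emph{fixed} energy cutoff $\Pi_{1,q}=\mathds 1_{(-\infty,2\epsilon_F]}(T_q)$, $\Pi_{2,q}=1-\Pi_{1,q}$, writing $wQ_{n,q}=w\Pi_{2,q}Q_{n,q}+\Pi_{1,q}w\Pi_{1,q}Q_{n,q}+\Pi_{2,q}w\Pi_{1,q}Q_{n,q}$; the first two pieces are handled exactly as your truncated terms, and the cross term is shown to lie in $L^\infty(\overline{\mathfrak{B}}_{\epsilon_F+c};\mathfrak{S}_1)$ \emph{exactly} (no remainder) via a commutator identity, $\Pi_{2,q}(T_q-\epsilon_F)^{-1}[\tfrac{d^2}{dz^2},w]\Pi_{1,q}$, exploiting the smoothness of $w$ (Lemma~\ref{lem:decomposition_Pi2_w_Pi1}). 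You instead split $Q_{n,q}$ into its four $\gamma_{0,q}$-blocks and use a \emph{variable} cutoff $\Lambda\to\infty$, paying for the absence of the commutator lemma with an $O_S(\Lambda^{-1/4})$ remainder (your computation $\|\,W\,\mathds 1_{(\Lambda,\infty)}(T_q)\,|T_q-\epsilon_F|^{-1/2}\|_{\mathfrak{S}_2}\lesssim\|W\|_{L^2}\Lambda^{-1/4}$ by Kato--Seiler--Simon is correct and uniform on $\overline{\mathfrak{B}}_S$) and a $3\varepsilon$ argument. Both routes work; the paper's buys a cleaner limit (each piece converges outright, no double limit in $n$ and $\Lambda$) at the price of the commutator lemma, while yours avoids that lemma and only needs $W\in L^2\cap L^\infty$ rather than two derivatives of $w$, at the price of more remainder bookkeeping. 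The only points you should make explicit in a polished write-up are the cyclicity manipulations identifying $\m{Tr}(WQ_{n,q}^{\alpha\beta})$ with $\m{Tr}(\Pi_\beta W\Pi_\alpha Q_{n,q})$ (justified, e.g., on the finite-rank approximants of Lemma~\ref{continuityLemma}) and the uniform-in-$n$ bound $\int_{\R^2}\|\,|T_q-\epsilon_F|^{1/2}Q_{n,q}\|_{\mathfrak{S}_2}^2\,dq\le C$, which you correctly trace back to \eqref{infaible2}.
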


We postpone the proof of this result to Section \ref{consistencyOfDensitySec}. We use it to obtain that, since $\ud{D}(\cdot,\cdot) $ defines an inner product on $\ud{\mathcal{C}}$,
\begin{equation}
  \label{Dbound}
  \ud{D}(\rho_{\overline{Q}}-\nu,\rho_{\overline{Q}}-\nu) \leq \liminf_{n\to\infty}\ud{D}(\rho_{Q_n}-\nu,\rho_{Q_n}-\nu).
\end{equation}
This shows in particular that $\overline{Q}$ is an admissible state (\textit{i.e.} $\overline{Q} \in \mathcal{F}_\nu$). Moreover, \eqref{kineBound} and~\eqref{Dbound} imply that the minimizing sequence $\left\{Q_n\right\}_{n\geq 1} $ is such that
\begin{equation}
  \label{minimialEnergyINF}
  \mathcal{E}_{\nu,0}(\overline{Q}) \leq \liminf_{n\to\infty}\mathcal{E}_{\nu,0}(Q_n),
\end{equation}
which shows that $\overline{Q}$ is a minimizer of~\eqref{minimizationpb}.

Let us finally prove that all minimizers share the same density. Consider two minimizers $\overline{Q}_1$ and $\overline{Q}_2$. By convexity of $\mathcal{F}_{\nu}$, it holds $\frac{1}{2}\left(\overline{Q}_1+\overline{Q}_2\right) \in \mathcal{F}_{\nu}$. Moreover,
\[
\cl{E}_{\nu,0}\left(\frac{\overline{Q}_1+\overline{Q}_2}{2}\right) =\frac{1}{2}\cl{E}_{\nu,0}\left(\overline{Q}_1\right) +\frac{1}{2}\cl{E}_{\nu,0}\left(\overline{Q}_2\right) - \frac{1}{4}\ud{D}\left(\rho_{\overline{Q}_1}-\rho_{\overline{Q}_2},\rho_{\overline{Q}_1}-\rho_{\overline{Q}_2}\right),
\]
which shows that $\ud{D}\left(\rho_{\overline{Q}_1}-\rho_{\overline{Q}_2},\rho_{\overline{Q}_1}-\rho_{\overline{Q}_2}\right) = 0$. This implies that all miminizers share the same density.

\subsubsection{Convergence of Yukawa to Coulomb}
\label{sec:proof_cv_energies}

\paragraph{Monotonicity of the ground state.}
Fix $\nu \in H^{-1}(\mathbb{R})$ and $m_1 \geq m_2>0$. Note first that, for any $f\in H^{-1}(\bb R)$,
\begin{equation}
  \label{nonincreasingD}
  \ud{D_{m_1}} (f,f) \leq \ud{D_{m_2}} (f,f) \leq \ud{D}(f,f),
\end{equation}
with the convention that $\ud{D}(f,f):= +\infty$ if $f\notin \ud{\cl{C}}$. Since $\rho_Q-\nu \in \ud{\mathcal{C}_m}$ for any $m>0$ when $Q\in \mathcal{K}$ (see Proposition~\ref{prop:densityK}), it holds $0 \leq \cl{E}_{\nu,m_1}(Q) \leq\cl{E}_{\nu,m_2}(Q)\leq \cl{E}_{\nu,0}(Q)$ for any $Q\in \mathcal{K}$. This immediately implies that
\begin{equation}
  \label{compareMinimizer}
  \forall m_1 \geq m_2\geq 0, \qquad 0\leq  I_{\nu,m_1}\leq I_{\nu,m_2}\leq I_{\nu,0},
\end{equation}
which proves that $m \mapsto I_{\nu,m}$ is non-increasing on~$(0,+\infty)$.

\paragraph{Continuity of the ground state.}
Fix $m>0$ and $\delta m > 0$. Denote by $\overline{Q}_m$ (resp. $\overline{Q}_{m+\delta m}$) one of the minimizers of the energy functional $\cl{E}_{\nu,m}$ (resp. $\cl{E}_{\nu,m+\delta m}$) on~$\mathcal{K}$. Then, in view of the monotonicity property~\eqref{compareMinimizer},
\[
\begin{aligned}
0\leq I_{\nu,m}-I_{\nu,m+\delta m} & = \cl{E}_{\nu,m}\left(\overline{Q}_m\right)-\cl{E}_{\nu,m+\delta m}\left(\overline{Q}_{m+\delta m}\right) \\
& \leq \cl{E}_{\nu,m}\left(\overline{Q}_{m+\delta m }\right) -\cl{E}_{\nu,m+\delta m}\left(\overline{Q}_{m+\delta m}\right)\\
& = \frac12 \ud{D_{m}}\left(\rho_{\overline{Q}_{m+\delta m }}-\nu,\rho_{\overline{Q}_{m+\delta m }}-\nu \right) - \ud{D_{m+\delta m}}\left(\rho_{\overline{Q}_{m+\delta m }}-\nu,\rho_{\overline{Q}_{m+\delta m }}-\nu \right) \\
& = \int_\mathbb{R} \frac{(m+\delta m)^2-m^2}{k^2+m^2} \frac{|\widehat{\rho_{\overline{Q}_{m+\delta m }}}(k)-\widehat{\nu}(k)|^2}{k^2+(m+\delta m)^2} \, dk \\
& \leq \frac12 \left[\left(1+\frac{\delta m}{m}\right)^2-1\right] \ud{D_{m+\delta m}}\left(\rho_{\overline{Q}_{m+\delta m }}-\nu,\rho_{\overline{Q}_{m+\delta m }}-\nu \right) \\
& \leq \left[\left(1+\frac{\delta m}{m}\right)^2-1\right] I_{\nu,m+\delta m} \leq \left[\left(1+\frac{\delta m}{m}\right)^2-1\right] I_{\nu,m}.
\end{aligned}
\]
A similar inequality can be obtained for $\delta m < 0$ sufficiently small. This allows to conclude that $m \mapsto I_{\nu,m}$ is continuous on~$(0,+\infty)$.

\paragraph{Limit as $m \to +\infty$.}
Fix $\nu \in H^{-1}(\mathbb{R})$. Note that $0 \in \mathcal{K}$, so that
\[
0 \leq I_{\nu,m} \leq \cl{E}_{\nu,m}(0) = \ud{D_{m}}(\nu,\nu) = 2\int_\mathbb{R} \frac{|\widehat{\nu}(k)|^2}{k^2+m^2} \, dk.
\]
The latter integral converges to~0 as $m \to +\infty$ by dominated convergence. This shows that $I_{\nu,m} \to 0$ as $m \to +\infty$.

\paragraph{Limit as $m \to 0$.}
Note first that \eqref{compareMinimizer} implies that $\lim_{m\to 0}I_{\nu,m}\leq I_{\nu,0}$ for any $\nu \in H^{-1}(\mathbb{R})$. Let us now prove the converse inequality under the conditions $\nu \in L^1(\R)$ and $|\cdot|\nu(\cdot) \in L^1(\R)$. Denote one of the minimizers of~\eqref{minimizationpbyoka} by $Q_{\nu,m}$. By~\eqref{compareMinimizer},
\[
\forall m >0, \qquad I_{\nu,m} = \utr\left(\left(T-\epsilon_F\right)Q_{\nu,m}\right) + \frac{1}{2}\ud{D_m}(\rho_{Q_{\nu,m}}-\nu,\rho_{Q_{\nu,m}}-\nu)\leq I_{\nu,0}.
\]
In particular, $\utr\left(\left(T-\epsilon_F\right)Q_{\nu,m}\right)$ is uniformly bounded. By arguments similar to the ones used to establish~\eqref{kineBound} and Lemma~\ref{consistencyOfDensity}, there exists $Q_{\nu,0} \in \mathcal{K}$ and a subsequence $(Q_{\nu,m_k})_{k \in \bb N}$ with $m_k \to 0$ such that~\eqref{eq:cv_Q_mk_1} and~\eqref{eq:cv_Q_mk_2} hold true, $\rho_{Q_{\nu,m_k}} \rightharpoonup \rho_{Q_{\nu,0}}$ weakly in $L^2(\mathbb{R}) + L^p(\mathbb{R})$ (for a fixed $1 < p < 5/3$) and
\begin{equation}
  \label{eq:bound_kinenergy_for_m_to_0}
  \utr\left((T-\epsilon_F)Q_{\nu,0}\right)\leq \lim_{k \to \infty}\utr\left((T-\epsilon_F)Q_{\nu,m_k}\right).
\end{equation}
We prove in the sequel that $Q_{\nu,0}$ is indeed a minimizer of~$\mathcal{E}_{\nu,0}$.

In order to do so, we need upper bounds on the Coulomb term~$\ud{D}(\rho_{Q_{\nu,0}}-\nu,\rho_{Q_{\nu,0}}-\nu)$. Since $\left\{\frac{\widehat{\rho}_{Q_{\nu,m_{k}}}-\widehat{\nu}}{\sqrt{|\cdot|^2+m_{k}^2}}\right\}_{k \geq 1}$ is bounded in $L^2(\bb R)$, it is possible to extract a subsequence, still denoted by $\left\{\frac{\widehat{\rho}_{Q_{\nu,m_{k}}}-\widehat{\nu}}{\sqrt{|\cdot|^2+m_{k}^2}}\right\}_{k \geq 1}$ with some abuse of notation, which weakly converges in $L^2(\bb R) $ to some function~$\widehat{w}$. For a given function $\Phi \in C^\infty_c(\mathbb{R})$, we introduce the sequence $f_{m_{k}}:=(-\Delta+m_{k}^2)^{1/2}\Phi$ and $f:= |\Delta|^{1/2}\Phi$. A simple computation shows that $f_{m_{k}} \in L^2(\bb R)$ for any $k\geq 1$ and $\left\|f_{m_{k}}-f\right\|_{L^2} \to 0$ as $k \to +\infty$.
Therefore,
\begin{equation}
  \label{convergenceL2}
  \langle \rho_{Q_{\nu,m_{k}}}-\nu, \Phi \rangle =
  \left(\frac{\widehat{\rho}_{Q_{\nu,m_{k}}}-\widehat{\nu}}{\sqrt{|\cdot|^2+m_{k}^2}}, \widehat{f_{m_{k}}}\right)_{L^2(\bb R)} \xrightarrow[k \to \infty]{} \left( \widehat{w}, \widehat{f}\right)_{L^2(\bb R)} = \left( \widehat{w}, |\cdot|\widehat{\Phi}\right)_{L^2(\bb R)}.
\end{equation}
Note that $|\cdot|\widehat{w} \in H^{-1}(\mathbb{R})$, hence its inverse Fourier transform $\mathcal{F}^{-1}\left(|\cdot|\widehat{w}\right)$ is well defined in $\mathcal{S}'(\bb R)$, and
\[
\left( \widehat{w}, |\cdot|\widehat{\Phi}\right)_{L^2(\bb R)} = \langle \mathcal{F}^{-1}\left(|\cdot|\widehat{w}\right), \Phi \rangle.
\]
The weak convergence $\rho_{Q_{\nu,m_k}} \rightharpoonup \rho_{Q_{\nu,0}}$ in $L^2(\mathbb{R}) + L^p(\mathbb{R})$ (for a fixed $1 < p < 5/3$) then allows to identify $\rho_{Q_{\nu,0}}-\nu$ and $\mathcal{F}^{-1}\left(|\cdot|\widehat{w}\right)$ in the sense of distributions. This shows that $\rho_{Q_{\nu,0}}-\nu\in \ud{\mathcal{C}}$, which together with~\eqref{eq:bound_kinenergy_for_m_to_0} implies that $Q_{\nu,0} \in\mathcal{F}_{\nu}$. Moreover, by the properties of weak limits in Hilbert spaces,
\[
\ud{D}\left(\rho_{Q_{\nu,0}}-\nu,\rho_{Q_{\nu,0}}-\nu\right) = 2\left\|\widehat{w}\right\|^2_{L^2} \leq 2 \liminf_{k \to +\infty} \left\| \frac{\widehat{\rho}_{Q_{\nu,m_{k}}}-\widehat{\nu}}{\sqrt{|\cdot|^2+m_{k}^2}}\right\|_{L^2}.
\]
This inequality combined with~\eqref{eq:bound_kinenergy_for_m_to_0} shows that $I_{\nu,0}\leq \mathcal{E}_{\nu,0}(Q_{\nu,0}) \leq \liminf_{m_{k} \to 0} I_{m_{k}}$. We can finally conclude that $Q_{\nu,0}$ is a minimizer of~\eqref{minimizationpb}, and that $\lim_{m\to 0}I_{\nu,m} = I_{\nu,0}$ when $\nu \in L^1(\R)$ and $|\cdot|\nu(\cdot) \in L^1(\R)$.

\subsection{Proof of Theorem~\ref{yukawaMinimizerthm}}
\label{thmYukawaMinimizer}

Theorem~\ref{thmexistence} shows that all minimizers of~\eqref{minimizationpbyoka} share the same density~$\rho_{\nu, m}$. To prove Theorem~\ref{yukawaMinimizerthm}, we proceed as follows. We first construct a potential $V_{\nu, m}$ from $\rho_{\nu,m}-\nu \in H^{-1}(\bb R)$. We show in Section~\ref{sec:min_lin_fctnal} that $V_{\nu, m}$ enjoys some regularity property, and that all the minimizers of (\ref{minimizationpbyoka}) also minimize a linear functional. We next construct a defect state by defining a spectral projector $\overline\gamma_m$ associated with the operator $T+V_{\nu, m}$, and show that $\overline{Q}_m:=\overline \gamma_m-\gamma_0$ is indeed in the kinetic energy space $\mathcal{K}$ (see Section~\ref{DefectStateSec}). We finally show in Section~\ref{sec:form_min} that $\overline{Q}_m$ is the unique minimizer of~\eqref{minimizationpbyoka} and that it satisfies a self-consistent equation. The proofs of some technical results are postponed until Sections~\ref{Secprop:defectstateQ_m} and~\ref{regularizedOpPropProof}.

\subsubsection{The minimizers of~\eqref{minimizationpbyoka} also mimimize a linear functional} 
\label{sec:min_lin_fctnal}
Given a minimizer $Q_{\nu, m}$ of~\eqref{minimizationpbyoka}, we consider the unique potential $V_{\nu, m}\in H^1(\bb R)$ satisfying the one-dimensional Yukawa equation
 \begin{equation}
-V_{\nu, m}''+m^2V_{\nu, m} = 2\left(\rho_{\nu,m}-\nu\right) \in H^{-1}(\bb R).
\label{potentialRHO}
\end{equation}
The potential $V_{\nu, m}$ defined in~\eqref{potentialRHO} has an explicit expression $V_{\nu, m} = \left(\rho_{\nu,m}-\nu\right)\star \frac{\mathrm{e}^{-m|\cdot|}}{m}$. Let us first give some properties of the potential~$V_{\nu,m}$, which will be useful for the subsequent analysis. Since $H^1(\bb R) \subset L^{\infty}(\bb R)$, we have $V_{\nu,m}\in L^{\infty}(\bb R)$ and that $\lim_{|z| \to +\infty}V_{\nu, m}(z) = 0$. We denote by $c_{V_{\nu, m}}:= \norm{V_{\nu, m}}_{L^{\infty}}$. Note that $V_{\nu,m}$ also belongs to $L^p(\bb R)$ for $p\in [2,+\infty]$ by interpolation. Finally, by the Kato--Seiler--Simon inequality (\ref{KatoSeilerSimon}), it is easy to see that
\[
\left\| V_{\nu, m}\left(1-\frac{d^2}{dz^2}\right)^{-1}\right\|_{\mathfrak{S}_2}\leq \frac{1}{2}\norm{V_{\nu, m}}_{L^2(\bb R)}. 
\]
In particular $V_{\nu, m}$ is $\left(-\frac{d^2}{dz^2}\right)$--compact. 

Let us now show that any minimizer $Q_{\nu, m}$ of~\eqref{minimizationpbyoka} minimizes a linear functional on $\mathcal{K}$. Since $Q_{\nu, m}$ minimizes~\eqref{minimizationpbyoka}, it holds, for an arbitrary state $Q\in \mathcal{K}$ and $0\leq t \leq 1$,
\begin{align*}
\mathcal{E}_m\left((1-t){Q}_{\nu, m}+tQ\right) -\mathcal{E}_m\left(Q_{\nu, m}\right)  \geq 0.
\end{align*}
A simple calculation then shows that $Q_{\nu, m}$ minimizes the following linear functional on $\mathcal{K}$:
\[
F(Q) := \utr\left(\left(T-\epsilon_F\right) Q\right) +\ud{D_m}(\rho_{\nu,m}-\nu,\rho_{Q}),
\]
which can also be written as follows in view of the definition and the regularity of $V_{\nu, m}$:
\begin{equation}
F(Q) =\utr\left(\left(T-\epsilon_F\right) Q\right) + \int_{\bb R}V_{\nu, m}(z)\rho_{Q}(z)\,dz.
\label{minimizedF}
\end{equation}

\subsubsection{Construction of a defect state $\overline{Q}_m$}
\label{DefectStateSec}
We construct a defect state $\overline{Q}_{\nu, m}$ as follows. First of all, by the Kato-Rellich theorem (see for example \cite[Theorem 9.10]{Helffer}), for each $q\in\bb R^2$, $T_q+V_{\nu, m} $ is a self-adjoint operator on $L^2(\bb R)$ with domain $ H^2(\bb R)$ and form domain $H^1(\bb R)$. Let us introduce a spectral projector as follows:
\[
\overline\gamma_m := \mathds 1_{(-\infty,\epsilon_F)}\left(T+V_{\nu, m}\right)=\mathcal{U}^{-1}\left(\int_{\bb R^2}^{\oplus}\overline\gamma_{m,q}\,dq \right)\mathcal{U} ,
\qquad 
\overline\gamma_{m,q}:= \mathds 1_{(-\infty,\epsilon_F)}\left(T_q+V_{\nu, m}\right).
\]
By construction, $\overline \gamma_{m,q} = 0$ and $\gamma_{0,q} = 0$ when $q\in \bb R^2\backslash \overline{\mathfrak{B}}_{\epsilon_F+c_{V_{\nu, m}}}$.

Let us study the spectral structure of $T_q+V_{\nu, m}$ as a function of $|q|$ (See Figure~\ref{FigSpectreT_qV_m}). It is clear that $T_q+V_{\nu, m}$ depends analytically on $|q|$, and has essential spectrum $\left[|q|^2/2, +\infty\right)$ since $V_{\nu,m}$ is $T_q$--compact. The potential $V_{\nu, m}$ introduces at most countably many eigenvalues below $|q|^2/2$, and $|q|^2/2$ is the only possible accumulation point of these eigenvalues.
\begin{figure}[h!]
\centering
\includegraphics[height=0.5\textwidth ]{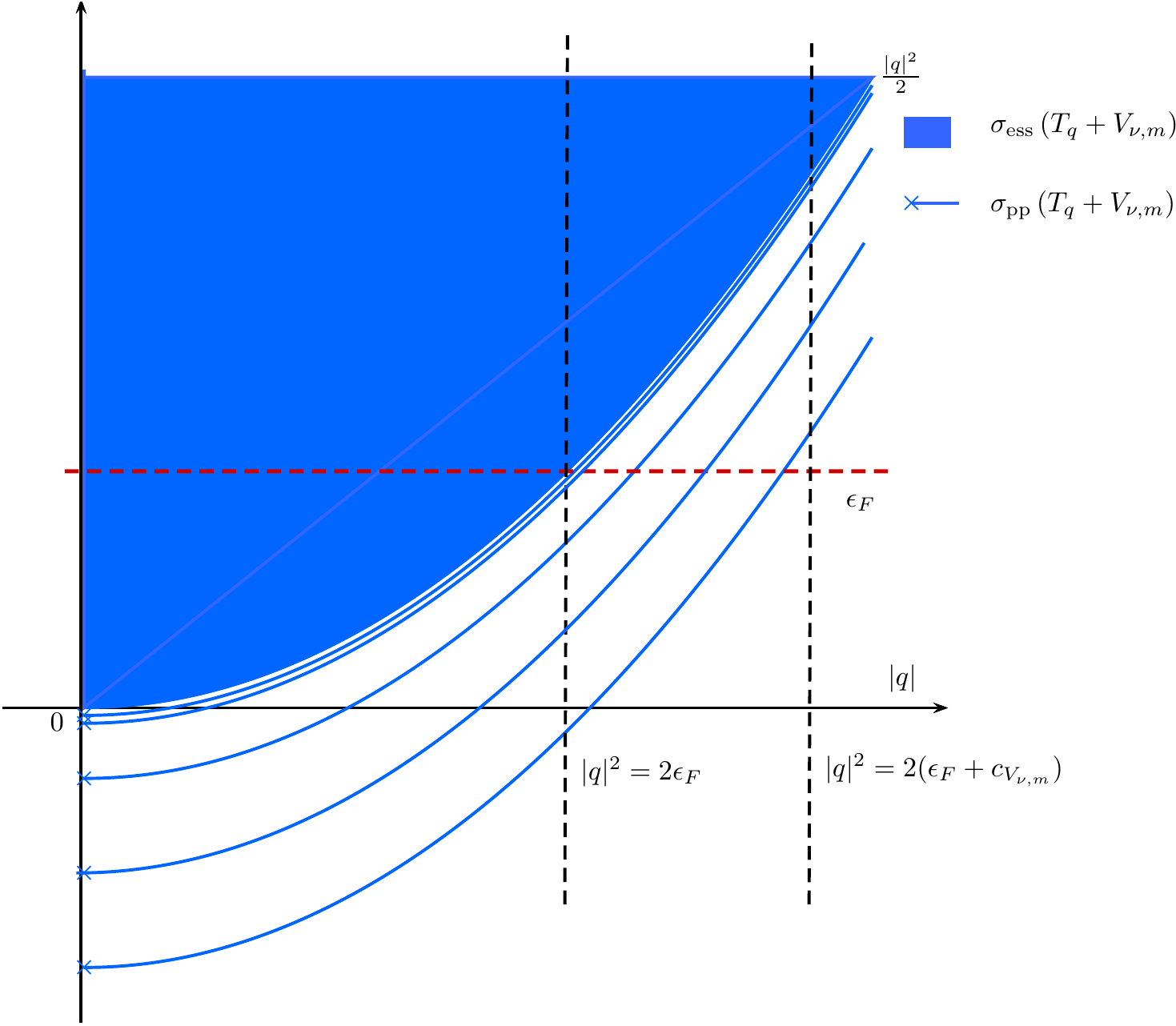}~
\caption{The spectrum of $T_q+V_{\nu, m}$ as a function of $|q|$. When $|q|^2 > 2\epsilon_F$, there are at most finitely many states in $\overline \gamma_{m,q}$. When $|q|^2 > 2(\epsilon_F+c_{V_{\nu, m}})$, there is no state in $\overline \gamma_{m,q}$.  }
\label{FigSpectreT_qV_m}
\end{figure}
Denoting by $\sigma_{\mathrm{pp}}(A)$ the set of eigenvalues of an operator~$A$, we partition the set $\left\{q\in \bb R^2\right\}$ by distinguishing whether the Fermi level~$\epsilon_F$ is an eigenvalue of $T_q+V_{\nu, m}$  or not:
\begin{equation}
 \mathfrak{M}_{\mathrm{pp}}:=\left\{ q\in \bb R^2\left|\, \epsilon_F\in \sigma_{\mathrm{pp}}(T_q+V_{\nu, m})\right.\right\} , \quad \mathfrak{M}_{\mathrm{pp}}^{\complement}:= \bb R^2 \backslash \mathfrak{M}_{\mathrm{pp}}:= \left\{ q\in \bb R^2\left|\, \epsilon_F\notin \sigma_{\mathrm{pp}}(T_q+V_{\nu, m})\right.\right\}.
\label{ppset}
\end{equation}
It is easy to see that $\mathfrak{M}_{\mathrm{pp}}$ has Lebesgue measure zero, since $T_0 + V_{\nu,m}$ has at most countably many eigenvalues, and $\epsilon_F\in \sigma_{\mathrm{pp}}(T_q+V_{\nu, m})$ if and only if $\epsilon_F-|q|^2/2 \in \sigma_{\mathrm{pp}}(T_0+V_{\nu, m})$. The reason for the separation of $\mathbb{R}^2$ into $\mathfrak{M}_{\mathrm{pp}}$ and $\mathfrak{M}_{\mathrm{pp}}^{\complement}$ is technical: we will need a regularized operator to approximate the spectral projector $\overline{\gamma}_{m,q}$ for $q\in \overline{\mathfrak{B}}_{\epsilon_F}$. For this reason we need $\epsilon_F$ not to be an eigenvalue of $T_q+V_{\nu, m}$ (recall~\cite[Theorem VIII.24]{ReeSim80}). 
Remark that when $q\in \overline{\mathfrak{B}}_{\epsilon_F}$, \emph{i.e.} $|q|^2 \leq 2\epsilon_F$, elements of $\mathfrak{M}_{\mathrm{pp}}^{\complement}$ are associated with non-negative eigenvalues of the operator $T_0 + V_{\nu,m} = -\frac{1}{2}\frac{d^2}{dz^2}+V_{\nu, m}$ embedded in its essential spectrum. It can be shown that even in dimension~1 there exist potentials~$V\in H^1(\bb R)$ such that $- \frac{1}{2}\frac{d^2}{dz^2}+V$ has positive eigenvalues embedded in the essential spectrum (see~\cite{vonNeuman}). In our case, it is highly non trivial to prove the absence of positive eigenvalues of the operator $-\frac{1}{2}\frac{d^2}{dz^2}+V_{\nu, m}$. Standard techniques to this end, such as Mourre estimates~\cite{Mourre1981} or Carleman estimates~\cite{Koch2006}, which involve estimating the decay property of $z V_{\nu, m}'(z)$ when $|z|\to +\infty$, are not immediately applicable as the density $\rho_{\nu, m}-\nu $ is only in $H^{-1}(\mathbb{R})$ \textit{a priori}. In order to avoid this difficulty we construct a test defect state by eliminating the points $q\in \mathfrak{M}_{\mathrm{pp}}$:
\begin{equation}
\overline{Q}_m:= \mathcal{U}^{-1}\left(\int_{\bb R^2}^{\oplus}\overline{Q}_{m,q}\,dq\right)\cl{U},
\label{defectstateQ_m}
\end{equation}
where, recalling that $\overline \gamma_{m,q}= 0$ when $q\in \bb R^2\backslash \overline{\mathfrak{B}}_{\epsilon_F+c_{V_{\nu,m}}} $,
\begin{equation}
\overline{Q}_{m,q}:=\left\{ \begin{aligned}
\overline\gamma_{m,q}- \gamma_{0,q}, & \quad \text{for } q\in \overline{\mathfrak{B}}_{\epsilon_F+c_{V_{\nu,m}}}\cap \mathfrak{M}_{\mathrm{pp}}^{\complement}, \\
0, & \quad \text{for } q\in \bb R^2\backslash\left(\overline{\mathfrak{B}}_{\epsilon_F+c_{V_{\nu,m}}}\cap \mathfrak{M}_{\mathrm{pp}}^{\complement}\right) .
\end{aligned}\right.
\label{zeroBarQ}
\end{equation}
Denoting by $P^\perp = 1-P$ for an orthogonal projector~$P$, an easy algebraic calculation shows that
\begin{equation}
- \left(\overline{Q}_{m,q}\right)^2  = \overline{\gamma}_{m,q}^{\perp}\overline{Q}_{m,q}\overline{\gamma}_{m,q}^{\perp} - \overline\gamma_{m,q}\overline{Q}_{m,q}\overline\gamma_{m,q},
\label{projetorsAlgebra1}
\end{equation}
and
\begin{equation}
\left(\overline{Q}_{m,q}\right)^2 =\gamma_{0,q}^{\perp}\overline{Q}_{m,q} \gamma_{0,q}^{\perp}- \gamma_{0,q}\overline{Q}_{m,q}\gamma_{0,q} = \overline{Q}_{m,q}^{++}-\overline{Q}_{m,q}^{--}.
\label{projetorsAlgebra3}
\end{equation}

\begin{prop}
\label{prop:defectstateQ_m}
The defect state $\overline{Q}_m$ defined in (\ref{defectstateQ_m}) belongs to the kinetic energy space $\mathcal{K}$. Moreover, the energy functional $F(\overline{Q}_m )$ can be written as
\[
\begin{aligned}
& F(\overline{Q}_m ) =\utr\left((T-\epsilon_F)\overline{Q}_m \right) + \int_{\bb R}V_{\nu,m} \rho_{\overline{Q}_m}\\
&= \frac{1}{(2\pi)^2}\int_{\bb R^2}\m{Tr}_{L^2(\bb R)}\left(\left|T_q-\epsilon_F+V_{\nu,m} \right|^{1/2}\left(\overline{\gamma}_{m,q}^{\perp}\overline{Q}_{m,q}\overline{\gamma}_{m,q}^{\perp}-\overline{\gamma}_{m,q}\overline{Q}_{m,q}\overline{\gamma}_{m,q}\right)\left|T_q-\epsilon_F+V_{\nu,m} \right|^{1/2}\right) dq. 
\end{aligned}
\]
\end{prop}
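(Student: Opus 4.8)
\emph{Plan of proof.} The statement has two parts --- the membership $\overline{Q}_m\in\mathcal{K}$, and the trace formula for $F(\overline{Q}_m)$ --- and I would treat them in this order. For the membership, the easy ingredients come first. Since $\overline{\gamma}_{m,q}$ and $\gamma_{0,q}$ are orthogonal projectors, $\|\overline{Q}_{m,q}\|_{\mathcal{L}(L^2(\R))}\le 1$, and as the difference $P-P'$ of two projectors always satisfies $-P'\le P-P'\le 1-P'$, the pointwise constraint $-\gamma_{0,q}\le\overline{Q}_{m,q}\le 1-\gamma_{0,q}$ holds for a.a.\ $q$ (the fibers in $\mathfrak{M}_{\mathrm{pp}}$, of measure zero, and those outside $\overline{\mathfrak{B}}_{\epsilon_F+c_{V_{\nu,m}}}$, where $\gamma_{0,q}=0$, are covered by the convention $\overline{Q}_{m,q}=0$); hence $q\mapsto\overline{Q}_{m,q}\in L^\infty(\R^2;\mathcal{S}(L^2(\R)))$, and it remains only to verify the Schatten requirements defining $\mathcal{K}_q$ and the finiteness of $\utr((T-\epsilon_F)\overline{Q}_m)$. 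For $q$ with $|q|^2/2>\epsilon_F$ one has $\gamma_{0,q}=0$, the essential spectrum $[|q|^2/2,+\infty)$ of $T_q+V_{\nu,m}$ lies strictly above $\epsilon_F$, and (since $V_{\nu,m}$ is $T_q$-compact) $\overline{\gamma}_{m,q}$ is a finite-rank projector, so $\overline{Q}_{m,q}=\overline{\gamma}_{m,q}$ is trace-class and lies in $\mathcal{K}_q$ trivially; the contribution of the bounded annulus $\overline{\mathfrak{B}}_{\epsilon_F+c_{V_{\nu,m}}}\setminus\overline{\mathfrak{B}}_{\epsilon_F}$ to $\utr((T-\epsilon_F)\overline{Q}_m)$ is then controlled by representing $\overline{\gamma}_{m,q}$ as a Cauchy integral along a contour enclosing $\sigma(T_q+V_{\nu,m})\cap(-\infty,\epsilon_F)$ (which is bounded and at positive distance from the essential spectrum), expanding the resolvent difference, and applying the Kato--Seiler--Simon inequality~\eqref{KatoSeilerSimon} together with the $L^p$-bounds on $V_{\nu,m}$.

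The delicate case is $q\in\overline{\mathfrak{B}}_{\epsilon_F}\cap\mathfrak{M}_{\mathrm{pp}}^{\complement}$, where $\epsilon_F$ is embedded in the essential spectrum $[|q|^2/2,+\infty)$ of $T_q+V_{\nu,m}$, so $\overline{\gamma}_{m,q}$ is infinite-rank and no contour encloses the spectrum below $\epsilon_F$. The remedy --- the content of the technical proposition whose proof is deferred to Section~\ref{regularizedOpPropProof} --- exploits that $\epsilon_F\notin\sigma_{\mathrm{pp}}(T_q+V_{\nu,m})$: one approximates $\overline{\gamma}_{m,q}$ by spectral projectors of regularized operators (equivalently, expresses $\overline{\gamma}_{m,q}-\gamma_{0,q}$ through a principal-value resolvent integral along a vertical line, whose convergence near the origin uses precisely the absence of the eigenvalue $\epsilon_F$), expands the resolvent difference $(\zeta-T_q-V_{\nu,m})^{-1}-(\zeta-T_q)^{-1}=(\zeta-T_q-V_{\nu,m})^{-1}V_{\nu,m}(\zeta-T_q)^{-1}$, and estimates the Hilbert--Schmidt and trace norms via~\eqref{KatoSeilerSimon} using $V_{\nu,m}\in L^2(\R)\cap L^\infty(\R)$. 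Passing to the limit yields $|T_q-\epsilon_F|^{1/2}\overline{Q}_{m,q}\in\mathfrak{S}_2$ and $|T_q-\epsilon_F|^{1/2}\overline{Q}_{m,q}^{\pm\pm}|T_q-\epsilon_F|^{1/2}\in\mathfrak{S}_1$ with quantitative bounds; proving that these bounds are integrable over $q\in\overline{\mathfrak{B}}_{\epsilon_F}$ --- in particular near the degeneracy locus $\{|q|^2=2\epsilon_F\}$, where $|T_q-\epsilon_F|$ vanishes and the same one-dimensional integrability subtlety (Peierls oscillations) underlying~\cite{frank2013} reappears --- is the step I expect to be the main obstacle. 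Granting it, $\utr((T-\epsilon_F)\overline{Q}_m)<\infty$ and $\overline{Q}_m\in\mathcal{K}$.

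With $\overline{Q}_m\in\mathcal{K}$ in hand, the trace formula is algebraic. On each fiber set $A_q:=T_q+V_{\nu,m}-\epsilon_F$; it commutes with $\overline{\gamma}_{m,q}$ and satisfies $A_q\overline{\gamma}_{m,q}=-|A_q|\overline{\gamma}_{m,q}$ and $A_q\overline{\gamma}_{m,q}^{\perp}=|A_q|\overline{\gamma}_{m,q}^{\perp}$, so decomposing $A_q=\overline{\gamma}_{m,q}A_q\overline{\gamma}_{m,q}+\overline{\gamma}_{m,q}^{\perp}A_q\overline{\gamma}_{m,q}^{\perp}$ and using cyclicity of the trace gives
\[
\m{Tr}\big((T_q+V_{\nu,m}-\epsilon_F)\overline{Q}_{m,q}\big)=\m{Tr}\Big(|A_q|^{1/2}\big(\overline{\gamma}_{m,q}^{\perp}\overline{Q}_{m,q}\overline{\gamma}_{m,q}^{\perp}-\overline{\gamma}_{m,q}\overline{Q}_{m,q}\overline{\gamma}_{m,q}\big)|A_q|^{1/2}\Big),
\]
the left-hand side being \emph{defined} by the right-hand side, which is finite because $|A_q|^{1/2}$ and $|T_q-\epsilon_F|^{1/2}$ are comparable up to the bounded operator $V_{\nu,m}$, because of the identities~\eqref{projetorsAlgebra1}--\eqref{projetorsAlgebra3}, and because $\overline{Q}_m\in\mathcal{K}$. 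The same manipulation with $\gamma_{0,q}$ in place of $\overline{\gamma}_{m,q}$ and $T_q$ in place of $T_q+V_{\nu,m}$ identifies $\m{Tr}((T_q-\epsilon_F)\overline{Q}_{m,q})$ with $\m{Tr}(|T_q-\epsilon_F|^{1/2}(\overline{Q}_{m,q}^{++}-\overline{Q}_{m,q}^{--})|T_q-\epsilon_F|^{1/2})$, while $\m{Tr}(V_{\nu,m}\overline{Q}_{m,q})=\int_{\R}V_{\nu,m}\rho_{\overline{Q}_{m,q}}$ by definition of the density. Subtracting these two computations, integrating in $q$ over $\R^2$, dividing by $(2\pi)^2$, and using $\rho_{\overline{Q}_m}(z)=(2\pi)^{-2}\int_{\R^2}\rho_{\overline{Q}_{m,q}}(z)\,dq$, one recovers the announced expression for $F(\overline{Q}_m)$; the only remaining point is the interchange of $\m{Tr}$ with $\int dq$, legitimate by the non-negativity built into the definition~\eqref{defkin} together with the $q$-integrability established above.
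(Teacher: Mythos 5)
Your outline correctly locates the difficulty, but it stops exactly there: you write that the $q$-integrability of the fiberwise Schatten bounds near the degeneracy locus $\{|q|^2=2\epsilon_F\}$ is ``the step I expect to be the main obstacle'' and then proceed ``granting it''. That step is the actual content of the proposition, and it cannot be obtained by the direct resolvent estimates you sketch: fiberwise bounds derived from (principal-value or regularized) contour representations of $\overline{\gamma}_{m,q}-\gamma_{0,q}$ degenerate as $|q|^2\to 2\epsilon_F$, and no amount of Kato--Seiler--Simon bookkeeping makes them integrable. The paper closes the gap by a variational argument rather than by estimation. One first introduces the gap-opening symbols $h_q^\eta$, piecewise constant in $|q|$, producing regularized operators $K_q^\eta+V_{\nu,m}$ with a genuine spectral gap at $\epsilon_F$; for these a closed contour exists on each annulus, giving $\overline{Q}_m^\eta\in\mathcal{K}$ together with the exact identity~\eqref{boundsupRegular}, which shows that the renormalized kinetic energy of $\overline{Q}_m^\eta$ plus $\int_{\R} V_{\nu,m}\rho_{\overline{Q}_m^\eta}$ is \emph{nonpositive}. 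On the other hand, since $Q_{\nu,m}$ minimizes the linear functional $F$ of~\eqref{minimizedF}, one has $F(\overline{Q}_m^\eta)\ge F(Q_{\nu,m})$, a lower bound uniform in $\eta$. Combining the two inequalities with the Lieb--Thirring-type estimate~\eqref{LTtotalenergybound} (which controls $\|\rho_{\overline{Q}_m^\eta}\|_{L^2}$ by the kinetic energy) and a Cauchy--Schwarz inequality yields the uniform bound $\int\bigl\||T_q-\epsilon_F|^{1/2}\overline{Q}_{m,q}^\eta\bigr\|_{\mathfrak{S}_2}^2\,dq\le R\|V_{\nu,m}\|_{L^2}^2$, and weak compactness in $L^2\bigl(\overline{\mathfrak{B}}_{\epsilon_F+c_{V_{\nu,m}}};\mathfrak{S}_2\bigr)$ then gives $\overline{Q}_m\in\mathcal{K}$. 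This interplay between the minimality of $Q_{\nu,m}$ and the sign in~\eqref{boundsupRegular} is the missing idea in your proposal; without it the proof does not close.

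A secondary, more minor point: your ``algebraic'' derivation of the trace formula invokes cyclicity of the trace on operators that are not trace-class ($(T_q+V_{\nu,m}-\epsilon_F)\overline{Q}_{m,q}$ only has a renormalized trace). The paper justifies the identity by first proving it for finite-rank approximants and passing to the limit with the density result of Lemma~\ref{continuityLemma}; your remark that ``the left-hand side is defined by the right-hand side'' does not by itself legitimize subtracting the two renormalized traces and identifying the difference with $\int_{\R} V_{\nu,m}\rho_{\overline{Q}_m}$.
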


The proof of this proposition, which can be read in Section~\ref{Secprop:defectstateQ_m}, is quite technical and constitutes the core of the proof of Theorem~\ref{yukawaMinimizerthm}.

\subsubsection{Form of the minimizers}
\label{sec:form_min}

We prove that all minimizers satisfy~\eqref{self-consist-eq} by showing that $\overline{Q}_m\in\mathcal{K}$ minimizes $F(Q)$ defined in~\eqref{minimizedF}. This implies that the only difference between a minimizer~$Q_{\nu,m}$ of~\eqref{minimizationpbyoka} and $\overline{Q}_m$ can be due to bound states at the Fermi level~$\epsilon_F$. We show in fact that $Q_{\nu,m} = \overline{Q}_m$. 

We first need the following lemma on the density of finite-rank operators in $\mathcal{K}$.

\begin{lemma}[Density of direct sums of finite-rank operators in $\mathcal{K}$]
  \label{lemmaFiniterankOperator}
  Every $Q := \mathcal{U}^{-1}\left(\int_{\bb R^2}^{\oplus}Q_q\,dq\right)\mathcal{U} \in \mathcal{K}$ can be approximated by a sequence of operators $Q_n:= \mathcal{U}^{-1}\left(\int_{\bb R^2}^{\oplus}Q_{n,q}\,dq\right)\mathcal{U}$ with the following properties:
  \begin{itemize}
  \item for all $n \geq 1$ and for almost all $q\in\bb R^2$, $Q_{n,q}\in \mathcal{K}_q$ is finite-rank, and
    \[
    q\mapsto Q_{n,q}\in L^{\infty}(\bb R^2; \mathcal{S}(L^2(\bb R))),\qquad q\mapsto |T_q-\epsilon_F|^{1/2}\left(Q_{n,q}^{++}-Q_{n,q}^{--}\right) |T_q-\epsilon_F|^{1/2} \in L^1(\bb R^2; \mathfrak{S}_1);
    \]
  \item the renormalized kinetic energy converges as
    \begin{equation}
      \begin{aligned}
        \lim_{n\to\infty}\int_{\bb R^2}\m{Tr}_{L^2(\bb R)}&\left( |T_q-\epsilon_F|^{1/2}\left(Q_{n,q}^{++}-Q_{n,q}^{--} \right)|T_q-\epsilon_F|^{1/2}\right)dq \\
        &= \int_{\bb R^2}\m{Tr}_{L^2(\bb R)}\left( |T_q-\epsilon_F|^{1/2}\left(Q_{q}^{++}-Q_{q}^{--} \right)|T_q-\epsilon_F|^{1/2}\right)dq.
      \end{aligned}
      \label{limTrace}
    \end{equation}
  \end{itemize}
\end{lemma}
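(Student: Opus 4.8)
The plan is to approximate $Q\in\mathcal K$ fiberwise, by truncating each $Q_q$ both in rank and in the spectral range of $T_q$, while being careful to preserve measurability in $q$ and the convergence of the renormalized kinetic energy. First I would recall that, by the definition of $\mathcal K$ and Proposition~\ref{prop:densityK}, we have for almost every $q$ the operator $|T_q-\epsilon_F|^{1/2}Q_q\in\mathfrak S_2$ and $|T_q-\epsilon_F|^{1/2}Q_q^{\pm\pm}|T_q-\epsilon_F|^{1/2}\in\mathfrak S_1$, with $q\mapsto Q_q$ bounded in $L^\infty(\mathbb R^2;\mathcal S(L^2(\mathbb R)))$ and
\[
\int_{\mathbb R^2}\m{Tr}_{L^2(\mathbb R)}\big(|T_q-\epsilon_F|^{1/2}(Q_q^{++}-Q_q^{--})|T_q-\epsilon_F|^{1/2}\big)\,dq<\infty.
\]
The natural truncation is to sandwich $Q_q$ with spectral projections of $T_q$: set $\Pi_{q,N}:=\mathds 1_{[0,N]}(T_q)$ (or $\mathds 1_{(-\infty,N]}$, which is the same since $T_q\ge|q|^2/2\ge 0$), which commutes with $\gamma_{0,q}$ for $N>\epsilon_F$, and define a first-stage approximant $Q_{q}^{(N)}:=\Pi_{q,N}Q_q\Pi_{q,N}$. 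Then $-\gamma_{0,q}\le Q_q^{(N)}\le 1-\gamma_{0,q}$ is preserved because $\Pi_{q,N}$ commutes with $\gamma_{0,q}$ and $0\le\Pi_{q,N}\le 1$; one checks $(Q_q^{(N)})^{\pm\pm}=\Pi_{q,N}Q_q^{\pm\pm}\Pi_{q,N}$, so all the $\mathfrak S_1$/$\mathfrak S_2$ properties are inherited, and $q\mapsto Q_q^{(N)}$ stays in $L^\infty$ with the same bound. Since $|T_q-\epsilon_F|^{1/2}$ is bounded on $\mathrm{Ran}\,\Pi_{q,N}$, the operator $Q_q^{(N)}$ is already of the required type if additionally it were finite-rank; the second stage is then to pick, within $\mathrm{Ran}\,\Pi_{q,N}$, a finite-rank orthogonal projection $P_{q,n}$ (measurable in $q$, e.g. using the fixed basis $\{\phi_i\}$ and the spectral decomposition of $T_q$) and set $Q_{n,q}:=P_{q,n}Q_q^{(N(n))}P_{q,n}$ with $N(n)\to\infty$. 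Here one must ensure $P_{q,n}$ commutes with $\gamma_{0,q}$; this is arranged by choosing $P_{q,n}$ to be a spectral projection of $T_q$ of the form $\mathds 1_{S_{q,n}}(T_q)$ where $S_{q,n}$ is a finite union of intervals of total length growing to fill $[0,N(n)]$, which keeps the Pauli constraint and the block decomposition.

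The second ingredient is to control the convergence~\eqref{limTrace}. Fiberwise, $Q_{n,q}\to Q_q$ in the sense that $|T_q-\epsilon_F|^{1/2}Q_{n,q}^{\pm\pm}|T_q-\epsilon_F|^{1/2}\to|T_q-\epsilon_F|^{1/2}Q_q^{\pm\pm}|T_q-\epsilon_F|^{1/2}$ in $\mathfrak S_1$: this is a standard fact that sandwiching a fixed trace-class operator by strongly convergent contractions (here $\Pi_{q,N}$ and $P_{q,n}$, which converge strongly to the identity) converges in trace norm, combined with the identity $|T_q-\epsilon_F|^{1/2}(\Pi_{q,N}Q_q^{\pm\pm}\Pi_{q,N})|T_q-\epsilon_F|^{1/2}=\Pi_{q,N}\big(|T_q-\epsilon_F|^{1/2}Q_q^{\pm\pm}|T_q-\epsilon_F|^{1/2}\big)\Pi_{q,N}$ valid because $\Pi_{q,N}$ commutes with $T_q$. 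In particular the fiberwise trace $\m{Tr}(|T_q-\epsilon_F|^{1/2}(Q_{n,q}^{++}-Q_{n,q}^{--})|T_q-\epsilon_F|^{1/2})$ is monotone nondecreasing in the truncation (each sandwiching can only decrease a positive trace, so going the other way it increases toward the limit) and bounded above by the $q$-fiber value for $Q$. One then passes to the integral over $q$ by the monotone convergence theorem (or dominated convergence, dominated by the integrable function $q\mapsto\m{Tr}(|T_q-\epsilon_F|^{1/2}(Q_q^{++}-Q_q^{--})|T_q-\epsilon_F|^{1/2})$), which yields exactly~\eqref{limTrace}. Finiteness of $\int_{\mathbb R^2}\m{Tr}(\dots)\,dq$ for $Q_{n}$, i.e. the $L^1(\mathbb R^2;\mathfrak S_1)$ membership, is immediate since the integrand is dominated by the one for $Q$.

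The main obstacle I anticipate is the \emph{measurable selection} of the truncating projections $P_{q,n}$: one needs a single, jointly measurable family $q\mapsto P_{q,n}$ that simultaneously (i) has finite rank for a.e.~$q$, (ii) commutes with $\gamma_{0,q}$ (so that the $++/--$ block structure and the Pauli bounds survive), and (iii) converges strongly to the identity as $n\to\infty$ for a.e.~$q$, uniformly enough to push the fiberwise convergence through the integral. The cleanest route is to build $P_{q,n}$ out of spectral projections of $T_q$ onto an increasing sequence of bounded Borel sets exhausting $[\,|q|^2/2,\infty)$ chosen independently of $q$ (for instance $[0,n]$), composed with a finite-rank cutoff in the fixed orthonormal basis $\{\phi_i\}_{i\in\mathbb N}$ of $L^2(\mathbb R)$ adapted so as not to break the commutation with $\gamma_{0,q}$; since $\gamma_{0,q}=\mathds 1_{(-\infty,\epsilon_F]}(T_q)$ already commutes with $\mathds 1_{[0,n]}(T_q)$, one only needs the basis cutoff to act inside the fixed finite-dimensional-in-$z$-frequency band, which can be done with projections depending measurably on $q$ via functional calculus in $T_q$. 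Once this family is in place, the rest of the argument is the soft approximation and monotone/dominated convergence described above, and I would write it out in that order: (1) spectral truncation in $T_q$ and verification of the constraints; (2) finite-rank truncation with measurability; (3) fiberwise $\mathfrak S_1$-convergence of the block operators; (4) integration in $q$ to obtain~\eqref{limTrace} and the $L^1(\mathbb R^2;\mathfrak S_1)$ bound; (5) the $L^\infty$ bound, which is preserved at every step since all truncations are contractions commuting with $\gamma_{0,q}$.
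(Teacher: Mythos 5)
Your overall strategy (spectral truncation in $T_q$, then finite-rank truncation, then fiberwise $\mathfrak{S}_1$-convergence and dominated convergence in $q$) is the same as the paper's, but there are two genuine gaps. First, your spectral cutoff $\Pi_{q,N}=\mathds 1_{[0,N]}(T_q)$ only removes high energies; the paper uses $P_{n,q}=\mathds 1_{[1/n,\,n]}(|T_q-\epsilon_F|)$, and the lower cutoff $1/n$ is not cosmetic. An element $Q_q\in\mathcal{K}_q$ is only bounded, not compact: since $\epsilon_F$ lies in the essential spectrum of $T_q$ for $|q|^2\le 2\epsilon_F$, the operator $|T_q-\epsilon_F|^{1/2}$ is not boundedly invertible, so one cannot recover compactness of $Q_q$ from $|T_q-\epsilon_F|^{1/2}Q_q\in\mathfrak{S}_2$. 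With the two-sided cutoff one writes $P_{n,q}Q_qP_{n,q}=P_{n,q}\,\bigl(Q_q|T_q-\epsilon_F|^{1/2}\bigr)\,|T_q-\epsilon_F|^{-1/2}P_{n,q}$ and concludes that the truncation is Hilbert--Schmidt; your $\Pi_{q,N}Q_q\Pi_{q,N}$ need not even be compact, which blocks the subsequent finite-rank approximation.

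Second, and more seriously, the finite-rank step is not actually carried out, and the device you propose cannot work as stated. A projection of the form $\mathds 1_{S_{q,n}}(T_q)$ with $S_{q,n}$ a union of intervals of positive length is never finite-rank, because $T_q$ has purely absolutely continuous spectrum. Conversely, a finite-rank projection built from a fixed basis $\{\phi_i\}$ does not commute with $T_q$, so the identity you rely on to pull the weights $|T_q-\epsilon_F|^{1/2}$ through the truncation fails (and $|T_q-\epsilon_F|^{1/2}P$ need not even be bounded for such a $P$), while the Pauli constraint $-\gamma_{0,q}\le Q_{n,q}\le 1-\gamma_{0,q}$ and the $++/--$ block structure are also no longer automatic. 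This incompatibility between ``finite rank'' and ``commutes with functions of $T_q$'' is exactly the nontrivial content of the lemma. The paper resolves it by first reducing to Hilbert--Schmidt fibers via the two-sided cutoff and then invoking the finite-rank approximation of constrained density matrices from~\cite[Theorems~5 and~6]{Hainzl2009} (detailed in~\cite{Cao}), which constructs finite-rank approximants preserving the constraint without requiring commutation with $T_q$. You correctly flag this as the main obstacle, but the sketch you give does not overcome it.
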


\begin{proof}
The proof is similar to the proof of Lemma~\ref{continuityLemma} and is based on~\cite[Lemma 3.2]{frank2013}. First of all, we show that there exists a sequence of operators $\widetilde{Q}_{n}:= \mathcal{U}^{-1}\left(\int_{\bb R^2}^{\oplus}\widetilde Q_{n,q}\,dq\right)\mathcal{U} \in \mathcal{K}$ satisfying~\eqref{limTrace} such that $q\mapsto \widetilde{Q}_{n,q}\in L^{\infty}(\bb R^2;\mathfrak{S}_2)$. Introduce the negligible set $\Omega = \{q \in \mathbb{R}^2 | Q_q \not \in \mathcal{K}_q\}$. For $q\in\bb R^2 \backslash \Omega$ and $n\geq 1$, define $P_{n,q}:= \mathds 1_{\left[\frac{1}{n},n\right]}\left(\left|T_q-\epsilon_F\right|\right)$ and $\widetilde{Q}_{n,q} := P_{n,q}Q_qP_{n,q}$. It can be checked that $\widetilde{Q}_{n,q}$ and $\widetilde{Q}_{n,q}\left|T_q-\epsilon_F\right|^{1/2}$ are Hilbert--Schmidt by writing $\widetilde{Q}_{n,q} = P_{n,q}Q_q \left|T_q-\epsilon_F\right|^{1/2}\frac{P_{n,q}}{\left|T_q-\epsilon_F\right|^{1/2}}$, and similarly that $\widetilde{Q}_{n,q}^{\pm\pm}$ and $\left|T_q-\epsilon_F\right|^{1/2}\widetilde{Q}_{n,q}^{\pm\pm}\left|T_q-\epsilon_F\right|^{1/2}$ are trace-class. Moreoever, by the uniform boundedness of $0\leq P_{n,q}\leq 1$ the fact that $P_n$ commutes with $|T_q-\epsilon_F|^{1/2}$, it is easy to see that $q\mapsto \widetilde{Q}_{n,q}\in L^{\infty}(\bb R^2;\mathfrak{S}_2)$ and $q\mapsto |T_q-\epsilon_F|^{1/2}\left(\widetilde{Q}_{n,q}^{++}-\widetilde{Q}_{n,q}^{--}\right) |T_q-\epsilon_F|^{1/2} \in L^1(\bb R^2; \mathfrak{S}_1)$, with
\[
0\leq \pm\m{Tr}_{L^2(\bb R)}\left(|T_q-\epsilon_F|^{1/2}\widetilde{Q}_{n,q}^{\pm\pm} |T_q-\epsilon_F|^{1/2}\right)\leq \pm\m{Tr}_{L^2(\bb R)}\left(|T_q-\epsilon_F|^{1/2}Q_{q}^{\pm\pm}|T_q-\epsilon_F|^{1/2}\right).
\]

Let us now turn to the convergence of the relative kinetic energies. First, for all $q\in \bb R^2\backslash \Omega$, using that $P_n$ converges strongly to~1 in~$L^2(\mathbb{R})$ (which implies that $P_n A P_n$ converges to~$A$ in $\mathfrak{S}_p$ for $A \in \mathfrak{S}_p$),
\[
\lim_{n\to +\infty}\norm[\bigg]{ |T_q-\epsilon_F|^{1/2}(\widetilde{Q}_{n,q}^{\pm\pm}-Q_{q}^{\pm\pm}) |T_q-\epsilon_F|^{1/2}}_{\mathfrak{S}_1}=0.
\]
Therefore, by the dominated convergence theorem,
\begin{align*}
\lim_{n\to\infty}\int_{\bb R^2}\pm\m{Tr}_{L^2(\bb R)}\left(|T_q-\epsilon_F|^{1/2}\widetilde{Q}_{n,q}^{\pm\pm} |T_q-\epsilon_F|^{1/2}\right)dq  =\int_{\bb R^2}\pm\m{Tr}_{L^2(\bb R)}\left(|T_q-\epsilon_F|^{1/2}Q_{q}^{\pm\pm}|T_q-\epsilon_F|^{1/2}\right)dq.
\end{align*}

In order to conclude the proof, it remains to approximate the Hilbert--Schmidt operators $\widetilde{Q}_{n,q}$ by finite-rank operators $Q_{n,q} \in \mathcal{K}_q$. This can be done as in~\cite[Theorems~5 and~6]{Hainzl2009}, more details can be found in~\cite{Cao}.
\end{proof}

Consider $m>0$ and $ \nu\in H^{-1}(\bb R)$, and let us prove that $ \mathds 1_{(-\infty,\epsilon_F]}(T +V_{\nu,m}) - \gamma_0$ is the unique minimizer of~\eqref{minimizationpbyoka}. For $Q \in \mathcal{K}$ a direct sum of smooth finite-rank operators (\emph{i.e.} as the ones in Lemma~\ref{lemmaFiniterankOperator}), \eqref{minimizedF} becomes
\begin{equation}
\begin{aligned}
& F(Q) = \utr\left(\left(T-\epsilon_F\right) Q\right) + \int_{\bb R}V_{\nu,m}(x)\rho_{Q}(x)\,dx \\
&=\frac{1}{(2\pi)^2}\int_{\bb R^2} \m{Tr}_{L^2(\bb R)}\left(\left(T_q-\epsilon_F+V_{\nu,m} \right)Q_q\right) dq\\
&= \frac{1}{(2\pi)^2}\int_{\bb R^2} \m{Tr}_{L^2(\bb R)}\left(\left|T_q-\epsilon_F+V_{\nu,m} \right|^{1/2}\left(\overline{\gamma}_{m,q}^{\perp}Q_q\overline{\gamma}_{m,q}^{\perp}-\overline{\gamma}_{m,q}Q_q\overline{\gamma}_{m,q}\right)\left|T_q-\epsilon_F+V_{\nu,m} \right|^{1/2}\right) dq.
\end{aligned}
\label{FQformula}
\end{equation}
Therefore, by~\eqref{FQformula} and Proposition~\ref{prop:defectstateQ_m},
\begin{equation}
\begin{aligned}
F(Q) - F(\overline{Q}_m) = \frac{1}{(2\pi)^2}\int_{\bb R^2}\m{Tr}_{L^2(\bb R)}& \left(\left|T_q-\epsilon_F+V_{\nu,m} \right|^{1/2}\left(\overline{\gamma}_{m,q}^{\perp}\left(Q_q-\overline{Q}_{m,q}\right)\overline{\gamma}_{m,q}^{\perp} \right.\right.\\
&\left.\left.\qquad
-\overline{\gamma}_{m,q}\left(Q_q-\overline{Q}_{m,q}\right)\overline{\gamma}_{m,q}\right)\left|T_q-\epsilon_F+V_{\nu,m} \right|^{1/2}\right) dq.
\end{aligned}
\label{Fqfinal}
\end{equation}
Remark that, by~\eqref{zeroBarQ}, $-\overline{\gamma}_{m,q} \leq Q_q-\overline{Q}_{m,q}\leq 1-\overline{\gamma}_{m,q}$ for any $q \in \mathbb{R}^2 \backslash \Omega$ with $\Omega = \{q \in \mathbb{R}^2 | Q_q \not \in \mathcal{K}_q\}$ negligible, so that
\[
\m{Tr}_{L^2(\bb R)}\left(\left|T_q-\epsilon_F+V_{\nu,m} \right|^{1/2}\left(\overline{\gamma}_{m,q}^{\perp}\left(Q_q-\overline{Q}_{m,q}\right)\overline{\gamma}_{m,q}^{\perp}-\overline{\gamma}_{m,q}\left(Q_q-\overline{Q}_{m,q}\right)\overline{\gamma}_{m,q}\right)\left|T_q-\epsilon_F+V_{\nu,m} \right|^{1/2}\right)  \geq 0.
\]
This implies that 
\[
F(Q)\geq F(\overline{Q}_m).
\]
This inequality can be extended to any $Q\in \mathcal{K}$ by~\eqref{limTrace} in Lemma~\ref{lemmaFiniterankOperator}. This shows that $Q$ is a minimizer of~\eqref{minimizationpbyoka} if and only if
\[
Q = \overline{Q}_m + \delta,
\]
where 
\[
\delta := \mathcal{U}^{-1}\left(\int_{\bb R^2}^{\oplus } \delta_q\,dq \right)\mathcal{U},
\]
with $0 \leq \delta_q \leq \mathds 1_{\left\{T_q+V_{\nu,m}= \epsilon_F\right\}} $. Since the set $\mathfrak{M}_{\mathrm{pp}}=\left\{ q\in \bb R^2\left|\, \epsilon_F\in \sigma_{\mathrm{pp}}(T_q+V_{\nu, m})\right.\right\}$ defined in (\ref{ppset}) has zero Lebesgue measure as discussed in Section~\ref{DefectStateSec}, it follows that $\delta = \mathcal{U}^{-1}\left(\int_{\bb R^2}^{\oplus } \delta_q\,dq \right)\mathcal{U} = 0$, and so the unique minimizer of~\eqref{minimizationpbyoka} is $\mathds 1_{(-\infty,\epsilon_F)}(T +V_{\nu,m}) - \gamma_0 = \mathds 1_{(-\infty,\epsilon_F]}(T +V_{\nu,m}) - \gamma_0$.

\subsubsection{Proof of Proposition \ref{prop:defectstateQ_m}}
\label{Secprop:defectstateQ_m}

The statement of the proposition is obtained by a limiting procedure, where we approximate the defect state $\overline{Q}_m$ by a family of regular operators with a spectral gap around~$\epsilon_F$, relying on the idea used in~\cite[Section~4.2]{frank2013}. Since $\overline{Q}_m$ is defined by a direct integral over $\overline{\mathfrak{B}}_{\epsilon_F+c_{V_{\nu,m}}}$ and $\mathfrak{M}_{\mathrm{pp}}$ is negligible, it suffices to construct regularizations of $\overline{Q}_{m,q}$ for $q \in \overline{\mathfrak{B}}_{\epsilon_F+c_{V_{\nu,m}}}\cap \mathfrak{M}_{\mathrm{pp}}^{\complement}$. 

\paragraph{Construction of a regularized kinetic operator.} 
In order to achieve uniform estimates locally in~$q$, we define a gap-opening function $h_{q}^{\eta}$ which locally does not depend on~$q$ for $q \in \overline{\mathfrak{B}}_{\epsilon_F+c_{V_{\nu,m}}}\cap \mathfrak{M}_{\mathrm{pp}}^{\complement}$. Let us first define a partition of $|q|$ for $q\in \overline{\mathfrak{B}}_{\epsilon_F+c_{V_{\nu,m}}}\cap \mathfrak{M}_{\mathrm{pp}}^{\complement}$. Given $\eta >0$ small enough, we first note that there exist integers $N_1\leq N_2$ such that $ N_1 \eta  \leq 2\epsilon_F < (N_1+1)\eta$ and $ N_2 \eta  \leq 2(\epsilon_F + c_{V_{\nu,m}}) < (N_2+1)\eta$. We then consider the following covering:
\begin{equation}
\begin{aligned}
\left[0, \sqrt{2(\epsilon_F+c_{V_{\nu,m}})}\right]\subset \left[0,\sqrt{\eta}\right)\cup  \left[\sqrt{\eta},\sqrt{2\eta}\right)\cup \cdots \left[\sqrt{N_2\eta},\sqrt{(N_2+1)\eta}\right),
\label{partition_2ef}
\end{aligned}
\end{equation}
and an associated gap-opening function:
\[
h_{q}^{\eta}(k):= \left\{
\begin{aligned}
- \eta \mathds 1\left(\frac{k^2}{2}\leq \epsilon_F - \frac{n\eta }{2}\right)+ \eta \mathds 1\left( \frac{k^2}{2} >\epsilon_F - \frac{n\eta }{2} \right), &\text{ if  } \, |q| \in \left[\sqrt{n\eta} , \sqrt{(n+1)\eta}\right),\, n = 0, 1, 2, \cdots, N_1,\\
0,\quad  &\text{ if   }\, |q|\in \left[\sqrt{(N_1+1)\eta}, \sqrt{(N_2+1)\eta}\right).
 \end{aligned}\right.
\]
The corresponding approximate kinetic energy operators are defined as 
\[
K_{q}^{\eta}:=T_q+  h_{q}^{\eta}\left(-\rmi\frac{d}{dz}\right)= -\frac{1}{2}\frac{d^2}{dz^2}+ \frac{|q|^2}{2}+ h_{q}^{\eta}\left(-\rmi\frac{d}{dz}\right).
\]
Remark that $K_q^{\eta}$ has purely absolutely continuous spectrum, and that $\gamma_{0,q}:= \mathds 1_{(-\infty,\epsilon_F]}\left(T_{q}\right)  = \mathds 1_{(-\infty,\epsilon_F]}\left(K_q^{\eta}\right)$ for all $q\in \bb R^2$. Moreover, for $ n = 0, 1,\cdots, N_1$ and $ |q| \in \left[\sqrt{n\eta} , \sqrt{(n+1)\eta}\right)$, the operator $K_q^{\eta}$ has a spectral gap $\left(\epsilon_F-\eta + \frac{|q|^2-n\eta}{2} , \epsilon_F+\eta + \frac{|q|^2-n\eta}{2} \right)$; see Figure~\ref{FigSpectreT_qV_m_smoothed} for a sketch of the spectrum. 
\begin{figure}[h!]
\centering
\includegraphics[height=0.5\textwidth ]{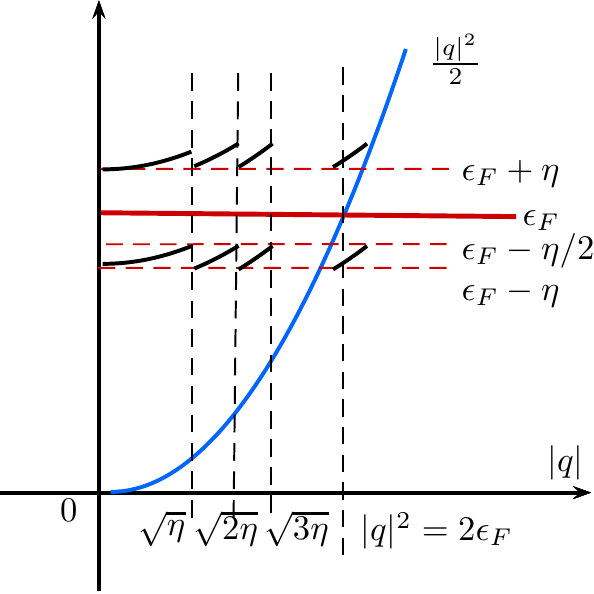}~
\caption{Spectrum of $K_q^{\eta}$ around $\epsilon_F$. The dark thick lines are the upper/lower bounds of the essential spectrum of $K_q^{\eta}$ around $\epsilon_F$.}
\label{FigSpectreT_qV_m_smoothed}
\end{figure}

The regularized kinetic energy operator is now defined to be $K_q^{\eta} + V_{\nu,m}$. Since the potential $V_{\nu,m}$ is a compact perturbation of~$K_q^{\eta}$ (as $(1+K_q^{\eta})^{-1}V_{\nu,m} \in \mathfrak{S}_2$ by the Kato--Seiler--Simon inequality~\eqref{KatoSeilerSimon}), the essential spectrum of $K_q^{\eta} + V_{\nu,m}$ is the same as for $K_q^{\eta}$, and at most countably many eigenvalues can be introduced in the spectral gap and below~$|q|^2/2$, which can only accumulate at the edges $\epsilon_F-\eta + \frac{|q|^2-n\eta}{2}$ and $\epsilon_F+\eta + \frac{|q|^2-n\eta}{2}$ of the spectral gap or at~$|q|^2/2$. In order to exclude these eigenvalues, we discard the values of~$q$ in the set 
\[
\mathfrak{M}_{\mathrm{pp}}^{\eta}:= \left\{ \left. q\in \overline{\mathfrak{B}}_{\epsilon_F+c_{V_{\nu,m}}}\cap \mathfrak{M}_{\mathrm{pp}}^{\complement} \right|\, \epsilon_F \in \sigma_{\mathrm{disc}}(K_q^{\eta}+V_{\nu,m})\right\}.
\]
We show next that this set is negligible. 

\paragraph{The set $\mathfrak{M}_{\mathrm{pp}}^{\eta}$ has Lebesgue measure zero.}
Let us show that there are only finitely many values of~$|q|$ for $q \in \mathfrak{M}_{\mathrm{pp}}^{\eta}$ for a given $\eta >0$. We distinguish two cases: 
\begin{enumerate}[(i)]
\item in the regions $|q| \in [\sqrt{n\eta}, \sqrt{(n+1)\eta} )$ for $0 \leq n \leq N_1$, the construction of $K_q^{\eta}+V_{\nu,m}$ makes sure that there can be at most finitely many values of~$|q|$ such that $\epsilon_F \in \sigma_{\mathrm{disc}}(K_q^{\eta}+V_{\nu,m})$;
\item in the region $|q|\in \left[\sqrt{(N_1+1)\eta}, \sqrt{(N_2+1)\eta}\right)$, it holds $K_q^{\eta}= T_q$. The condition $\epsilon_F \in \sigma_{\mathrm{disc}}(K_q^{\eta}+V_{\nu,m})$ is therefore equivalent to $\epsilon_F-|q|^2/2 \in \sigma_{\mathrm{disc}}(T_0+V_{\nu,m})$. Since $(N_1+1)\eta > 2\epsilon_F$, there exists $\delta > 0$ such that $\epsilon_F-|q|^2/2 \leq -\delta$ for $|q|\in \left[\sqrt{(N_1+1)\eta}, \sqrt{(N_2+1)\eta}\right)$. Since $\sigma_{\mathrm{disc}}(T_0+V_{\nu,m}) \cap (-\infty,-\delta]$ is at most finite, this shows that $\mathfrak{M}_{\mathrm{pp}}^{\eta} \cap \left[\sqrt{(N_1+1)\eta}, \sqrt{(N_2+1)\eta}\right)$ is at most finite.
\end{enumerate}
Since $\mathfrak{M}_{\mathrm{pp}}^{\eta}$ is the union of a finite number of circles in~$\mathbb{R}^2$, it has Lebesgue measure zero.

\paragraph{Construction of a regularized defect state. }
For $\eta>0 $ small enough, we define a regularized operator in analogy with~\eqref{defectstateQ_m} and~\eqref{zeroBarQ} by excluding the set $ \mathfrak{M}_{\mathrm{pp}}^{\eta}$:
\begin{equation}
\overline{Q}_m^{\eta}:= \mathcal{U}^{-1}\left(\int_{\bb R^2}^{\oplus}\overline{Q}_{m,q}^{\eta}\,dq\right)\cl{U},
\label{smoothedSP}
\end{equation}
where, with $\Pi_{V,q}^{\eta}:=\mathds 1_{(-\infty,\epsilon_F)}\left(K_q^{\eta}+V_{\nu,m}\right)$,
\begin{equation}
\overline{Q}_{m,q}^{\eta}:= \left\{
\begin{aligned}
\Pi_{V,q}^{\eta} -\mathds 1_{(-\infty,\epsilon_F]}\left(K_q^{\eta}\right)\quad & \text{when $q\in \overline{\mathfrak{B}}_{\epsilon_F+c_{V_{\nu,m}}}\cap \mathfrak{M}_{\mathrm{pp}}^{\complement}\cap \left(\mathfrak{M}_{\mathrm{pp}}^{\eta}\right)^\complement$}, \\
 0 \quad & \text{when $q\in \left( \bb R^2\backslash \overline{\mathfrak{B}}_{\epsilon_F+c_{V_{\nu,m}}} \right) \cup \mathfrak{M}_{\mathrm{pp}} \cup \mathfrak{M}_{\mathrm{pp}}^{\eta}$}.
 \end{aligned}
 \right.
\end{equation}
Moreover, since $\mathds 1_{(-\infty,\epsilon_F]}\left(K_q^{\eta}\right) = \gamma_{0,q}$ and similarly to~\eqref{projetorsAlgebra3},
\begin{equation}
\left(\overline{Q}_{m,q}^{\eta}\right)^2  = \overline{Q}_{m,q}^{\eta,++}-\overline{Q}_{m,q}^{\eta,--}.
\label{projetorsAlgebra4}
\end{equation}

A first observation is that, for any $q\in \overline{\mathfrak{B}}_{\epsilon_F+c_{V_{\nu,m}}}\cap \mathfrak{M}_{\mathrm{pp}}^{\complement}$,
\begin{equation}
  \overline{Q}_{m,q}^{\eta} \xrightarrow[\eta \to 0]{}  \overline{Q}_{m,q} \text{ strongly}.
\label{strongConvergenceQ}
\end{equation}
This follows from the convergence of $K_q^{\eta}+V_{\nu,m}$ to $T_q+V_{\nu,m}$ in the norm resolvent sense when $\eta \to 0$ (which comes from the estimate $\norm{h_q}_{L^{\infty}}\leq \eta$), and the result~\cite[Theorem VIII.24]{ReeSim80} on the convergence of spectral projectors. 

The following lemma summarizes some properties of the regularized operator $\overline{Q}_m^{\eta}$.

\begin{lemma}
\label{regularizedOpProp}
For any $\eta >0$, the regularized operator $\overline{Q}_m^{\eta}$ defined in~\eqref{smoothedSP} belongs to the kinetic energy space~$\mathcal{K}$. Moreover, the following properties hold:
\begin{enumerate}[(1)]
\item for all $\eta>0$ and $q\in \overline{\mathfrak{B}}_{\epsilon_F+c_{V_{\nu,m}}}$,
\[
\overline{Q}_{m,q}^{\eta}\in \mathfrak{S}_2,\qquad \left(-\frac{d^2}{dz^2}\right)\overline{Q}_{m,q}^{\eta}\in \mathfrak{S}_2,\qquad (K_q^{\eta}+V_{\nu,m})\overline{Q}_{m,q}^{\eta}\in \mathfrak{S}_2.
\]
In particular $\overline{Q}_{m,q}^{\eta}\in \mathcal{K}_q$.
\item for any $\eta>0$ and $q\in \overline{\mathfrak{B}}_{\epsilon_F+c_{V_{\nu,m}}}$, 
\begin{align}
&\m{Tr}_{L^2(\bb R)}\left(|T_q-\epsilon_F|^{1/2}\left(\overline{Q}_{m,q}^{\eta,++}-\overline{Q}_{m,q}^{\eta,--}\right)|T_q-\epsilon_F|^{1/2}\right) + \m{Tr}_{L^2(\bb R)}\left(\left|h_q^{\eta}\left(-\rmi \frac{d}{dz}\right)\right|\left(\overline{Q}_{m,q}^{\eta,++}-\overline{Q}_{m,q}^{\eta,--}\right)\right) \notag \\
&\qquad  + \int_{\bb R}V_{\nu,m}\rho_{\overline{Q}_{m,q}^{\eta}} \notag \\
&= -\m{Tr}_{L^2(\bb R)}\left(\left|K_q^{\eta}+V_{\nu,m}-\epsilon_F\right|^{1/2}\left(\overline{Q}_{m,q}^{\eta}\right)^2\left|K_q^{\eta}+V_{\nu,m}-\epsilon_F\right|^{1/2}\right) \leq 0.
\label{boundsupRegular}
\end{align}
\end{enumerate}
\end{lemma}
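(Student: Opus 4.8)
The plan is to realise the regularized defect state $\overline{Q}_{m,q}^{\eta}$ as a Riesz contour integral, use that representation to establish the Hilbert--Schmidt regularity claimed in item~(1) and hence $\overline{Q}_{m,q}^{\eta}\in\mathcal{K}_q$, and then derive the identity~\eqref{boundsupRegular} of item~(2) by cyclic trace manipulations licensed precisely by item~(1), exploiting that $T_q$, $h_q^{\eta}(-\rmi\tfrac{d}{dz})$ and $K_q^{\eta}$ are mutually commuting functions of $-\rmi\tfrac{d}{dz}$ sharing the Fermi projector $\gamma_{0,q}$, and that $\epsilon_F$ lies in the resolvent set of both $K_q^{\eta}$ and $K_q^{\eta}+V_{\nu,m}$ on the range of $q$ under consideration. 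Throughout, one works at fixed $q\in\overline{\mathfrak{B}}_{\epsilon_F+c_{V_{\nu,m}}}\cap\mathfrak{M}_{\mathrm{pp}}^{\complement}\cap(\mathfrak{M}_{\mathrm{pp}}^{\eta})^{\complement}$, the only values at which $\overline{Q}_{m,q}^{\eta}\ne 0$. For such $q$, $K_q^{\eta}$ has $\epsilon_F$ in a spectral gap (of width $2\eta$ when $|q|^2\le 2\epsilon_F$); since $V_{\nu,m}$ is $K_q^{\eta}$--compact, by the Kato--Seiler--Simon bound $\|V_{\nu,m}(1-\tfrac{d^2}{dz^2})^{-1}\|_{\mathfrak{S}_2}\le\tfrac12\|V_{\nu,m}\|_{L^2}$ recalled above, and $\mathfrak{M}_{\mathrm{pp}}^{\eta}$ has been excised, $\epsilon_F$ also belongs to the resolvent set of $K_q^{\eta}+V_{\nu,m}$. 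First I would pick a bounded, positively oriented contour $\mathscr{C}$ crossing the real axis at $\epsilon_F$ and at a point to the left of the spectrum of both operators, and write, via the second resolvent identity,
\[
\overline{Q}_{m,q}^{\eta}=\frac{1}{2\pi\rmi}\oint_{\mathscr{C}}\left[(z-K_q^{\eta}-V_{\nu,m})^{-1}-(z-K_q^{\eta})^{-1}\right]dz=\frac{1}{2\pi\rmi}\oint_{\mathscr{C}}(z-K_q^{\eta}-V_{\nu,m})^{-1}V_{\nu,m}(z-K_q^{\eta})^{-1}\,dz.
\]

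For item~(1): since $h_q^{\eta}$ is bounded, $(1-\tfrac{d^2}{dz^2})(z-K_q^{\eta})^{-1}$ is bounded uniformly in $z\in\mathscr{C}$, so the Kato--Seiler--Simon bound gives $V_{\nu,m}(z-K_q^{\eta})^{-1}\in\mathfrak{S}_2$ and, composing with the bounded operator $(z-K_q^{\eta}-V_{\nu,m})^{-1}$, $\overline{Q}_{m,q}^{\eta}\in\mathfrak{S}_2$. Pre-multiplying the integrand by $-\tfrac{d^2}{dz^2}$, resp.\ by $K_q^{\eta}+V_{\nu,m}$, and using that $-\tfrac{d^2}{dz^2}(z-K_q^{\eta}-V_{\nu,m})^{-1}$ and $(K_q^{\eta}+V_{\nu,m})(z-K_q^{\eta}-V_{\nu,m})^{-1}=-1+z(z-K_q^{\eta}-V_{\nu,m})^{-1}$ are bounded, one gets $-\tfrac{d^2}{dz^2}\overline{Q}_{m,q}^{\eta}\in\mathfrak{S}_2$ and $(K_q^{\eta}+V_{\nu,m})\overline{Q}_{m,q}^{\eta}\in\mathfrak{S}_2$. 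Consequently $(1-\tfrac{d^2}{dz^2})^{1/2}\overline{Q}_{m,q}^{\eta}\in\mathfrak{S}_2$, hence $|T_q-\epsilon_F|^{1/2}\overline{Q}_{m,q}^{\eta}\in\mathfrak{S}_2$ because $|T_q-\epsilon_F|^{1/2}(1-\tfrac{d^2}{dz^2})^{-1/2}$ is bounded. Using~\eqref{projetorsAlgebra4} together with the commutation of $|T_q-\epsilon_F|^{1/2}$ with $\gamma_{0,q}$,
\[
|T_q-\epsilon_F|^{1/2}\left(\overline{Q}_{m,q}^{\eta,++}-\overline{Q}_{m,q}^{\eta,--}\right)|T_q-\epsilon_F|^{1/2}=\left(|T_q-\epsilon_F|^{1/2}\overline{Q}_{m,q}^{\eta}\right)\left(|T_q-\epsilon_F|^{1/2}\overline{Q}_{m,q}^{\eta}\right)^{*}\in\mathfrak{S}_1,
\]
and since $\overline{Q}_{m,q}^{\eta,++}\ge 0\ge\overline{Q}_{m,q}^{\eta,--}$ each of the two blocks is separately trace-class; this proves item~(1) and that $\overline{Q}_{m,q}^{\eta}\in\mathcal{K}_q$.

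For item~(2): set $G_q:=K_q^{\eta}+V_{\nu,m}-\epsilon_F$. Since $\epsilon_F$ is in the resolvent set of $K_q^{\eta}+V_{\nu,m}$, the projector $\Pi_{V,q}^{\eta}=\mathds 1_{G_q\le 0}$ commutes with $|G_q|$ and satisfies $G_q\Pi_{V,q}^{\eta}=-|G_q|\Pi_{V,q}^{\eta}$, $G_q(\Pi_{V,q}^{\eta})^{\perp}=|G_q|(\Pi_{V,q}^{\eta})^{\perp}$; and $\overline{Q}_{m,q}^{\eta}=\Pi_{V,q}^{\eta}-\mathds 1_{(-\infty,\epsilon_F]}(K_q^{\eta})$ being a difference of projectors, the algebraic identity $-(\overline{Q}_{m,q}^{\eta})^2=(\Pi_{V,q}^{\eta})^{\perp}\overline{Q}_{m,q}^{\eta}(\Pi_{V,q}^{\eta})^{\perp}-\Pi_{V,q}^{\eta}\overline{Q}_{m,q}^{\eta}\Pi_{V,q}^{\eta}$ holds, analogously to~\eqref{projetorsAlgebra1}. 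Decomposing $\m{Tr}(G_q\overline{Q}_{m,q}^{\eta})$ along the four $\Pi_{V,q}^{\eta}$--blocks---legitimate since $G_q\overline{Q}_{m,q}^{\eta}\in\mathfrak{S}_1$ by item~(1)---and discarding the off-diagonal contributions via $[\,|G_q|,\Pi_{V,q}^{\eta}]=0$ yields
\[
\m{Tr}(G_q\overline{Q}_{m,q}^{\eta})=\m{Tr}\!\left(|G_q|\bigl((\Pi_{V,q}^{\eta})^{\perp}\overline{Q}_{m,q}^{\eta}(\Pi_{V,q}^{\eta})^{\perp}-\Pi_{V,q}^{\eta}\overline{Q}_{m,q}^{\eta}\Pi_{V,q}^{\eta}\bigr)\right)=-\m{Tr}\!\left(|G_q|^{1/2}(\overline{Q}_{m,q}^{\eta})^2|G_q|^{1/2}\right)\le 0,
\]
which is the right-hand side of~\eqref{boundsupRegular}. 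It remains to expand the left-hand side: writing $G_q=(T_q-\epsilon_F)+h_q^{\eta}(-\rmi\tfrac{d}{dz})+V_{\nu,m}$ gives $\m{Tr}(G_q\overline{Q}_{m,q}^{\eta})=\m{Tr}((T_q-\epsilon_F)\overline{Q}_{m,q}^{\eta})+\m{Tr}(h_q^{\eta}(-\rmi\tfrac{d}{dz})\overline{Q}_{m,q}^{\eta})+\int_{\bb R}V_{\nu,m}\rho_{\overline{Q}_{m,q}^{\eta}}$, and running the same block-decomposition with respect to $\gamma_{0,q}$---which commutes with $T_q$ and with $h_q^{\eta}(-\rmi\tfrac{d}{dz})$, relative to which these two operators carry the sign structure built into the construction of $K_q^{\eta}$---identifies the first two terms with $\m{Tr}(|T_q-\epsilon_F|^{1/2}(\overline{Q}_{m,q}^{\eta,++}-\overline{Q}_{m,q}^{\eta,--})|T_q-\epsilon_F|^{1/2})$ and $\m{Tr}(|h_q^{\eta}(-\rmi\tfrac{d}{dz})|(\overline{Q}_{m,q}^{\eta,++}-\overline{Q}_{m,q}^{\eta,--}))$, giving~\eqref{boundsupRegular}. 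Global membership $\overline{Q}_m^{\eta}\in\mathcal{K}$ then follows from measurability of $q\mapsto\overline{Q}_{m,q}^{\eta}$ (continuity of resolvents), from $\|\overline{Q}_{m,q}^{\eta}\|_{\mathcal{L}(L^2(\bb R))}\le 1$ together with $\overline{Q}_{m,q}^{\eta}=0$ outside the bounded ball $\overline{\mathfrak{B}}_{\epsilon_F+c_{V_{\nu,m}}}$, and from finiteness of $\utr((T-\epsilon_F)\overline{Q}_m^{\eta})$; the last point is obtained by integrating over that ball the estimate $\m{Tr}(|T_q-\epsilon_F|^{1/2}(\overline{Q}_{m,q}^{\eta,++}-\overline{Q}_{m,q}^{\eta,--})|T_q-\epsilon_F|^{1/2})\le|\int_{\bb R}V_{\nu,m}\rho_{\overline{Q}_{m,q}^{\eta}}|$ supplied by~\eqref{boundsupRegular}, combined with the Lieb--Thirring inequalities of Lemmas~\ref{qInMLemma}--\ref{qNotInM} and a H\"older-type argument in $q$ analogous to the proof of Proposition~\ref{prop:densityK} (alternatively, one may estimate $\utr((T-\epsilon_F)\overline{Q}_m^{\eta})$ directly from the resolvent representation via Kato--Seiler--Simon bounds, in the spirit of~\eqref{eq:trace_bar_finite_test_state}).

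The hardest part is not any single estimate but the passage from these clean per-$q$ statements to bounds controlled, and integrable, in $q$ near the Fermi sphere $\{\,|q|^2=2\epsilon_F\,\}$, where $|T_q-\epsilon_F|$ degenerates and the spectral gap of $K_q^{\eta}+V_{\nu,m}$ narrows; this is exactly what the $q$-independent choice of gap-opening threshold on each cell of the partition~\eqref{partition_2ef}, and the removal of $\mathfrak{M}_{\mathrm{pp}}$ and $\mathfrak{M}_{\mathrm{pp}}^{\eta}$, are designed to control. A secondary, essentially bookkeeping, difficulty is to rigorously license every use of cyclicity of the trace for operators that are Hilbert--Schmidt but not trace-class---which is why item~(1), and the resolvent representation with a common spectral gap, must be in place before item~(2) can be derived.
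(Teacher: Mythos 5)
Your item~(1) is essentially the paper's argument: the Riesz integral combined with the second resolvent identity, the Kato--Seiler--Simon inequality and the boundedness of $(1-\tfrac{d^2}{dz^2})(z-K_q^{\eta})^{-1}$ and $(1-\tfrac{d^2}{dz^2})(z-K_q^{\eta}-V_{\nu,m})^{-1}$ on the contour give the $\mathfrak{S}_2$ memberships, and the factorization of $|T_q-\epsilon_F|^{1/2}(\overline{Q}_{m,q}^{\eta,++}-\overline{Q}_{m,q}^{\eta,--})|T_q-\epsilon_F|^{1/2}$ as $AA^{*}$ together with the signs of the two blocks yields $\overline{Q}_{m,q}^{\eta}\in\mathcal{K}_q$. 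The same holds for your fallback argument for $\overline{Q}_m^{\eta}\in\mathcal{K}$ (a direct Kato--Seiler--Simon estimate of the renormalized trace integrated over the bounded ball), with the caveat that integrating in $q$ requires constants uniform on each cell of the partition~\eqref{partition_2ef}; that uniformity is exactly what Lemma~\ref{boundedQR} supplies, and you only gesture at it.

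The genuine gap is in item~(2). You decompose $\m{Tr}(G_q\overline{Q}_{m,q}^{\eta})$ with $G_q=K_q^{\eta}+V_{\nu,m}-\epsilon_F$ and justify this by asserting $G_q\overline{Q}_{m,q}^{\eta}\in\mathfrak{S}_1$ ``by item~(1)''. Item~(1) only gives $(K_q^{\eta}+V_{\nu,m})\overline{Q}_{m,q}^{\eta}\in\mathfrak{S}_2$, and $G_q\overline{Q}_{m,q}^{\eta}$ is in general \emph{not} trace-class: expanding the resolvent once more, $\overline{Q}_{m,q}^{\eta}=Q_1+O(V_{\nu,m}^2)$ with $Q_1=\frac{1}{2\rmi\pi}\oint(z-K_q^{\eta})^{-1}V_{\nu,m}(z-K_q^{\eta})^{-1}\,dz$, and $G_qQ_1$ (or $(T_q-\epsilon_F)Q_1$) is only Hilbert--Schmidt when $V_{\nu,m}\in L^2(\R)$ --- this first-order obstruction is precisely why the renormalized trace $\utr$ is introduced at all. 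Consequently $\m{Tr}(G_q\overline{Q}_{m,q}^{\eta})$ and $\m{Tr}((T_q-\epsilon_F)\overline{Q}_{m,q}^{\eta})$ are not defined, and your splitting of the trace into the three terms of~\eqref{boundsupRegular}, as well as the four-block decompositions with respect to $\Pi_{V,q}^{\eta}$ and $\gamma_{0,q}$ and the uses of cyclicity, are formal. The paper's proof avoids this by first approximating $\overline{Q}_{m,q}^{\eta}\in\mathcal{K}_q$ by finite-rank operators (Lemma~\ref{continuityLemma}), for which every trace converges absolutely and the identity~\eqref{boundsupRegular} is an algebraic computation (the adaptation of \cite[Lemma 4.2]{frank2013}), and then passing to the limit term by term, each of the four quantities in~\eqref{boundsupRegular} being continuous along that approximation. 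Your computation is the correct one to run on the finite-rank approximants, but the approximation-and-limit step is not ``bookkeeping'': it is the load-bearing part of the proof of item~(2) and cannot be replaced by a trace-class claim that does not hold.
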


We postpone the proof of this lemma to Section~\ref{regularizedOpPropProof}. Let us however already remark that the terms in the energy functional~\eqref{boundsupRegular} are indeed well defined. In particular,
\[
\left|h_q^{\eta}\left(-\rmi \frac{d}{dz}\right)\right|\left(\overline{Q}_{m,q}^{\eta,++}-\overline{Q}_{m,q}^{\eta,--}\right)\in\mathfrak{S}_1,
\]
since $\left|h_q^{\eta}\left(-\rmi \frac{d}{dz}\right)\right|$ is bounded and $(1-\gamma_{0,q})\left(\overline{Q}_{m,q}^{\eta}\right)^2=\overline{Q}_{m,q}^{\eta,++} \in \mathfrak{S}_1$ and $\gamma_{0,q}\left(\overline{Q}_{m,q}^{\eta}\right)^2=-\overline{Q}_{m,q}^{\eta,--} \in \mathfrak{S}_1$ (relying on $\overline{Q}_{m,q}^{\eta}\in \mathfrak{S}_2$). 

\paragraph{The regularized defect state $\overline{Q}_m^{\eta}$ converges to $\overline{Q}_m $ in $ \mathcal{K}$.}
The remainder of this section is devoted to showing that there exists a sequence $(\eta_i)_{i \geq 1}$ converging to~0 as $i \to +\infty$ such that $\overline{Q}_m^{\eta_i}\in \mathcal{K}$ converges to the defect state $\overline{Q}_m$ in the sense of~\eqref{eq:cv_Q_mk_2}; from which we can conlude that $\overline{Q}_m$ is in $\mathcal{K}$. The general idea is to use the fact that the total energy of $\overline{Q}_m^{\eta_i}$ is uniformly bounded. By identifying the limit when $\eta_i \to 0$ with the defect state $\overline{Q}_m$ we obtain the desired result. This can be summarized in the following lemma, which crucially relies on the properties stated in Lemma~\ref{regularizedOpProp}.

\begin{lemma} There exists a sequence $(\eta_i)_{i \geq 1}$ such that the following convergences hold as $i \to +\infty$:
\begin{equation}
\overline{Q}_{m,(\boldsymbol{\cdot})}^{\eta_i}\left|K_{(\boldsymbol{\cdot})}^{\eta_i}+V_{\nu,m}-\epsilon_F\right|^{1/2}\xrightharpoonup[i\to +\infty]{} \overline{Q}_{m,(\boldsymbol{\cdot})}\left|T_{(\boldsymbol{\cdot})}+V_{\nu,m}-\epsilon_F\right|^{1/2} \text{ weakly in } L^2\left(\overline{\mathfrak{B}}_{\epsilon_F+c_{V_{\nu,m}}};\mathfrak{S}_2\right),
\label{re1}
\end{equation}
\begin{equation}
\overline{Q}_{m,(\boldsymbol{\cdot})}^{\eta_i}\left|T_{(\boldsymbol{\cdot})}-\epsilon_F\right|^{1/2} \xrightharpoonup[i \to +\infty]{} \overline{Q}_{m,(\boldsymbol{\cdot})}\left|T_{(\boldsymbol{\cdot})}-\epsilon_F\right|^{1/2}  \text{ weakly in } L^2\left(\overline{\mathfrak{B}}_{\epsilon_F+c_{V_{\nu,m}}};\mathfrak{S}_2\right).
\label{re2}
\end{equation}
Moreover, the defect state $\overline{Q}_m$ belongs to $\mathcal{K}$ and 
\begin{equation}
\utr\left((T-\epsilon_F)\overline{Q}_m\right) =\frac{1}{(2\pi)^2}\int_{\overline{\mathfrak{B}}_{\epsilon_F+c_{V_{\nu,m}}}}\!\!\!\!\!\!\!\!\m{Tr}_{L^2(\bb R)}\left(|T_q-\epsilon_F|^{1/2}\left(\overline{Q}_{m,q}^{ ++}-\overline{Q}_{m,q}^{--}\right)|T_q-\epsilon_F|^{1/2}\right)\,dq  < \infty.
\label{re3}
\end{equation}
\label{lemma10}
\end{lemma}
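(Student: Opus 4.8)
The plan is to prove Lemma~\ref{lemma10} by a compactness argument on the regularized states $\overline{Q}_m^{\eta}$ as $\eta\downarrow 0$, in the spirit of~\cite[Section~4.2]{frank2013}: exploit the energy identity~\eqref{boundsupRegular} of Lemma~\ref{regularizedOpProp} to get bounds uniform in $\eta$, and the pointwise strong convergence~\eqref{strongConvergenceQ} together with the resolvent convergence $K_q^{\eta}+V_{\nu,m}\to T_q+V_{\nu,m}$ to identify the limits; everything lives (in $q$) in the bounded ball $\overline{\mathfrak{B}}_{\epsilon_F+c_{V_{\nu,m}}}$, which I use freely. \emph{Step 1 (uniform energy bound).} Integrating~\eqref{boundsupRegular} over $q\in\overline{\mathfrak{B}}_{\epsilon_F+c_{V_{\nu,m}}}$ and dividing by $(2\pi)^2$, the left-hand side is the sum of $\utr((T-\epsilon_F)\overline{Q}_m^{\eta})\ge 0$, the non-negative $h_q^{\eta}$-term, and $\int_{\bb R}V_{\nu,m}\rho_{\overline{Q}_m^{\eta}}$, and it equals the non-positive right-hand side; dropping the first two terms gives
\[
\utr\left((T-\epsilon_F)\overline{Q}_m^{\eta}\right)\le -\int_{\bb R}V_{\nu,m}\,\rho_{\overline{Q}_m^{\eta}}.
\]
Splitting $\rho_{\overline{Q}_m^{\eta}}=\rho_{\overline{Q}_m^{\eta}}^{c,-}+\rho_{\overline{Q}_m^{\eta}}^{c,+}$ as in Proposition~\ref{prop:densityK}, using $V_{\nu,m}\in L^2(\bb R)\cap L^{p'}(\bb R)$ with $p'=p/(p-1)$ (recall $V_{\nu,m}\in H^1(\bb R)\hookrightarrow L^2\cap L^\infty$), and controlling $\|\rho_{\overline{Q}_m^{\eta}}^{c,-}\|_{L^2}$ and $\|\rho_{\overline{Q}_m^{\eta}}^{c,+}\|_{L^p}$ via~\eqref{LTtotalenergybound}, one obtains $X\le C_1 X^{1/2}+C_2 X^{1/p}$ with $X:=\utr((T-\epsilon_F)\overline{Q}_m^{\eta})$ and $C_1,C_2$ independent of $\eta$. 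Since $1/2<1$ and $1/p<1$, this absorption inequality forces $X\le C$ uniformly in $\eta$; feeding this back into~\eqref{boundsupRegular} also bounds $\int_{\overline{\mathfrak{B}}_{\epsilon_F+c_{V_{\nu,m}}}}\bigl\||K_q^{\eta}+V_{\nu,m}-\epsilon_F|^{1/2}\overline{Q}_{m,q}^{\eta}\bigr\|_{\mathfrak{S}_2}^2\,dq$ uniformly.

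\emph{Steps 2--3 (weak compactness and identification of the limits).} Using~\eqref{projetorsAlgebra4} and $\overline{Q}_{m,q}^{\eta,++}-\overline{Q}_{m,q}^{\eta,--}\ge 0$, Step~1 controls $q\mapsto|T_q-\epsilon_F|^{1/2}\overline{Q}_{m,q}^{\eta}$, $q\mapsto\overline{Q}_{m,q}^{\eta}|T_q-\epsilon_F|^{1/2}$, $q\mapsto|K_q^{\eta}+V_{\nu,m}-\epsilon_F|^{1/2}\overline{Q}_{m,q}^{\eta}$ and $q\mapsto\overline{Q}_{m,q}^{\eta}|K_q^{\eta}+V_{\nu,m}-\epsilon_F|^{1/2}$ uniformly in $L^2(\overline{\mathfrak{B}}_{\epsilon_F+c_{V_{\nu,m}}};\mathfrak{S}_2)$, and $\rho_{\overline{Q}_m^{\eta}}$ uniformly in $L^2(\bb R)+L^p(\bb R)$. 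I extract a sequence $\eta_i\downarrow 0$ along which all these converge weakly; since $\bigcup_i\mathfrak{M}_{\mathrm{pp}}^{\eta_i}$ is still negligible, \eqref{strongConvergenceQ} gives $\overline{Q}_{m,q}^{\eta_i}\to\overline{Q}_{m,q}$ strongly for a.e.\ $q$. Because $\|h_q^{\eta_i}\|_{L^\infty}\le\eta_i$, the Hamiltonians $K_q^{\eta_i}+V_{\nu,m}$ converge to $T_q+V_{\nu,m}$ in the norm-resolvent sense; writing $|A-\epsilon_F|^{1/2}\psi=g(A)(A+\rmi)\psi$ with $g(x):=|x-\epsilon_F|^{1/2}(x+\rmi)^{-1}\in C_0(\bb R)$ yields $|K_q^{\eta_i}+V_{\nu,m}-\epsilon_F|^{1/2}\psi\to|T_q+V_{\nu,m}-\epsilon_F|^{1/2}\psi$ in $L^2(\bb R)$ for every $\psi\in\mathscr S(\bb R)$, with a bound uniform in $i$ and in $q$ over the bounded ball. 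Testing against rank-one operators $\Ket{\psi}\Bra{\phi}$, $\phi,\psi\in\mathscr S(\bb R)$, the strong convergences of $\overline{Q}_{m,q}^{\eta_i}\phi$ and of $|K_q^{\eta_i}+V_{\nu,m}-\epsilon_F|^{1/2}\psi$ identify the $\mathfrak{S}_2$-weak limit of $\overline{Q}_{m,q}^{\eta_i}|K_q^{\eta_i}+V_{\nu,m}-\epsilon_F|^{1/2}$ with $\overline{Q}_{m,q}|T_q+V_{\nu,m}-\epsilon_F|^{1/2}$ for a.e.\ $q$; testing the $L^2(\overline{\mathfrak{B}}_{\epsilon_F+c_{V_{\nu,m}}};\mathfrak{S}_2)$-weak limit against $q\mapsto\mathds 1_{\mathcal O}(q)\Ket{\psi}\Bra{\phi}$ and passing to the limit by dominated convergence (licit on the bounded ball) then gives~\eqref{re1}, and the same argument with $h_q^{\eta_i}\equiv 0$ gives~\eqref{re2}.

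\emph{Step 4 (conclusion, and the main obstacle).} By Fatou applied to the integrated bound of Step~1, $\liminf_i\bigl\||T_q-\epsilon_F|^{1/2}\overline{Q}_{m,q}^{\eta_i}\bigr\|_{\mathfrak{S}_2}<\infty$ for a.e.\ $q$; along a $q$-dependent subsequence a weak-$\mathfrak{S}_2$ limit exists and, by the strong convergence of $\overline{Q}_{m,q}^{\eta_i}$, equals $|T_q-\epsilon_F|^{1/2}\overline{Q}_{m,q}$, so $|T_q-\epsilon_F|^{1/2}\overline{Q}_{m,q}\in\mathfrak{S}_2$; arguing likewise on the diagonal blocks shows $|T_q-\epsilon_F|^{1/2}\overline{Q}_{m,q}^{\pm\pm}|T_q-\epsilon_F|^{1/2}\in\mathfrak{S}_1$, hence $\overline{Q}_{m,q}\in\mathcal{K}_q$ a.e. Applying to each of $\pm\m{Tr}_{L^2(\bb R)}(|T_q-\epsilon_F|^{1/2}\overline{Q}_{m,q}^{\pm\pm}|T_q-\epsilon_F|^{1/2})\ge 0$ the lower-semicontinuity argument already used in Section~\ref{thm:ExistenceMinimizerSec} (testing against finite-rank operators and letting the rank and the $q$-cutoff go to infinity by Fatou), in conjunction with~\eqref{re2}, yields $\utr((T-\epsilon_F)\overline{Q}_m)\le\liminf_i\utr((T-\epsilon_F)\overline{Q}_m^{\eta_i})\le C<\infty$; since $\overline{Q}_{m,q}=0$ off $\overline{\mathfrak{B}}_{\epsilon_F+c_{V_{\nu,m}}}$ and $q\mapsto\overline{Q}_{m,q}$ is trivially in $L^\infty(\bb R^2;\mathcal{S}(L^2(\bb R)))$ as a difference of orthogonal projectors, this is precisely~\eqref{re3} and shows $\overline{Q}_m\in\mathcal{K}$. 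The crux is the identification in Step~3 — matching the $\eta_i$-dependent operators $|K_q^{\eta_i}+V_{\nu,m}-\epsilon_F|^{1/2}$ with the limiting unbounded operator $|T_q+V_{\nu,m}-\epsilon_F|^{1/2}$, which forces one to route through a $C_0$-function of the Hamiltonians and to keep every estimate uniform over the momentum ball; the absorption estimate of Step~1 is the other subtle point, as it only closes because the Lieb--Thirring exponents $1/2$ and $1/p$ are strictly below~$1$.
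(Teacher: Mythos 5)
Your proof is correct and follows the same overall architecture as the paper's: uniform-in-$\eta$ bounds extracted from the energy identity~\eqref{boundsupRegular}, weak compactness in $L^2\left(\overline{\mathfrak{B}}_{\epsilon_F+c_{V_{\nu,m}}};\mathfrak{S}_2\right)$, identification of the limits through the pointwise strong convergence~\eqref{strongConvergenceQ} and norm-resolvent convergence tested against rank-one operators with dominated convergence in $q$, and weak lower semicontinuity to conclude that $\overline{Q}_m\in\mathcal{K}$ and that~\eqref{re3} holds. The one genuine difference is in your Step~1. The paper obtains the uniform bound on $\int\bigl\|\overline{Q}_{m,q}^{\eta}|K_q^{\eta}+V_{\nu,m}-\epsilon_F|^{1/2}\bigr\|_{\mathfrak{S}_2}^2\,dq$ from the minimality of $Q_{\nu,m}$: since $\overline{Q}_m^{\eta}\in\mathcal{K}$, one has $F(\overline{Q}_m^{\eta})\geq F(Q_{\nu,m})$, which combined with~\eqref{boundsupRegular} sandwiches that integral between $0$ and $-(2\pi)^2F(Q_{\nu,m})$; the absorption argument (Lieb--Thirring plus Cauchy--Schwarz) is then used only afterwards, to get the $|T_q-\epsilon_F|^{1/2}$-weighted bound needed for~\eqref{re2}. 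You instead run the absorption argument first on the full renormalized kinetic energy and deduce the $K$-weighted bound from the identity itself, never invoking the comparison with the minimizer $Q_{\nu,m}$ of the linear functional. Both closures are valid; yours is marginally more self-contained for this lemma, while the paper's is shorter for~\eqref{re1}. One simplification you could make explicit: with $c=c_{V_{\nu,m}}$ the component $\rho_{\overline{Q}_m^{\eta}}^{c,+}$ of Proposition~\ref{prop:densityK} vanishes identically, since $\overline{Q}_{m,q}^{\eta}=0$ off $\overline{\mathfrak{B}}_{\epsilon_F+c_{V_{\nu,m}}}$, so your inequality reduces to $X\leq C_1X^{1/2}$ with a single exponent, exactly as in the paper; your two-exponent version is of course also correct since both exponents are strictly below~$1$.
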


The proof of this result can be read in Section~\ref{lemma10Sec}. In order to conclude the proof of Proposition~\ref{prop:defectstateQ_m}, it suffices to obtain the claimed expression for $F(\overline{Q}_m )$. This is a consequence of~\eqref{projetorsAlgebra1} and~\eqref{re3}, combined with arguments similar to the ones used to establish~\eqref{boundsupRegular} in Section~\ref{regularizedOpPropProof}.

\subsubsection{Proof of Lemma \ref{regularizedOpProp}}
\label{regularizedOpPropProof}
We start by stating a useful lemma. 
\begin{lemma}
  \label{boundedQR}
  Fix $\eta >0$ and an integer $0 \leq n \leq N_2$ (appearing in the covering~\eqref{partition_2ef}). There exist positive constants $a_1(n, \eta)$, $a_2(n, \eta)$, $b_1(n,\eta)$, $b_2(n,\eta)$ and $c(n,\eta)$ such that, for any $q\in \overline{\mathfrak{B}}_{\epsilon_F+c_{V_{\nu,m}}}\cap \mathfrak{M}_{\mathrm{pp}}^{\complement} \cap \left(\mathfrak{M}_{\mathrm{pp}}^{\eta}\right)^\complement$ with $|q| \in \left[ \sqrt{n\eta},\sqrt{(n+1)\eta}\right)$, there is a smoothed closed contour~$\mathcal{C}_q \subset \mathbb{C}$ enclosing the spectra of $K_q^{\eta}$ and $K_q^{\eta}+V_{\nu,m}$ below the Fermi level~$\epsilon_F$ without intersecting them, for which
    \begin{equation}
      \forall \zeta \in\mathcal{C}_q, \qquad \left|K_q^{\eta}+V_{\nu,m}-\zeta\right|\geq c(n,\eta) >0,
    \end{equation} 
    and
    \begin{equation}
      \forall \zeta \in\mathcal{C}_q, \qquad a_1(n,\eta)\left(1+\frac{|q|^2}{2}-\frac{1}{2}\frac{d^2}{dz^2}\right)\leq |K_q^{\eta}+V_{\nu,m}-\zeta| \leq a_2(n,\eta)\left(1+\frac{|q|^2}{2}-\frac{1}{2}\frac{d^2}{dz^2}\right),
      \label{bd1}
  \end{equation}
    \begin{equation}
      \forall \zeta \in\mathcal{C}_q, \qquad b_1(n, \eta)\left(1+\frac{|q|^2}{2}-\frac{1}{2}\frac{d^2}{dz^2}\right) \leq |K_q^{\eta}-\zeta| \leq b_2(n,\eta)\left(1+\frac{|q|^2}{2}-\frac{1}{2}\frac{d^2}{dz^2}\right).
      \label{bd2}
    \end{equation}
\end{lemma}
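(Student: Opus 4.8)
The plan is to reduce the problem, band by band, to a $q$-independent model operator, then to build $\mathcal C_q$ explicitly and read off~\eqref{bd1}--\eqref{bd2} from resolvent identities. The crucial observation I would start from is that for $|q|\in[\sqrt{n\eta},\sqrt{(n+1)\eta})$ with $0\le n\le N_1$ the symbol $h_q^\eta$ depends on $q$ only through $n$, so that $K_q^\eta=\widetilde K_n^\eta+\frac{|q|^2}{2}$ with $\widetilde K_n^\eta:=-\frac12\frac{d^2}{dz^2}+h_n^\eta(-\rmi\frac{d}{dz})$ \emph{independent of $q$ on the band}; hence $\sigma(K_q^\eta)$ and $\sigma(K_q^\eta+V_{\nu,m})$ are the rigid translates by $\frac{|q|^2}{2}$ of $\sigma(\widetilde K_n^\eta)$ and $\sigma(\widetilde K_n^\eta+V_{\nu,m})$. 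The operator $\widetilde K_n^\eta$ is purely absolutely continuous with a spectral gap $(c_n-\eta,c_n+\eta)$, $c_n:=\epsilon_F-\frac{n\eta}{2}$, of width $2\eta$, and $V_{\nu,m}$ is $\widetilde K_n^\eta$-relatively compact (since $V_{\nu,m}(1-\frac{d^2}{dz^2})^{-1}\in\mathfrak{S}_2$ by~\eqref{KatoSeilerSimon}), so $\widetilde K_n^\eta+V_{\nu,m}$ has the same essential spectrum, contains $\sigma(\widetilde K_n^\eta)$, and has only finitely many eigenvalues in any compact subinterval of the open gap. I would fix once and for all the finite set $\mu_1^{(n)}<\dots<\mu_{K_n}^{(n)}$ of eigenvalues of $\widetilde K_n^\eta+V_{\nu,m}$ in $[c_n-\frac34\eta,c_n+\frac34\eta]$ and its minimal spacing $d_n>0$, both depending only on $(n,\eta)$. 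The remaining bands $N_1+1\le n\le N_2$ are easier: there $K_q^\eta=T_q$ and $|q|^2>2\epsilon_F$, so $\epsilon_F$ lies below $\sigma_{\rm ess}(T_q+V_{\nu,m})=[\frac{|q|^2}{2},\infty)$ by at least $\frac{(N_1+1)\eta}{2}-\epsilon_F>0$, with only finitely many eigenvalues of $T_0+V_{\nu,m}$ beneath the corresponding negative level; an entirely analogous and simpler discussion applies.

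For the contour I would set $s_q:=\epsilon_F-\frac{|q|^2}{2}\in(c_n-\frac\eta2,c_n]$, the Fermi level read in the $\widetilde K_n^\eta$ frame. Discarding $\mathfrak{M}_{\mathrm{pp}}^\eta$ guarantees $s_q\notin\{\mu_j^{(n)}\}$, so $s_q$ is strictly bracketed by two consecutive points of $\{\mu_j^{(n)}\}$, or it lies below $\mu_1^{(n)}$, or above $\mu_{K_n}^{(n)}$; in the first case I let $\mathcal C_q$ cross the real axis at the midpoint $R_q$ of that pair (which separates $\sigma(\widetilde K_n^\eta+V_{\nu,m})\cap(-\infty,s_q)$ from the rest since there is no spectrum in between), in the other cases at a fixed offset of order $\eta$ from the nearest spectral point, so that in every case $\mathrm{dist}(R_q,\sigma(\widetilde K_n^\eta+V_{\nu,m}))\ge c_0(n,\eta):=\frac12\min(d_n,\frac\eta4)>0$. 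Then $\mathcal C_q$ is the smoothed boundary of $[L_q,R_q]\times[-\frac\eta4,\frac\eta4]$, shifted back by $\frac{|q|^2}{2}$, with $L_q$ one unit to the left of $\inf\sigma(K_q^\eta+V_{\nu,m})$. By construction it encloses $\sigma(K_q^\eta+V_{\nu,m})\cap(-\infty,\epsilon_F)$, hence also $\sigma(K_q^\eta)\cap(-\infty,\epsilon_F)$ because $\sigma(K_q^\eta)=\sigma_{\rm ess}(K_q^\eta+V_{\nu,m})\subseteq\sigma(K_q^\eta+V_{\nu,m})$; it meets neither spectrum; it stays in a region bounded uniformly in $q$ (using $|q|^2\le 2(\epsilon_F+c_{V_{\nu,m}})$); and it satisfies $\mathrm{dist}\big(\mathcal C_q,\sigma(K_q^\eta+V_{\nu,m})\big)\ge c(n,\eta):=\min(1,c_0(n,\eta),\frac\eta4)>0$ as well as $\mathrm{dist}(\mathcal C_q,\sigma(K_q^\eta))\ge c(n,\eta)$. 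The first estimate of the lemma is then just the spectral-theorem bound $|A-\zeta|\ge\mathrm{dist}(\zeta,\sigma(A))$.

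It remains to obtain~\eqref{bd1}--\eqref{bd2}. I would write $W_q:=h_q^\eta(-\rmi\frac{d}{dz})+V_{\nu,m}$, bounded with $\|W_q\|\le\eta+c_{V_{\nu,m}}$, and $\langle T_q\rangle:=1+\frac{|q|^2}{2}-\frac12\frac{d^2}{dz^2}\ge\max(1,T_q)$, noting $D(\langle T_q\rangle)=D(K_q^\eta+V_{\nu,m})=H^2(\bb R)$ by Kato--Rellich. For $\zeta\in\mathcal C_q$ the upper bound follows from $\|(K_q^\eta+V_{\nu,m}-\zeta)\psi\|\le\|T_q\psi\|+(\|W_q\|+\sup_{\mathcal C_q}|\zeta|)\|\psi\|\le a_2(n,\eta)\|\langle T_q\rangle\psi\|$, which gives $|K_q^\eta+V_{\nu,m}-\zeta|^2\le a_2^2\langle T_q\rangle^2$ and, by operator monotonicity of the square root, $|K_q^\eta+V_{\nu,m}-\zeta|\le a_2\langle T_q\rangle$. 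For the lower bound I would use the identity
\[
\langle T_q\rangle\,(K_q^\eta+V_{\nu,m}-\zeta)^{-1}=1+(1+\zeta)(K_q^\eta+V_{\nu,m}-\zeta)^{-1}-W_q(K_q^\eta+V_{\nu,m}-\zeta)^{-1},
\]
whose norm is at most $a_1(n,\eta)^{-1}$ thanks to the distance bound of the previous paragraph, so that $\langle T_q\rangle^2\le a_1^{-2}|K_q^\eta+V_{\nu,m}-\zeta|^2$ and $|K_q^\eta+V_{\nu,m}-\zeta|\ge a_1\langle T_q\rangle$, again by square-root monotonicity. Estimate~\eqref{bd2} is obtained identically, with $W_q$ replaced by $h_q^\eta(-\rmi\frac{d}{dz})$.

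The hard part will be the construction of $\mathcal C_q$ with a distance to $\sigma(K_q^\eta+V_{\nu,m})$ that is bounded below \emph{uniformly in $q$ on the band}: a priori the eigenvalues of $K_q^\eta+V_{\nu,m}$ inside the spectral gap could cluster towards the Fermi level as $q$ varies, which would collapse $c(n,\eta)$. The mechanism rescuing uniformity is precisely the rigid-shift identity $K_q^\eta=\widetilde K_n^\eta+\frac{|q|^2}{2}$ on each band: it converts ``gap eigenvalues near $\epsilon_F$'' into the fixed, $q$-independent finite configuration $\{\mu_j^{(n)}\}$ with fixed minimal spacing $d_n$, so that threading $\mathcal C_q$ through the midpoint of the bracketing pair — which is only admissible once the (negligible) set $\mathfrak{M}_{\mathrm{pp}}^\eta$ of resonant $q$ has been removed — produces a constant that genuinely does not see $q$.
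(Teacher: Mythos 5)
Your proposal is correct and follows essentially the same route as the paper: the rigid-shift identity $K_q^{\eta}=K_{q_0}^{\eta}+\tfrac{|q|^2-|q_0|^2}{2}$ on each band, the reduction to a fixed finite configuration of gap eigenvalues with fixed spacing, and the threading of $\mathcal{C}_q$ through the midpoint of the bracketing pair once the resonant circles $\mathfrak{M}_{\mathrm{pp}}^{\eta}$ are removed are exactly the paper's argument. The only (harmless) difference is that you prove \eqref{bd1}--\eqref{bd2} directly via the resolvent identity and operator monotonicity of the square root, whereas the paper delegates this step to an adaptation of a lemma from~\cite{Cances2008}.
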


We postpone the proof of this result to Section~\ref{boundedQRSec}. Let us first prove the statement~(1) of Lemma~\ref{regularizedOpProp}. Given $q\in \overline{\mathfrak{B}}_{\epsilon_F+c_{V_{\nu,m}}}\cap \mathfrak{M}_{\mathrm{pp}}^{\complement} \cap \left(\mathfrak{M}_{\mathrm{pp}}^{\eta}\right)^\complement$, and with the curve $\mathcal{C}_q$ introduced in Lemma~\ref{boundedQR}, Cauchy's formula gives 
\begin{align*}
\overline{Q}_{m,q}^{\eta}&= -\frac{1}{2\rmi \pi} \oint_{\mathcal{C}_q}\left(\frac{1}{K_q^{\eta}+V_{\nu,m}-\zeta}-\frac{1}{K_q^{\eta}-\zeta}\right)\,d\zeta =\frac{1}{2\rmi \pi} \oint_{\mathcal{C}_q}\frac{1}{K_q^{\eta}+V_{\nu,m}-\zeta}V_{\nu,m}\frac{1}{K_q^{\eta}-\zeta}\,d\zeta.
\end{align*}
Since the function $k \mapsto f_q(k) := \left(\frac{k^2}{2}+ \frac{|q|^2}{2}+1\right)^{-1}$ belongs to $L^2(\bb R)$ with $\|f_q\|_{L^2} \leq \|f_0\|_{L^2}$, the Kato--Seiler--Simon inequality~\eqref{KatoSeilerSimon} together with~\eqref{bd1} and~\eqref{bd2} implies that, for $|q| \in \left[ \sqrt{n\eta},\sqrt{(n+1)\eta}\right)$, there exists $C(n,\eta) \in \mathbb{R}_+$ such that 
\begin{align*}
& \bnorm{\left(1+\frac{|q|^2}{2}-\frac12\frac{d^2}{dz^2}\right)\overline{Q}_{m,q}^{\eta}}_{\mathfrak{S}_2} =  \bnorm{ \frac{1}{2\rmi\pi}\left(1+\frac{|q|^2}{2}-\frac12\frac{d^2}{dz^2}\right)\oint_{\mathcal{C}_q}\left(\frac{1}{K_q^{\eta}+V_{\nu,m}-\zeta}V_{\nu,m}\frac{1}{K_q^{\eta}-\zeta}\right)d\zeta}_{\mathfrak{S}_2} \notag \\
&\qquad \leq C(n,\eta)\norm[\Bigg]{V_{\nu,m}\left(1+\frac{|q|^2}{2}-\frac12\frac{d^2}{dz^2}\right)^{-1}}_{\mathfrak{S}_2}\leq \frac{C(n,\eta)}{\sqrt{2\pi}}\norm{V_{\nu,m}}_{L^2}\norm{f_q}_{L^2}. 
\end{align*}
Similar computations show that $\overline{Q}_{m,q}^{\eta}, \left(-\frac{d^2}{dz^2}\right)\overline{Q}_{m,q}^{\eta}, \left(K_q^{\eta}+V_{\nu,m}\right)\overline{Q}_{m,q}^{\eta}$ and $|T_q-\epsilon_F|^{1/2}\overline{Q}_{m,q}^{\eta}$ all belong to~$\mathfrak{S}_2$, while $|T_q-\epsilon_F|^{1/2}\overline{Q}_{m,q}^{\eta,\pm\pm}|T_q-\epsilon_F|^{1/2}$ is in~$\mathfrak{S}_1$. This implies that $\overline{Q}_{m,q}^{\eta}\in \mathcal{K}_q$. 

The proof of~\eqref{boundsupRegular} in the statement~(2) is obtained by an easy adaptation of~\cite[Lemma 4.2]{frank2013}. We approximate $\overline{Q}_{m,q}^{\eta}\in \mathcal{K}_q$ by finite-rank operators for which we prove the equality, and then rely on a density argument based on Lemma~\ref{continuityLemma} to conclude. 

Let us finally prove that $\overline{Q}_m^{\eta}\in \mathcal{K}$. It remains to show that
\begin{align*}
\int_{\overline{\mathfrak{B}}_{\epsilon_F+c_{V_{\nu,m}}}}\m{Tr}_{L^2(\bb R)}\left(|T_q-\epsilon_F|^{1/2}\left(\overline{Q}_{m,q}^{\eta,++}-\overline{Q}_{m,q}^{\eta,--}\right)|T_q-\epsilon_F|^{1/2}\right)dq<\infty.
\end{align*}
Consider $q \in \overline{\mathfrak{B}}_{\epsilon_F+c_{V_{\nu,m}}}\cap \mathfrak{M}_{\mathrm{pp}}^{\complement} \cap \left(\mathfrak{M}_{\mathrm{pp}}^{\eta}\right)^\complement$ and an associated integer $0 \leq n \leq N_2$ such that $|q| \in \left[ \sqrt{n\eta},\sqrt{(n+1)\eta}\right)$ for the covering~\eqref{partition_2ef}. In view of~\eqref{projetorsAlgebra4},~\eqref{bd1} and~\eqref{bd2}, there exists $R(n,\eta) \in \mathbb{R}_+$ such that
\[
\begin{aligned}
& \m{Tr}_{L^2(\bb R)}\left(|T_q-\epsilon_F|^{1/2}\left(\overline{Q}_{m,q}^{\eta,++}-\overline{Q}_{m,q}^{\eta,--}\right)|T_q-\epsilon_F|^{1/2}\right) = \bnorm{\overline{Q}_{m,q}^{\eta}|T_q-\epsilon_F|^{1/2}}_{\mathfrak{S}_2}^2\\
& \qquad\qquad = \norm[\Bigg]{\frac{1}{2\rmi\pi}\oint_{\mathcal{C}_q}\left(\frac{1}{K_q^{\eta}+V_{\nu,m}-\zeta}V_{\nu,m}\frac{1}{K_q^{\eta}-\zeta}|T_q-\epsilon_F|^{1/2}\right)d\zeta }_{\mathfrak{S}_2}^2\\
& \qquad\qquad \leq R(n,\eta)\norm[\Bigg]{\left(1+\frac{|q|^2}{2}-\frac12\frac{d^2}{dz^2}\right)^{-1}V_{\nu,m}}_{\mathfrak{S}_2}^2 \leq \frac{R(n,\eta)}{\sqrt{2\pi}}\norm{V_{\nu,m}}_{L^2}^2\norm{f_q}_{L^2}^2.
\end{aligned}
\]
Therefore, since $\mathfrak{M}_{\mathrm{pp}}$ and $\mathfrak{M}_{\mathrm{pp}}^\eta$ are negligible, and $\|f_q\|_{L^2} \leq \|f_0\|_{L^2}$,
\[
\begin{aligned}
& \int_{\overline{\mathfrak{B}}_{\epsilon_F+c_{V_{\nu,m}}}}\m{Tr}_{L^2(\bb R)}\left(|T_q-\epsilon_F|^{1/2}\left(\overline{Q}_{m,q}^{\eta,++}-\overline{Q}_{m,q}^{\eta,--}\right)|T_q-\epsilon_F|^{1/2}\right)dq \\
&\qquad\qquad\qquad = \int_{\overline{\mathfrak{B}}_{\epsilon_F+c_{V_{\nu,m}}}\cap \mathfrak{M}_{\mathrm{pp}}^{\complement} \cap \left(\mathfrak{M}_{\mathrm{pp}}^{\eta}\right)^\complement}\m{Tr}_{L^2(\bb R)}\left(|T_q-\epsilon_F|^{1/2}\left(\overline{Q}_{m,q}^{\eta,++}-\overline{Q}_{m,q}^{\eta,--}\right)|T_q-\epsilon_F|^{1/2}\right)dq \\
&\qquad\qquad\qquad \leq \max_{n=0,1\cdots,N_2+1} \frac{R(n,\eta)}{\sqrt{2\pi}}\norm{V_{\nu,m}}_{L^2}^2\int_{\overline{\mathfrak{B}}_{\epsilon_F+c_{V_{\nu,m}}}}\norm{f_q}_{L^2}^2\,dq < +\infty,
\end{aligned}
\]
which allows to conclude.

\subsubsection{Proof of Lemma \ref{lemma10} }
\label{lemma10Sec}

Let us prove~\eqref{re1} by showing that 
\[
\int_{ \overline{\mathfrak{B}}_{\epsilon_F+c_{V_{\nu,m}}}}\bnorm{\overline{Q}_{m,q}^{\eta}\left|K_q^{\eta}+V_{\nu,m}-\epsilon_F\right|^{1/2}}_{\mathfrak{S}_2}^2 \,dq \geq 0
\]
is upper bounded uniformly with respect to $\eta \in [0,1]$. Remark that the state $\overline{Q}_{m}^{\eta}$ belongs to $\mathcal{K}$ for all $\eta>0$ by Lemma~\ref{regularizedOpProp}. Since $Q_{\nu,m}$ minimizes the functional~\eqref{minimizedF},
\begin{equation}
\begin{aligned}
F(\overline{Q}_{m}^{\eta})& = \utr\left(\left(T-\epsilon_F\right) \overline{Q}_{m}^{\eta}\right) + \int_{\bb R}V_{\nu,m}(z)\rho_{\overline{Q}_{m}^{\eta}}(z)\,dz\\
& =\frac{1}{(2\pi)^2}\int_{ \overline{\mathfrak{B}}_{\epsilon_F+c_{V_{\nu,m}}}}\left(\bnorm{\overline{Q}_{m,q}^{\eta}\left|T_q-\epsilon_F\right|^{1/2}}_{\mathfrak{S}_2}^2 +\int_{\bb R}V_{\nu,m}(z)\rho_{\overline{Q}_{m,q}^{\eta}}(z)\,dz\right)dq \geq F(Q_{\nu,m}).
\end{aligned}
\label{energybound1}
\end{equation}
In addition,
\[
\m{Tr}_{L^2(\bb R)}\left(\left|h_q^{\eta }\left(-\rmi \frac{d}{dz}\right)\right|\left(\overline{Q}_{m,q}^{\eta,++}-\overline{Q}_{m,q}^{\eta,--}\right)\right) =  \m{Tr}_{L^2(\bb R)}\left(\left|h_q^{\eta}\left(-\rmi \frac{d}{dz}\right)\right|^{1/2}\left(\overline{Q}_{m,q}^{\eta}\right)^2\left|h_q^{\eta}\left(-\rmi \frac{d}{dz}\right)\right|^{1/2}\right)\geq 0.
\]
Combining this inequality with~\eqref{boundsupRegular} and \eqref{energybound1} leads to the uniform bound
\begin{equation}
\begin{aligned}
 (2\pi)^2F(Q_{\nu, m}) & \leq  \int_{ \overline{\mathfrak{B}}_{\epsilon_F+c_{V_{\nu,m}}}}\left(\bnorm{\overline{Q}_{m,q}^{\eta}\left|T_q-\epsilon_F\right|^{1/2}}_{\mathfrak{S}_2}^2+\int_{\bb R}V_{\nu,m}(z)\rho_{\overline{Q}_{m,q}^{\eta}}(z)\,dz\right)dq \\
 & \leq  -\int_{ \overline{\mathfrak{B}}_{\epsilon_F+c_{V_{\nu,m}}}}\norm[\Bigg]{\overline{Q}_{m,q}^{\eta}\left|K_q^{\eta}+V_{\nu,m}-\epsilon_F\right|^{1/2}}_{\mathfrak{S}_2}^2 dq \leq 0.
\end{aligned}
\label{re4}
\end{equation}
It is therefore possible to extract a sequence $\left\{\eta_{i}\right\}_{i\geq 1}$ such that $\overline{Q}_{m,(\boldsymbol{\cdot})}^{\eta_i}\left|K_{(\boldsymbol{\cdot})}^{\eta_i}+V_{m}-\epsilon_F\right|^{1/2}$ converges weakly in $L^2(\overline{\mathfrak{B}}_{\epsilon_F+c_{V_{\nu,m}}};\mathfrak{S}_2)$ to some operator $S_{(\boldsymbol{\cdot})}\in L^2(\overline{\mathfrak{B}}_{\epsilon_F+c_{V_{\nu,m}}};\mathfrak{S}_2)$. We next show that for all $q\in \overline{\mathfrak{B}}_{\epsilon_F+c_{V_{\nu,m}}}$, $S_q = \overline{Q}_{m,q}\left|T_q+V_{\nu,m}-\epsilon_F\right|^{1/2}$. By the same reasoning as in~\cite[Section~4.2]{frank2013}, it can be shown that, for all $q\in\overline{\mathfrak{B}}_{\epsilon_F+c_{V_{\nu,m}}}$, the following strong convergence holds in~$L^2(\mathbb{R})$: for any $\varphi \in H^2(\mathbb{R})$,
\begin{equation}
\label{eq:strong}
\overline{Q}_{m,q}^{\eta_i}\left|K_q^{\eta_i}+V_{\nu,m}-\epsilon_F\right|^{1/2}\varphi \xrightarrow[i \to +\infty]{}\overline{Q}_{m,q}\left|T_q+V_{\nu,m}-\epsilon_F\right|^{1/2} \varphi\quad \text{in } L^2(\mathbb{R}).
\end{equation}
Now, consider $\phi\in H^2(\bb R)$ and the associated rank-1 operator $K:= \Ket{\phi}\Bra{\phi}\in \mathfrak{S}_1$. In particular $K$ can be seen as a constant function of $L^2(\overline{\mathfrak{B}}_{\epsilon_F+c_{V_{\nu,m}}};\mathfrak{S}_2)$. On the one hand, by the weak convergence of $\overline{Q}_{m,(\boldsymbol{\cdot})}^{\eta_i}\left|K_{(\boldsymbol{\cdot})}^{\eta_i}+V_{\nu,m}-\epsilon_F\right|^{1/2}$ to $S_{(\boldsymbol{\cdot})}$ in $L^2(\overline{\mathfrak{B}}_{\epsilon_F+c_{V_{\nu,m}}};\mathfrak{S}_2)$,
\begin{equation}
\begin{aligned}
&\int_{\overline{\mathfrak{B}}_{\epsilon_F+c_{V_{\nu,m}}}}\left\langle \phi\left|\, \overline{Q}_{m,q}^{\eta_i}\left|K_q^{\eta_i}+V_{\nu,m}-\epsilon_F\right|^{1/2} \right|\,\phi\right\rangle dq = \int_{\overline{\mathfrak{B}}_{\epsilon_F+c_{V_{\nu,m}}}}\m{Tr}_{L^2(\bb R)}\left(\overline{Q}_{m,q}^{\eta_i}\left|K_q^{\eta_i}+V_{\nu,m}-\epsilon_F\right|^{1/2}K\right) dq \\
& \qquad \xrightarrow[i \to +\infty]{} \int_{\overline{\mathfrak{B}}_{\epsilon_F+c_{V_{\nu,m}}}}\m{Tr}_{L^2(\bb R)}\left(S_qK\right) dq = \int_{\overline{\mathfrak{B}}_{\epsilon_F+c_{V_{\nu,m}}}}\left\langle\phi\left|\, S_q \right|\,\phi \right\rangle \, dq .
\end{aligned}
\label{weak1}
\end{equation}
On the other hand, by the strong convergence~\eqref{eq:strong}, it holds, for all $q\in\overline{\mathfrak{B}}_{\epsilon_F+c_{V_{\nu,m}}}$,
\[
\left\langle\phi\left|\,\overline{Q}_{m,q}^{\eta_i}\left|K_q^{\eta_i}+V_{\nu,m}-\epsilon_F\right|^{1/2}\right|\phi\right\rangle \xrightarrow[i\to +\infty]{}\left\langle\phi\left| \,\overline{Q}_{m,q}\left|T_q+V_{\nu,m}-\epsilon_F\right|^{1/2}\right|\, \phi\right\rangle.
\]
A simple computation shows that there exists $C_\phi \in \mathbb{R}_+$ such that $\left|\left\langle\phi\left| \,\overline{Q}_{m,q}^{\eta_i}\left|K_q^{\eta_i}+V_{\nu,m}-\epsilon_F\right|^{1/2}\right|\,\phi\right\rangle \right|\leq C_\phi$ for all $i \geq 1$ and $q \in \overline{\mathfrak{B}}_{\epsilon_F+c_{V_{\nu,m}}}$. By the dominated convergence theorem, 
\[
\int_{\overline{\mathfrak{B}}_{\epsilon_F+c_{V_{\nu,m}}}}\left\langle\phi\left|\,\overline{Q}_{m,q}^{\eta_i}\left|K_q^{\eta_i}+V_{\nu,m}-\epsilon_F\right|^{1/2}\right| \,\phi\right \rangle dq \xrightarrow[i\to +\infty]{}\int_{\overline{\mathfrak{B}}_{\epsilon_F+c_{V_{\nu,m}}}}\left\langle\phi\left| \,\overline{Q}_{m,q}\left|T_q+V_{\nu,m}-\epsilon_F\right|^{1/2}\right|\,\phi\right\rangle dq.
\]
The comparison of the previous limit and~\eqref{weak1} implies that $\overline{Q}_{m,(\boldsymbol{\cdot})}\left|T_{(\boldsymbol{\cdot})}+V_{\nu,m}-\epsilon_F\right|^{1/2} = S_{(\boldsymbol{\cdot})} \in L^2\left(\overline{\mathfrak{B}}_{\epsilon_F+c_{V_{\nu,m}}};\mathfrak{S}_2\right)$.

We next prove~\eqref{re2} by showing that
\[
\int_{\overline{\mathfrak{B}}_{\epsilon_F+c_{V_{\nu,m}}}}\bnorm{|T_q-\epsilon_F|^{1/2}\overline{Q}_{m,q}^{\eta}}_{\mathfrak{S}_2}^2 \,dq\geq 0 
\]
is upper bounded uniformly with respect to~$\eta\in [0,1]$. Introduce 
\[
\rho_{\overline{Q}_{m}^{\eta}}(z) = \int_{\mathbb{R}^2}\rho_{\overline{Q}_{m,q}^{\eta}}(z)\,dq = \int_{\overline{\mathfrak{B}}_{\epsilon_F+c_{V_{\nu,m}}}}\rho_{\overline{Q}_{m,q}^{\eta}}(z)\,dq,
\]
which is well defined by Proposition~\ref{prop:densityK} since $\overline{Q}_{m}^{\eta} \in \mathcal{K}$. Moreover, in view of~\eqref{LTtotalenergybound} with $c = c_{V_{\nu,m}}$, there exists a constant $R \in \mathbb{R}_+$ such that 
\[
\left\|\rho_{\overline{Q}_{m}^{\eta}}\right\|_{L^2}^2 \leq R \int_{\overline{\mathfrak{B}}_{\epsilon_F+c_{V_{\nu,m}}}}\bnorm{|T_q-\epsilon_F|^{1/2}\overline{Q}_{m,q}^{\eta}}_{\mathfrak{S}_2}^2 dq.
\]
In view of~\eqref{re4},
\[
\int_{\overline{\mathfrak{B}}_{\epsilon_F+c_{V_{\nu,m}}}}\bnorm{|T_q-\epsilon_F|^{1/2}\overline{Q}_{m,q}^{\eta}}_{\mathfrak{S}_2}^2 dq \leq -\int_{\bb R}V_{\nu,m}\rho_{\overline{Q}_{m}^{\eta}},
\]
so that a Cauchy--Schwarz inequality leads to
\[
0\leq \int_{\overline{\mathfrak{B}}_{\epsilon_F+c_{V_{\nu,m}}}} \bnorm{|T_q-\epsilon_F|^{1/2}\overline{Q}_{m,q}^{\eta}}_{\mathfrak{S}_2}^2 dq \leq R \norm{V_{\nu,m}}_{L^2}^2.
\]By arguments similar to the ones used above, we deduce that it is possible to further extract a subsequence (still denoted by $(\eta_i)_{i \geq 1}$ with some abuse of notation) such that $\overline{Q}_{m,(\boldsymbol{\cdot})}^{\eta_i}\left|T_{(\boldsymbol{\cdot})}-\epsilon_F\right|^{1/2}\xrightharpoonup[i\to +\infty]{}\overline{Q}_{m,(\boldsymbol{\cdot})}\left|T_{(\boldsymbol{\cdot})}-\epsilon_F\right|^{1/2}$ weakly in $L^2(\overline{\mathfrak{B}}_{\epsilon_F+c_{V_{\nu,m}}};\mathfrak{S}_2)$ and
\[
0  \leq \int_{\overline{\mathfrak{B}}_{\epsilon_F+c_{V_{\nu,m}}}}\bnorm{\overline{Q}_{m,q}\left|T_q-\epsilon_F\right|^{1/2}}_{\mathfrak{S}_2}^2\,dq \leq  \liminf_{i \to +\infty} \int_{\overline{\mathfrak{B}}_{\epsilon_F+c_{V_{\nu,m}}}}\bnorm{\overline{Q}_{m,q}^{\eta_i}\left|T_q-\epsilon_F\right|^{1/2}}_{\mathfrak{S}_2}^2 dq < \infty.
\]
This allows us to conclude that $\overline{Q}_m\in\mathcal{K}$ using~\eqref{projetorsAlgebra3}, and shows that~\eqref{re3} holds.

\subsubsection{Proof of Lemma \ref{boundedQR}}
\label{boundedQRSec}

Since $V_{\nu,m}$ is a compact perturbation of $K_q^\eta$ for any $q \in \mathbb{R}^2$, the only possible accumulation points of the discrete spectrum of $K_q^{\eta}+V_{\nu,m}$ are $|q|^2/2$, $\epsilon_F-\eta+\frac{|q|^2-n\eta}{2}$ and $\epsilon_F+\eta+\frac{|q|^2-n\eta}{2}$, where $0 \leq n \leq N_2$ is such that $\sqrt{n\eta} \leq |q| < \sqrt{(n+1)\eta}$. For the remainder of the proof, we fix an integer $0 \leq n \leq N_2$ and an element $q_0 \in \mathbb{R}^2$ with $|q_0|=\sqrt{n\eta}$. The objects we introduce are depicted in Figure~\ref{FigSpectralN}.
\begin{figure}[h!]
\centering
\includegraphics[height=0.5\textwidth]{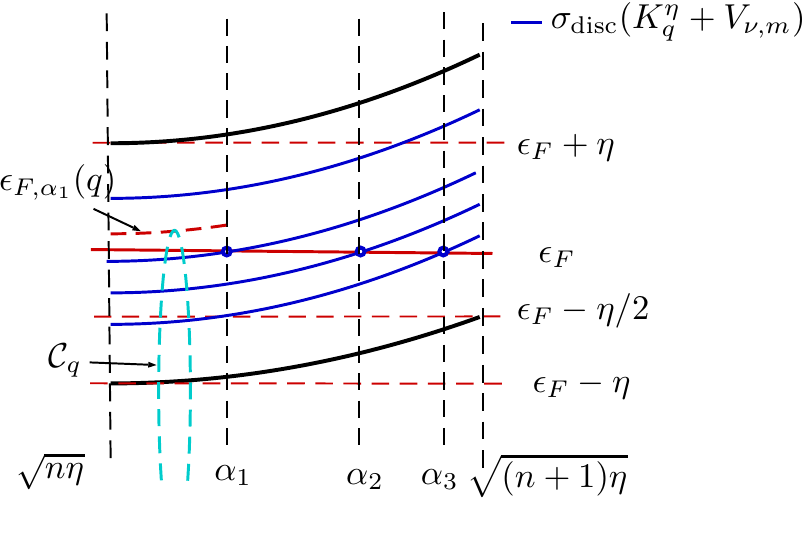}~
\caption{Spectral structure of $K_q^{\eta}+V_m$ around the Fermi level $\epsilon_F$ for $|q| \in \left[\sqrt{n\eta},\sqrt{(n+1)\eta}\right)$. The dark blue lines correspond to elements of $\sigma_{\mathrm{disc}}\left(K_q^{\eta}+V_{\nu,m}\right)$ which are obtained from those at $q_0$ such that $|q_0| = \sqrt{n\eta}$ by adding $(|q|^2-|q_0|^2)/2$. The points~$q$ such that $|q|=\alpha_i$ are removed. The curve $\mathcal{C}_q$ (dashed light blue line) is defined to pass through $\epsilon_{F,\alpha_1}(q)$ for $|q| \in \left[\sqrt{n\eta},\alpha_1\right)$ . }
\label{FigSpectralN}
\end{figure}

First of all remark that the gap-opening function $h_q^{\eta}$ is independent of $|q|$ in each interval $\left[\sqrt{n\eta},\sqrt{(n+1)\eta}\right)$. Introduce the integer $M\geq 0$ such that
\[
\sigma_{\mathrm{disc}}\left(K_{q_0}^{\eta}+V_{\nu,m}\right) \cap \left[\epsilon_F - \frac{3\eta}{4},\epsilon_F+\frac{\eta}{4}\right] = \left\{\varepsilon_i\right\}_{0 \leq i\leq M}, 
\qquad 
\epsilon_F -\frac{3\eta}{4} \leq \varepsilon_0 < \cdots < \varepsilon_M \leq \epsilon_F + \frac{\eta}{4}.
\]
Note that we restrict the spectral window around~$\epsilon_F$ in order to discard the possible accumulation points of the discrete spectrum. We assume that $q_0 \not \in \mathfrak{M}_{\rm pp}$; the case when $q_0\in \mathfrak{M}_{\rm pp} $ can however be handled by a simple modification of our arguments. Note that, since $K_{q}^{\eta} = K_{q_0}^{\eta} + \frac{|q|^2-|q_0|^2}{2}$,
\[
\sigma_{\mathrm{disc}}\left(K_{q}^{\eta}+V_{\nu,m}\right) \cap \left[\epsilon_F - \frac{\eta}{4},\epsilon_F+\frac{\eta}{4}\right] \subset \frac{|q|^2-|q_0|^2}{2} + \sigma_{\mathrm{disc}}\left(K_{q_0}^{\eta}+V_{\nu,m}\right) \cap \left[\epsilon_F - \frac{3\eta}{4},\epsilon_F+\frac{\eta}{4}\right],
\]
and the distances between the eigenvalues of $K_q^{\eta}+V_{\nu,m}$ in the spectral region $\left[\epsilon_F-\eta/4,\epsilon_F+\eta/4\right]$ are independent of $q$. 

Note next that we eliminated the values of~$q$ for which $\epsilon_F \in \sigma_{\mathrm{disc}}(K_q^{\eta}+V_{\nu,m})$ by the definition of the set $\mathfrak{M}_{\mathrm{pp}}^{\eta}$. The number of the corresponding values of~$|q|$ is at most equal to $M$, which corresponds to a further partition of the interval $\left[\sqrt{n\eta},\sqrt{(n+1)\eta}\right)$ into finitely many subintervals $[\alpha_i,\alpha_{i+1})$ with $\alpha_0:= |q_0| = \sqrt{n\eta}$.  

Let us concentrate on the region $\mathscr{R} = \{ q \in \mathbb{R}^2, n\eta \leq |q| <\alpha_1 \}$. As $\sigma_{\mathrm{disc}}\left(K_{q}^{\eta}+V_{\nu,m}\right)$ intersects the Fermi level $\epsilon_F$ at $|q|=\alpha_1$, there exists a unique integer $0\leq  j \leq M$ such that $ \varepsilon_{j} +\frac{|\alpha_1|^2 -|q_0|^2}{2} = \epsilon_F $. We now consider curves $\mathcal{C}_q \subset \mathbb{C}$ for $q \in \mathscr{R}$ satisfying the following properties:
\begin{itemize}
\item the top of the curve passes through 
\[
\epsilon_{F,\alpha_1}(q):= \frac{\varepsilon_{j+1}+\varepsilon_{j}}{2} + \frac{|q|^2-|q_0|^2}{2},\quad |q|\in \left[\sqrt{n\eta},\alpha_1\right),
\]
with $\varepsilon_{M+1}:= \epsilon_F+\eta/4$. Note that, when $j=M$, it holds $\varepsilon_M \leq \epsilon_F$, so that the above definition makes sense since there is no other eigenvalue in the spectral region~$(\varepsilon_M,\epsilon_F+\eta/4]$. Note also that the distance between $\epsilon_{F,\alpha_1}(q)$ and $\sigma_{\mathrm{disc}}\left(K_q^{\eta}+V_{\nu,m}\right) \cap [\epsilon_F-\eta/2,\epsilon_F]$ is independent of $|q|$, and that $\mathds 1_{(-\infty,\epsilon_{F,\alpha_1}(q)]}(K_q^{\eta}+V_{\nu,m}) = \mathds 1_{(-\infty,\epsilon_F]}(K_q^{\eta}+V_{\nu,m})$, \emph{i.e.} the states below the Fermi level have not been changed. 
\item the curve encloses the spectrum of $K_q^{\eta}+V_{\nu,m}$ below $\epsilon_F$ without intersecting it.
\end{itemize} 
It is then easy to see that there exists a positive constant $c_1(n,\eta)$ such that $\left|K_q^{\eta}+V_{\nu,m}-\zeta\right|\geq c_1(n,\eta) > 0$ uniformly for $|q|\in \left[\sqrt{n\eta},\alpha_1\right)$ and $\zeta\in\mathcal{C}_q$. 

The same procedure can be applied to other intervals. As the number of intervals is finite, there exists a constant $c(n,\eta)$ (which depends only on the properties of the discrete spectrum of $K_{q_0}^{\eta}+V_{\nu,m}$ in the spectral region $[\epsilon_F-3\eta/4,\epsilon_F+\eta/4]$) such that, for any $q \in \mathbb{R}^2$ with $\sqrt{n\eta} \leq |q| < \sqrt{(n+1)\eta}$, there is a curve $\mathcal{C}_q \subset \mathbb{C}$ such that $\left|K_q^{\eta}+V_{\nu,m}-\zeta\right|\geq c(n,\eta) >0$ for any $\zeta\in\mathcal{C}_q$. The proof of inequalities~\eqref{bd1} and~\eqref{bd2} is then easily obtained by replacing $-\frac{1}{2}\frac{d^2}{dz^2}$ with $-\frac{1}{2}\frac{d^2}{dz^2}+\frac{|q|^2}{2}$ in~\cite[Lemma~1]{Cances2008}.

\appendix
\section{Proofs of some technical results}

\subsection{Proof of Lemma \ref{qInMLemma}}
\label{lemmaQ}

Let us first state a useful technical result showing that finite-rank operators are dense in $\mathcal{X}_q$ and $\mathcal{K}_q$. It is a direct adaptation of~\cite[Lemma 3.2]{frank2013}.

\begin{lemma}
  \label{continuityLemma}
  Fix $q\in \bb R^2$ and consider $Q_{q} \in \mathcal{X}_{q}$. There exists a sequence $(Q_{n,q})_{n \geq 1} \subset \mathcal{X}_q$ of finite-rank operators such that $T_{q}Q_{n,q}\in \mathcal{L}(L^2(\bb R))$, and
  \begin{itemize}
  \item $Q_{n,q} \to Q_{q}$ strongly (\textit{i.e.}, $Q_{n,q}f \to Q_{q}f$ strongly in $L^2(\bb R)$ for any $f\in L^2(\bb R) $);
  \item $\displaystyle \lim_{n\to +\infty}\norm[\bigg]{ |T_q-\epsilon_F|^{1/2}(Q_{n,q}^{\pm\pm}-Q_{q}^{\pm\pm}) |T_q-\epsilon_F|^{1/2}}_{\mathfrak{S}_1(L^2(\bb R))}=0$.
  \end{itemize}
  Moreover, if $Q_{q} \in \mathcal{K}_{q}$, the sequence $(Q_{n,q})_{n \geq 1}$ can be chosen in~$\mathcal{K}_{q}$.
\end{lemma}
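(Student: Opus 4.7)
The plan is to adapt the two-stage approximation argument of~\cite[Lemma 3.2]{frank2013}: first a spectral truncation, which cures the unboundedness of $T_q Q_q$ and gives trace-norm control of the diagonal blocks; then a further truncation to a finite-rank operator that respects the Pauli constraint $-\gamma_{0,q}\leq Q_q \leq 1-\gamma_{0,q}$.

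First I would introduce the spectral cutoff $P_n := \mathds{1}_{[1/n,n]}(|T_q - \epsilon_F|)$ and set $\widetilde Q_{n,q} := P_n Q_q P_n$. Since $T_q=-\tfrac12\frac{d^2}{dz^2}+|q|^2/2$ has purely absolutely continuous spectrum $[|q|^2/2,+\infty)$, the value $0$ is not an eigenvalue of $|T_q-\epsilon_F|$, so $P_n\to I$ strongly. Moreover $P_n$ is a function of $T_q$ and therefore commutes with $T_q$, with $|T_q-\epsilon_F|^{1/2}$, and with $\gamma_{0,q}$. On $\mathrm{Ran}(P_n)$ the weight $|T_q-\epsilon_F|^{\pm 1/2}$ is bounded, so $T_q \widetilde Q_{n,q}$ is a bounded operator, and $\widetilde Q_{n,q}\to Q_q$ strongly. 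From the commutation properties one gets $\widetilde Q_{n,q}^{\pm\pm}=P_n Q_q^{\pm\pm} P_n$ and
\[
|T_q-\epsilon_F|^{1/2}\widetilde Q_{n,q}^{\pm\pm}|T_q-\epsilon_F|^{1/2} = P_n\bigl(|T_q-\epsilon_F|^{1/2} Q_q^{\pm\pm}|T_q-\epsilon_F|^{1/2}\bigr)P_n,
\]
so the $\mathfrak{S}_1$-convergence follows from the standard fact that $P_n A P_n \to A$ in $\mathfrak{S}_1$ whenever $A\in\mathfrak{S}_1$ and $P_n\to I$ strongly with $0\leq P_n\leq 1$. Finally, conjugating $-\gamma_{0,q}\leq Q_q \leq 1-\gamma_{0,q}$ by $P_n$ and using $[P_n,\gamma_{0,q}]=0$ yields $-\gamma_{0,q}\leq \widetilde Q_{n,q}\leq 1-\gamma_{0,q}$, so $\widetilde Q_{n,q}\in\mathcal{K}_q$ when $Q_q\in\mathcal{K}_q$.

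Next I would pass from $\widetilde Q_{n,q}$ to a finite-rank operator. Choose orthonormal bases $(e_i)_{i\geq 1}\subset H^2(\mathbb{R})\cap\mathrm{Ran}(1-\gamma_{0,q})$ and $(f_j)_{j\geq 1}\subset H^2(\mathbb{R})\cap\mathrm{Ran}(\gamma_{0,q})$, and let $\Pi_N$ denote the orthogonal projector onto $\mathrm{span}\{e_1,\dots,e_N,f_1,\dots,f_N\}$. Then $\Pi_N$ has finite rank contained in the domain of $T_q$, commutes with $\gamma_{0,q}$, and converges strongly to $I$. Defining $Q_{n,q} := \Pi_{N(n)} \widetilde Q_{n,q}\Pi_{N(n)}$ for a suitable diagonal choice $N(n)\to+\infty$, one obtains a finite-rank operator for which $T_q Q_{n,q}$ is bounded (the range of $\Pi_{N(n)}$ lies in $H^2(\mathbb{R})$, so $T_q\Pi_{N(n)}$ is bounded). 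The commutation $[\Pi_{N(n)},\gamma_{0,q}]=0$ preserves the Pauli constraint and gives $Q_{n,q}^{\pm\pm}=\Pi_{N(n)}\widetilde Q_{n,q}^{\pm\pm}\Pi_{N(n)}$, from which the two convergence statements follow by combining the Stage~1 convergences with another application of the $\mathfrak{S}_1$-approximation lemma for $\Pi_{N(n)}\to I$.

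The main obstacle will be ensuring that the finite-rank truncation respects simultaneously all the structural properties: the Pauli inequality, the $\mathfrak{S}_1$-control with the unbounded weight $|T_q-\epsilon_F|^{1/2}$, and the boundedness of $T_q Q_{n,q}$. A naive spectral truncation of $Q_q$ itself would generally not preserve the Pauli bounds, and truncating in an arbitrary basis of $L^2(\mathbb{R})$ would produce finite-rank operators that no longer satisfy $-\gamma_{0,q}\leq \cdot \leq 1-\gamma_{0,q}$. The key structural choice is therefore to pick the cutoffs $P_n$ and $\Pi_N$ so that both commute with $\gamma_{0,q}$: $P_n$ because it is a function of $T_q$, and $\Pi_N$ because it refines the orthogonal decomposition $L^2(\mathbb{R})=\mathrm{Ran}(\gamma_{0,q})\oplus\mathrm{Ran}(1-\gamma_{0,q})$.
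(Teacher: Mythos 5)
Your Stage~1 (the spectral cutoff $P_n=\mathds 1_{[1/n,n]}(|T_q-\epsilon_F|)$) is correct and is exactly the first step used in the paper for the analogous Lemma~\ref{lemmaFiniterankOperator}; since $P_n$ is a function of $T_q$ it commutes with $\gamma_{0,q}$ and with the weight $|T_q-\epsilon_F|^{1/2}$, and all your claims there go through. The gap is in Stage~2. Your finite-rank projectors $\Pi_N$ commute with $\gamma_{0,q}$ but \emph{not} with $|T_q-\epsilon_F|^{1/2}$ (no nonzero finite-rank projector can, since $T_q$ has purely absolutely continuous spectrum), so
\[
|T_q-\epsilon_F|^{1/2}\,\Pi_N \widetilde Q_{n,q}^{\pm\pm}\Pi_N\,|T_q-\epsilon_F|^{1/2}
= D_N \left(|T_q-\epsilon_F|^{1/2}\widetilde Q_{n,q}^{\pm\pm}|T_q-\epsilon_F|^{1/2}\right) D_N^{*},
\qquad D_N:=|T_q-\epsilon_F|^{1/2}\Pi_N P_n|T_q-\epsilon_F|^{-1/2},
\]
and the ``$\mathfrak{S}_1$-approximation lemma'' ($\Pi A \Pi\to A$ in $\mathfrak{S}_1$) requires the conjugating operators to be uniformly bounded contractions converging strongly. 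Here $\|D_N\|$ is controlled by $\sqrt{n}\,\| |T_q-\epsilon_F|^{1/2}\Pi_N\|$, which grows without bound as $N\to\infty$ for a generic choice of $H^2$-bases of $\mathrm{Ran}(\gamma_{0,q})$ and $\mathrm{Ran}(1-\gamma_{0,q})$. So the second bullet of the lemma (convergence of the weighted diagonal blocks in $\mathfrak{S}_1$) is not justified as written.

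The fix is small but essential: the finite-rank truncation must also be compatible with the spectral cutoff. Choose orthonormal bases of $\mathrm{Ran}(P_n\gamma_{0,q})$ and $\mathrm{Ran}(P_n(1-\gamma_{0,q}))$ and let $\Lambda_M$ be the projector onto the span of the first $M$ vectors of each; then $\Lambda_M$ commutes with both $\gamma_{0,q}$ and $P_n$, its range lies in $\mathrm{Ran}(P_n)$ (hence in a Paley--Wiener space, so $T_q\Lambda_M$ is bounded), and on $\mathrm{Ran}(P_n)$ the weighted and unweighted Schatten norms are equivalent up to the factor $n$: $\| |T_q-\epsilon_F|^{1/2}(\Lambda_M\widetilde Q_{n,q}^{\pm\pm}\Lambda_M-\widetilde Q_{n,q}^{\pm\pm})|T_q-\epsilon_F|^{1/2}\|_{\mathfrak{S}_1}\leq n\,\|\Lambda_M\widetilde Q_{n,q}^{\pm\pm}\Lambda_M-\widetilde Q_{n,q}^{\pm\pm}\|_{\mathfrak{S}_1}\to 0$ since $\widetilde Q_{n,q}^{\pm\pm}\in\mathfrak{S}_1$. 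A diagonal extraction then concludes. This is in substance what the references invoked by the paper ([frank2013, Lemma~3.2], [Hainzl2009, Theorems~5--6]) do; without tying $\Pi_N$ to $P_n$ (or, equivalently, truncating the singular value decompositions of the weighted operators), the argument does not close.
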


We can now provide the proof of Lemma~\ref{qInMLemma}, which follows the proof of~\cite[Lemma~3.3]{frank2013} and uses in particular ideas from~\cite{rumin2011balanced}. Let $0 \leq \gamma \leq 1 $ be a smooth enough self-adjoint finite-rank operator on $L^2(\bb R)$ (with density~$\rho_\gamma$). For $q\in \partial \overline{\mathfrak{B}}_{\epsilon_F}$, $T_q-\epsilon_F = -\frac{1}{2}\frac{d^2}{dz^2}$ and~\eqref{qInM} boils down to the one-dimensional Lieb--Thirring inequality~\cite{Lieb2005}. For $q \in \mathfrak{B}_{\epsilon_F}$, denote by $\rho_{e,q}$ is the density of the finite-rank operator $P_{e,q\,}\gamma \,P_{e,q}$ where $P_{e,q}:= \mathds 1_{[e,+\infty)}\left(|T_q-\epsilon_F|\right)$. Then, by the same manipulations as in~\cite{frank2013},
\[
\m{Tr}_{L^2(\bb R)}\left(|T_q-\epsilon_F|^{1/2}\gamma|T_q-\epsilon_F|^{1/2}\right) = \int_0^{+\infty}\left(\int_{\bb R}\rho_{e,q}(z)\,dz\right)de \geq \int_{\bb R}R_q(\rho_{\gamma}(z))\,dz,
\]
with
\[
R_q(t) = \int_0^{f_{q}^{-1}( t)}\left( \sqrt{t}-\sqrt{f_q(e)}\right)^2 de,
\]
where (introducing $a_+ = \max(0,a)$) the nonnegative function
\[
f_q(e) = (2\pi)^{-1}\left\lvert \left\{p\in \bb R\,\left.,\left\lvert\frac{|p|^2}{2}+\frac{|q|^2}{2}-\epsilon_F\right\rvert\leq e\right.\right\}\right\rvert = (2\pi)^{-1}\left( \sqrt{2e+2\epsilon_F- |q|^2}-\sqrt{\left(-2e+2\epsilon_F- |q|^2\right)_{+}}\right)
\]
is increasing, hence invertible.

Let us now provide a global lower bound on the function~$t \mapsto R_q(t)$ by considering the asymptotic behavior of this function in the regimes $t \to 0$ and $t \to +\infty$. We work in fact with the rescaled parameter $T = t/\omega_q$ and the rescaled energy $E = e/\omega_q^2$ with $\omega_q = \sqrt{2\epsilon_F-|q|^2} > 0$ (since $q\in \mathfrak{B}_{\epsilon_F}$). Note indeed that
\[
f_q(e) = \omega_q F\left(\frac{e}{\omega_q^2}\right), \qquad F(E) = \frac{1}{2\pi} \left(\sqrt{1+2E}-\sqrt{(1-2E)_+}\right),
\]
and
\[
R_q(t) = \int_0^{+\infty} \left[\left(\sqrt{t}-\sqrt{f_q(e)}\right)_+\right]^2 de = \omega_q^{3} \int_0^{+\infty} \left[\left(\sqrt{\frac{t}{\omega_q}}-\sqrt{F(E)}\right)_+\right]^2 dE.
\]
Since $F(E) \sim \pi^{-1}E$ as $E \to 0$ and $F(E) \sim \pi^{-1}\sqrt{E/2}$ as $E \to +\infty$, a simple computation shows that
\[
\int_0^{+\infty} \left[\left(\sqrt{T}-\sqrt{F(E)}\right)_+\right]^2 dE \mathop{\sim}_{T \to 0} \frac{\pi}{6} T^2,
\qquad
\int_0^{+\infty} \left[\left(\sqrt{T}-\sqrt{F(E)}\right)_+\right]^2 dE \mathop{\sim}_{T \to \infty} \frac{2\pi^2}{15} T^3.
\]
There exist therefore two positive constants $C_{1,{\rm diag}},C_{2,{\rm diag}}$ such that
\[
\forall T \geq 0, \qquad \int_0^{+\infty} \left[\left(\sqrt{T}-\sqrt{F(E)}\right)_+\right]^2 dE \geq C_{1,{\rm diag}} T^2 + C_{2,{\rm diag}} T^3,
\]
from which we deduce that
\[
\forall t \geq 0, \qquad R_q(t) \geq  C_{1,{\rm diag}} \omega_q t^2 + C_{2,{\rm diag}}  t^3.
\]
The final result is obtained by a continuity argument and the density result of Lemma~\ref{continuityLemma}.

\subsection{Proof of Lemma \ref{offdiagonal}}
\label{lemmaOffdiagonal}

We follow the proofs of~\cite[Theorem~2.1 and Lemma~3.4]{frank2013}. Inequality~\eqref{diag_LT} is trivial for $q\in \partial\overline{\mathfrak{B}}_{\epsilon_F}$. Fix $q\in \mathfrak{B}_{\epsilon_F}$ and $Q_{q}\in \mathcal{K}_{q}$. Since $Q_q^{+-} = (Q_q^{-+})^*$, it holds $\rho_{Q_{q}^{-+}}+\rho_{Q_{q}^{+-}} = 2\mathfrak{R}\rho_{Q_{q}^{-+}}$. It suffices therefore to obtain estimates for $\rho_{Q_q^{-+}}$. We rely on a duality argument. Consider to this end $V\in L^2(\bb R)$. By proceeding as in~\cite{frank2013},
\begin{equation}
  \label{offDiagDual}
  \left|\int_{\bb R} V\rho_{Q_{q}^{-+}}\right| \leq \bnorm{\frac{1-\gamma_{0,q}}{|T_q-\epsilon_F|^{1/4}}V\frac{\gamma_{0,q}}{|T_q-\epsilon_F|^{1/4}}}_{\mathfrak{S}_2} \sqrt{ \m{Tr}_{L^2(\bb R)}\left(|T_q-\epsilon_F|^{1/2}\left(Q_{q}^{++}-Q_q^{--}\right)|T_q-\epsilon_F|^{1/2}\right) }.
\end{equation}
Now,
\begin{equation}
\label{potentialestimates}
\bnorm{\frac{1-\gamma_{0,q}}{|T_q-\epsilon_F|^{1/4}}V\frac{\gamma_{0,q}}{|T_q-\epsilon_F|^{1/4}}}_{\mathfrak{S}_2}^2
=\frac{1}{2\pi}\int_{\bb R} |\widehat{V}(k)|^2\Phi_q(k)\,dk,
\end{equation}
with
\[
\Phi_q(k) := \int_{
\substack{\frac{|q|^2}{2}+\frac{(s-k)^2}{2} \geq \epsilon_F\\
\frac{|q|^2}{2}+\frac{s^2}{2} \leq \epsilon_F}} \frac{ds}{\sqrt{\left(\frac{|q|^2}{2}+\frac{(s-k)^2}{2}-\epsilon_F\right)\left(\epsilon_F-\frac{|q|^2}{2}-\frac{s^2}{2}\right)}}.
\]
In fact, denoting by $\omega_q = \sqrt{2\epsilon_F-|q|^2} > 0$ (since $q\in \mathfrak{B}_{\epsilon_F}$),
\[
\Phi_q(k) = \frac{1}{\omega_q} \Psi\left(\frac{k}{\omega_q}\right),
\qquad
\Psi(t) = 2 \int_{\substack{(m-t)^2\geq 1\\ m^2\leq 1}} \frac{dm}{\sqrt{((m-t)^2-1)(1-m^2)}}.
\]
The bounds~\eqref{offDiagDual} and~\eqref{potentialestimates} then lead to
\[
\left|\int_{\bb R} V\rho_{Q_{q}^{-+}}\right|^2 \leq\frac{1}{2\pi} \left(\int_{\bb R}\left|\widehat{V}(k)\right|^2\Phi_q(k)\,dk\right) \m{Tr}_{L^2(\bb R)}\left(|T_q-\epsilon_F|^{1/2}\left(Q_{q}^{++}-Q_q^{--}\right)|T_q-\epsilon_F|^{1/2}\right).
\]
From the estimates on~$\Psi$ provided by~\cite[Lemma~3.4]{frank2013} (namely $\Psi(t) \sim -\log|t-2|$ as $t\to 2$ and $\Psi(t) \sim 2\pi/t$ as $t\to+\infty$), there exists a positive constant~$R$ such that
\begin{align}
  \label{phiqk}
\forall 0 < w_q \leq \sqrt{2\epsilon_F}, \,\forall k \in \bb R, \quad  0 \leq \Phi_q(k) \leq \frac{R}{\left\lvert |k|-2\omega_q\right\rvert^{1/2}\sqrt{\omega_q}},
\end{align}
which implies
\begin{align*}
\frac{2\pi}{R} \sqrt{\omega_q} \int_{\bb R}|\widehat{\rho_{Q_{q}^{-+}}}(k)|^2 \left\lvert |k|-2\omega_q\right\rvert^{1/2} \,dk & \leq 2\pi\int_{\bb R}\frac{\left|\widehat{ \rho_{Q_{q}^{-+}}}(k)\right|^2}{\Phi_q(k)} \,dk \\
& \leq \m{Tr}_{L^2(\bb R)}\left(|T_q-\epsilon_F|^{1/2}\left(Q_{q}^{++}-Q_q^{--}\right)|T_q-\epsilon_F|^{1/2}\right).
\end{align*}
This gives the claimed result.

\subsection{Proof of Lemma \ref{consistencyOfDensity}}
\label{consistencyOfDensitySec}

We prove the result for Coulomb interactions. The statement of the lemma and its proof for Yukawa interactions are obtained by a straightforward adaptation.

\paragraph{Equality of $\widetilde\rho_{Q}$ and $\overline{\overline{ \rho}}_{Q}$.} Let us first show that $\widetilde\rho_{Q}-\nu = \overline{\overline{ \rho}}_{Q}-\nu$ in $\mathcal{D}'$. Fix $w\in \mathcal{D}(\bb R)$. The weak convergence $\rho_{Q_n}\rightharpoonup \overline{\overline{\rho}}_Q$ in $ L^2(\bb R) + L^p(\bb R)$ implies that
\[
\langle \rho_{Q_n}-\nu, w\rangle\toinfty \langle  \overline{\overline{ \rho}}_{Q}-\nu,w\rangle.
\]
Note next that
\begin{align*}
\langle \rho_{Q_n}-\nu,w\rangle & = \int_{\bb R}\left(\rho_{Q_n}-\nu\right)w = \int_{\bb R}\overline{\left(\widehat{\rho}_{Q_n}-\widehat{\nu}\right)}(k)\widehat{w}(k)\,dk \\
& = 2 \int_{\bb R}\frac{\overline{\left(\widehat{\rho}_{Q_n}-\widehat{\nu}\right)}(k)\widehat{f}(k)}{|k|^2}\,dk = \ud{D}\left(\rho_{Q_n}-\nu, f\right),
\end{align*}
where we introduced $f = -w''/2$. Note that $f\in \ud{\mathcal{C}}$ since $\widehat{f}\in L_{\mathrm{loc}}^1(\bb R)$ and $k \mapsto \frac{\widehat{f}(k)}{|k|} = \frac{1}{2}|k|\widehat{w}(k)$ belongs to $L^2(\mathbb{R})$ because $\norm{k\widehat{w}}_{L^2(\bb R)}^2 = \norm{w'}_{L^2(\bb R)}^2 < +\infty$. The convergence $\ud{D}\left(\rho_{Q_n}-\nu, f\right)\toinfty \ud{D}\left(\widetilde \rho_{Q}-\nu, f\right) $ then implies that $\langle \rho_{Q_n}-\nu,w\rangle\toinfty \langle \widetilde \rho_{Q}-\nu,w\rangle$. The uniqueness of the limit in the sense of distributions finally shows that $ \overline{\overline{ \rho}}_{Q}-\nu = \widetilde{\rho}_Q-\nu $ in $\mathcal{D}'(\bb R)$.

\paragraph{Equality of $\rho_{\overline{Q}}$ and $\overline{\overline{\rho}}_Q$.}
Fix $w \in \mathcal{D}(\bb R)$. The weak convergence $\rho_{Q_n} \rightharpoonup \overline{\overline{ \rho}}_{Q} $ in $L^2(\bb R)+L^p(\bb R)$ implies
\[
\langle \rho_{Q_n}, w\rangle \toinfty \langle \overline{\overline{ \rho}}_{Q}, w\rangle.
\]
It therefore suffices to prove that the operator-valued function $q \mapsto wQ_{n,q}$ belongs to $L^1(\mathbb{R}^2 ;\mathfrak{S}_1)$ and that
\begin{equation}
  \label{eq:cv_Tr_wQ}
  \frac{1}{(2\pi)^2}\int_{\bb R^2}\m{Tr}_{L^2(\bb R)}\left(wQ_{n,q}\right)\,dq =\ud{\m{Tr}}\left(wQ_n\right) =\langle \rho_{Q_n},w\rangle \xrightarrow[n\to\infty]{} \langle \rho_{\overline{Q}},w\rangle\ = \ud{\m{Tr}}\left(w\overline{Q}\right) = \langle \rho_{\overline{Q}},w\rangle.
\end{equation}
In order to prove the above convergence, we split the integration domain for $q \in \mathbb{R}^2$ into three parts as $\bb R^2 = \overline{\mathfrak{B}}_{\epsilon_F+c} \cup (\mathfrak{B}_{R}\backslash \overline{\mathfrak{B}}_{\epsilon_F+c}) \cup (\bb R^2 \backslash \mathfrak{B}_R)$, where $R>2\epsilon_F$ is large enough and $0 < c < R-\epsilon_F$.

Consider first the case when $q\in \bb R^2 \backslash \mathfrak{B}_R$. For these values of~$q$, the operator $|T_q-\epsilon_F|^{-1/2}$ is bounded, with operator norm smaller than $\left(\frac{|q|^2}{2}-\epsilon_F\right)^{-1}$. Moreover,
\[
\forall q \in \bb R^2 \backslash \mathfrak{B}_R, \qquad \left\| |T_q-\epsilon_F|^{-1/2}w|T_q-\epsilon_F|^{-1/2} \right\| \leq \norm{w}_{L^{\infty}}\left|R-\epsilon_F\right|^{-1}.
\]
Note also that $|T_q-\epsilon_F|^{1/2}Q_{n,q}|T_q-\epsilon_F|^{1/2} = |T_q-\epsilon_F|^{1/2}Q_{n,q}^{++}|T_q-\epsilon_F|^{1/2} \in \mathfrak{S}_1$ since $|q|^2 > 2\epsilon_F$. Therefore, the operator $wQ_{n,q}$ is trace-class and
\begin{align*}
\left| \m{Tr}_{L^2(\bb R)}\left(w Q_{n,q}\right) \right| & = \left|\m{Tr}_{L^2(\bb R)}\left(|T_q-\epsilon_F|^{-1/2}w|T_q-\epsilon_F|^{-1/2}|T_q-\epsilon_F|^{1/2}Q_{n,q}|T_q-\epsilon_F|^{1/2}\right)\right|\\
& \leq \norm{w}_{L^{\infty}}\left|R-\epsilon_F\right|^{-1} \left\| |T_q-\epsilon_F|^{1/2}Q_{n,q}^{++}|T_q-\epsilon_F|^{1/2} \right\|_{\mathfrak{S}_1}.
\end{align*}
Integrating over $q \in \bb R^2 \backslash \mathfrak{B}_R$ and relying on the uniform bound~\eqref{uniformBound}, we finally obtain
\begin{equation}
  \label{lastPart}
  \left| \int_{ \bb R^2 \backslash \mathfrak{B}_R}\m{Tr}_{L^2(\bb R)}\left(w Q_{n,q}\right)dq \right| \leq C\norm{w}_{L^{\infty}}\left|R-\epsilon_F\right|^{-1}.
\end{equation}
This term therefore vanishes as $R \to +\infty$. Note that a similar inequality holds if $Q_{n,q}$ is replaced by $\overline{Q}_{n,q}$.

Consider next the case when $q\in \mathfrak{B}_{R}\backslash \overline{\mathfrak{B}}_{\epsilon_F+c} $. The Kato--Seiler--Simon inequality~\eqref{KatoSeilerSimon} shows that $w|T_q-\epsilon_F|^{-1/2}$ is Hilbert--Schmidt, and $q\mapsto w|T_q-\epsilon_F|^{-1/2}$ is in $L^2(\mathfrak{B}_{R}\backslash \overline{\mathfrak{B}}_{\epsilon_F+c}; \mathfrak{S}_2)$. The convergence~\eqref{fabile2} then shows that
\begin{equation}
  \label{firstpartConvergence}
  \begin{aligned}
    &\int_{\mathfrak{B}_{R}\backslash \overline{\mathfrak{B}}_{\epsilon_F+c}}\m{Tr}_{L^2(\bb R)}\left(wQ_{n,q}\right) dq = \int_{\mathfrak{B}_{R}\backslash \overline{\mathfrak{B}}_{\epsilon_F+c}}\m{Tr}_{L^2(\bb R)}\left(w|T_q-\epsilon_F|^{-1/2} |T_q-\epsilon_F|^{1/2}Q_{n,q}\right)dq\\
    &\qquad \toinfty \int_{\mathfrak{B}_{R}\backslash \overline{\mathfrak{B}}_{\epsilon_F+c}}\m{Tr}_{L^2(\bb R)} \left(w|T_q-\epsilon_F|^{-1/2} |T_q-\epsilon_F|^{1/2}\overline{Q}_{q}\right) dq  = \int_{\mathfrak{B}_{R}\backslash \overline{\mathfrak{B}}_{\epsilon_F+c}}\m{Tr}_{L^2(\bb R)}\left(w\overline{Q}_{q}\right) dq.
  \end{aligned}
\end{equation}

Consider finally the case when $q\in \overline{\mathfrak{B}}_{\epsilon_F+c} $. Define $\Pi_{1,q}:= \mathds 1_{(-\infty,2\epsilon_F]}\left(T_{q}\right)$ and $\Pi_{2,q}:= 1-\Pi_{1,q}$. We decompose the operator $wQ_{n,q}$ as $w\Pi_{2,q}Q_{n,q} + \Pi_{1,q}w\Pi_{1,q}Q_{n,q} + \Pi_{2,q}w\Pi_{1,q}Q_{n,q} $. We show successively that these three operators are trace-class, and characterize their limits as $n \to +\infty$. Note first that $w\Pi_{2,q}Q_{n,q} = w\Pi_{2,q}|T_q-\epsilon_F|^{-1/2} |T_q-\epsilon_F|^{1/2} Q_{n,q}$ is the product of two Hilbert--Schmidt operators. In fact, a simple computation based on the Kato--Seiler--Simon inequality~\eqref{KatoSeilerSimon} shows that $q \mapsto w\Pi_{2,q}|T_q-\epsilon_F|^{-1/2} \in L^{2}\left(\overline{\mathfrak{B}}_{\epsilon_F+c};\mathfrak{S}_2\right)$. Therefore, by~\eqref{fabile2},
\begin{equation}
  \label{b}
  \int_{ \overline{\mathfrak{B}}_{\epsilon_F+c}}\m{Tr}_{L^2(\bb R)}\left(w\Pi_{2,q}Q_{n,q}\right)dq \toinfty
  \int_{ \overline{\mathfrak{B}}_{\epsilon_F+c}}\m{Tr}_{L^2(\bb R)}\left(w\Pi_{2,q}\overline{Q}_{q}\right)dq.
\end{equation}
For the second operator, we denote by $w_+$ (resp. $w_-$) the positive (resp. negative) part of $w$, so that $w= w_+-w_-$. Since $\Pi_{1,q}\sqrt{w^{\pm}}\in\mathfrak{S}_2$, it follows that $\Pi_{1,q}w^{\pm}\Pi_{1,q}\in\mathfrak{S}_1$. A simple computation shows that $q\mapsto \Pi_{1,q}w_{\pm}\Pi_{1,q}\in L^1\left(\overline{\mathfrak{B}}_{\epsilon_F+c};\mathfrak{S}_1\right)$, so that $q\mapsto \Pi_{1,q}w\Pi_{1,q}\in L^1\left(\overline{\mathfrak{B}}_{\epsilon_F+c};\mathfrak{S}_1\right)$. Therefore, in view of~\eqref{faible1_bis},
\begin{equation}
  \label{d}
  \int_{ \overline{\mathfrak{B}}_{\epsilon_F+c}}\m{Tr}_{L^2(\bb R)}\left(\Pi_{1,q}w\Pi_{1,q}Q_{n,q}\right) dq
  \xrightarrow[n\to\infty]{}
  \int_{ \overline{\mathfrak{B}}_{\epsilon_F+c}}\m{Tr}_{L^2(\bb R)}\left(\Pi_{1,q}w\Pi_{1,q}\overline{Q}_{q}\right)dq.
\end{equation}
For the last operator, we rely on the following lemma.

\begin{lemma}
  \label{lem:decomposition_Pi2_w_Pi1}
  The operator-valued function $q \mapsto \Pi_{2,q}w\Pi_{1,q}$ belongs to $L^\infty\left(\overline{\mathfrak{B}}_{\epsilon_F+c},\mathfrak{S}_1\right)$.
\end{lemma}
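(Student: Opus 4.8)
The plan is to use $\Pi_{2,q}=1-\Pi_{1,q}$ and $\Pi_{1,q}^2=\Pi_{1,q}$ to write
\[
\Pi_{2,q}\,w\,\Pi_{1,q} = w\,\Pi_{1,q} - \Pi_{1,q}\,w\,\Pi_{1,q},
\]
and to bound each of the two terms in $\mathfrak{S}_1$ uniformly in $q\in\overline{\mathfrak{B}}_{\epsilon_F+c}$. Throughout I would use that $\Pi_{1,q}=\mathds{1}_{[-\kappa_q,\kappa_q]}\!\left(-\rmi\frac{d}{dz}\right)$ with $\kappa_q:=\big(4\epsilon_F-|q|^2\big)_+^{1/2}\le 2\sqrt{\epsilon_F}$ (so $\Pi_{1,q}=0$ unless $|q|^2\le 4\epsilon_F$), that $\mathds{1}_{[-\kappa_q,\kappa_q]}\in L^2(\bb R)$ with $\|\mathds{1}_{[-\kappa_q,\kappa_q]}\|_{L^2}^2=2\kappa_q\le 4\sqrt{\epsilon_F}$, and that on the range of $\Pi_{1,q}$ one has $-\frac{d^2}{dz^2}=2T_q-|q|^2\le 4\epsilon_F$, hence $\big\|\big(1-\frac{d^2}{dz^2}\big)\Pi_{1,q}\big\|_{\mathcal L(L^2(\bb R))}\le 1+4\epsilon_F$. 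Measurability of $q\mapsto\Pi_{2,q}\,w\,\Pi_{1,q}$ as a $\mathfrak{S}_1$-valued function is not an issue (the map depends only on $|q|$, continuously in the strong operator topology away from $\{|q|^2=4\epsilon_F\}$, and $\mathfrak{S}_1$ is separable), so once a uniform bound is obtained, the claim $q\mapsto\Pi_{2,q}\,w\,\Pi_{1,q}\in L^\infty(\overline{\mathfrak{B}}_{\epsilon_F+c};\mathfrak{S}_1)$ follows.

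For the second term, I would split $w=w_+-w_-$ into positive and negative parts, so that $\sqrt{w_\pm}\in L^2(\bb R)$, and write $\Pi_{1,q}\,w_\pm\,\Pi_{1,q}=\big(\sqrt{w_\pm}\,\Pi_{1,q}\big)^*\big(\sqrt{w_\pm}\,\Pi_{1,q}\big)$. The Kato--Seiler--Simon inequality~\eqref{KatoSeilerSimon} (applied, up to taking adjoints, to the multiplier $\mathds{1}_{[-\kappa_q,\kappa_q]}$ and the function $\sqrt{w_\pm}$) gives $\big\|\sqrt{w_\pm}\,\Pi_{1,q}\big\|_{\mathfrak{S}_2}\le (2\pi)^{-1/2}\|\sqrt{w_\pm}\|_{L^2}(2\kappa_q)^{1/2}$, whence $\Pi_{1,q}\,w\,\Pi_{1,q}\in\mathfrak{S}_1$ with $\|\Pi_{1,q}\,w\,\Pi_{1,q}\|_{\mathfrak{S}_1}\le (2\pi)^{-1}\,2\kappa_q\,\|w\|_{L^1}\le (2/\pi)\sqrt{\epsilon_F}\,\|w\|_{L^1}$, uniformly in $q$.

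For the first term I would factor
\[
w\,\Pi_{1,q} = \Big(w\big(1-\tfrac{d^2}{dz^2}\big)^{-1}\Big)\Big(\big(1-\tfrac{d^2}{dz^2}\big)\Pi_{1,q}\Big),
\]
the second factor being a bounded Fourier multiplier of norm $\le 1+4\epsilon_F$ uniformly in $q$, so that it remains to prove $w\big(1-\frac{d^2}{dz^2}\big)^{-1}\in\mathfrak{S}_1$ with trace norm depending only on $w$. This is the crux: Kato--Seiler--Simon alone only gives this operator in $\mathfrak{S}_2$ (since $(1+k^2)^{-1}\in L^2(\bb R)$), and one must gain an extra power. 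To do so I would choose $\psi\in C_c^\infty(\bb R)$ with $\psi\equiv 1$ on a neighbourhood of $\mathrm{supp}\,w$ (so $w(1-\psi)=0$) and use the resolvent identity $\big[\big(1-\tfrac{d^2}{dz^2}\big)^{-1},\psi\big]=\big(1-\tfrac{d^2}{dz^2}\big)^{-1}\big(\psi''+2\psi'\tfrac{d}{dz}\big)\big(1-\tfrac{d^2}{dz^2}\big)^{-1}$ to obtain
\[
\begin{aligned}
w\big(1-\tfrac{d^2}{dz^2}\big)^{-1} &= \Big(w\big(1-\tfrac{d^2}{dz^2}\big)^{-1/2}\Big)\Big(\big(1-\tfrac{d^2}{dz^2}\big)^{-1/2}\psi\Big) \\
&\quad - \Big(w\big(1-\tfrac{d^2}{dz^2}\big)^{-1}\Big)\Big(\big(\psi''+2\psi'\tfrac{d}{dz}\big)\big(1-\tfrac{d^2}{dz^2}\big)^{-1}\Big).
\end{aligned}
\]
Each summand is now a product of two Hilbert--Schmidt operators by~\eqref{KatoSeilerSimon}, since $(1+k^2)^{-1/2}$, $(1+k^2)^{-1}$ and $k(1+k^2)^{-1}$ all belong to $L^2(\bb R)$ while $w,\psi,\psi',\psi''\in C_c^\infty(\bb R)\subset L^2(\bb R)$; hence $w\big(1-\frac{d^2}{dz^2}\big)^{-1}\in\mathfrak{S}_1$, with trace norm controlled by $\|w\|_{L^2}$ and finitely many seminorms of $\psi$. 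Combining the three estimates yields $\sup_{q\in\overline{\mathfrak{B}}_{\epsilon_F+c}}\|\Pi_{2,q}\,w\,\Pi_{1,q}\|_{\mathfrak{S}_1}<\infty$, which proves the lemma. The main obstacle is exactly this passage from $\mathfrak{S}_2$ to $\mathfrak{S}_1$: one has to exhibit $w\big(1-\frac{d^2}{dz^2}\big)^{-1}$ explicitly as a finite sum of products of two Hilbert--Schmidt operators (the naive estimate only gives $\mathfrak{S}_2$), and at the same time track that every constant is independent of $q$ — which is where the uniform bounds $\kappa_q\le 2\sqrt{\epsilon_F}$ and $-\frac{d^2}{dz^2}\le 4\epsilon_F$ on $\mathrm{ran}\,\Pi_{1,q}$ are used.
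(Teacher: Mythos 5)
Your proof is correct, but it follows a genuinely different route from the paper's. The paper inserts a resolvent and commutes it past $w$, writing
$\Pi_{2,q}w\Pi_{1,q}=\Pi_{2,q}(T_q-\epsilon_F)^{-1}w\,(T_q-\epsilon_F)\Pi_{1,q}-\tfrac12\,\Pi_{2,q}(T_q-\epsilon_F)^{-1}\bigl[\tfrac{d^2}{dz^2},w\bigr]\Pi_{1,q}$,
and exploits the spectral gap between the two projectors: the symbol of $\Pi_{2,q}(T_q-\epsilon_F)^{-1}$ is in $L^2(\R)$ (no singularity, since $T_q>2\epsilon_F$ on the range of $\Pi_{2,q}$, and $O(k^{-2})$ decay at infinity), the symbol of $(T_q-\epsilon_F)\Pi_{1,q}$ is bounded and compactly supported, and sandwiching $w=\sqrt{w_+}\sqrt{w_+}-\sqrt{w_-}\sqrt{w_-}$ in between yields products of two Hilbert--Schmidt operators via Kato--Seiler--Simon; the commutator term is treated the same way. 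You instead use $\Pi_{2,q}=1-\Pi_{1,q}$ and reduce the whole problem to the $q$-independent fact $w\bigl(1-\tfrac{d^2}{dz^2}\bigr)^{-1}\in\mathfrak{S}_1$, which you establish with a commutator against a spatial cutoff $\psi$; the only remaining $q$-dependence sits in the uniformly bounded multiplier $\bigl(1-\tfrac{d^2}{dz^2}\bigr)\Pi_{1,q}$. Both arguments rest on the same two tools (Kato--Seiler--Simon plus one commutator identity to upgrade an $\mathfrak{S}_2$ bound to $\mathfrak{S}_1$), but yours makes the uniformity in $q$ completely transparent by isolating a single $q$-independent trace-class operator, at the price of not using the structure of $\Pi_{2,q}$ at all. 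In fact you prove the stronger statement that $q\mapsto w\Pi_{1,q}$ belongs to $L^\infty\bigl(\overline{\mathfrak{B}}_{\epsilon_F+c};\mathfrak{S}_1\bigr)$, and since $\|\Pi_{2,q}\|_{\mathcal{L}(L^2(\R))}\le 1$ you could have skipped the separate estimate of $\Pi_{1,q}w\Pi_{1,q}$ altogether by simply bounding $\|\Pi_{2,q}w\Pi_{1,q}\|_{\mathfrak{S}_1}\le\|w\Pi_{1,q}\|_{\mathfrak{S}_1}$. All the individual steps check out: the identity $\bigl[\bigl(1-\tfrac{d^2}{dz^2}\bigr)^{-1},\psi\bigr]=\bigl(1-\tfrac{d^2}{dz^2}\bigr)^{-1}\bigl(\psi''+2\psi'\tfrac{d}{dz}\bigr)\bigl(1-\tfrac{d^2}{dz^2}\bigr)^{-1}$, the memberships of $(1+k^2)^{-1/2}$, $(1+k^2)^{-1}$ and $k(1+k^2)^{-1}$ in $L^2(\R)$, and the uniform bounds $\kappa_q\le 2\sqrt{\epsilon_F}$ and $\bigl\|\bigl(1-\tfrac{d^2}{dz^2}\bigr)\Pi_{1,q}\bigr\|_{\mathcal{L}(L^2(\R))}\le 1+4\epsilon_F$.
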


In particular, $q \mapsto \Pi_{2,q}w\Pi_{1,q}$ belongs to $L^1\left(\overline{\mathfrak{B}}_{\epsilon_F+c},\mathfrak{S}_1\right)$, so that, by~\eqref{faible1_bis},
\begin{equation}
  \label{c}
  \int_{ \overline{\mathfrak{B}}_{\epsilon_F+c}} \m{Tr}_{L^2(\bb R)}\left(\Pi_{2,q}w\Pi_{1,q}Q_{n,q}\right) dq
  \xrightarrow[n\to\infty]{}
  \int_{ \overline{\mathfrak{B}}_{\epsilon_F+c}}\m{Tr}_{L^2(\bb R)}\left(\Pi_{2,q}w\Pi_{1,q}\overline{Q}_{q}\right)dq.
\end{equation}

We finally obtain, by summing~\eqref{b}, \eqref{c} and~\eqref{d}, that
\begin{equation}
  \label{smallpart}
  \int_{ \overline{\mathfrak{B}}_{\epsilon_F+c} }\m{Tr}_{L^2(\bb R)}\left(w Q_{n,q}\right) dq \xrightarrow[n\to\infty]{} \int_{ \overline{\mathfrak{B}}_{\epsilon_F+c} }\m{Tr}_{L^2(\bb R)}\left(w \overline{Q}_{q}\right) dq.
\end{equation}
The combination of~\eqref{lastPart}, \eqref{firstpartConvergence} and~\eqref{smallpart} shows that~\eqref{eq:cv_Tr_wQ} holds, which allows to conclude the proof of the equality $\rho_{\overline{Q}}-\nu =\overline{\overline{\rho}}_Q-\nu$ in the sense of distributions.

\medskip

Let us conclude this section by providing the proof of Lemma~\ref{lem:decomposition_Pi2_w_Pi1}.

\begin{proof}[Proof of Lemma~\ref{lem:decomposition_Pi2_w_Pi1}]
Consider $q \in \overline{\mathfrak{B}}_{\epsilon_F+c}$. We decompose the operator $\Pi_{2,q}w\Pi_{1,q}$ as follows:
\[
\begin{aligned}
\Pi_{2,q}w\Pi_{1,q} & = \Pi_{2,q}(T_q-\epsilon_F)^{-1} (T_q-\epsilon_F)w\Pi_{1,q} \\
& = \Pi_{2,q}(T_q-\epsilon_F)^{-1} w (T_q-\epsilon_F)\Pi_{1,q} - \frac12 \Pi_{2,q} (T_q-\epsilon_F)^{-1} \left[\frac{d^2}{dz^2},w\right]\Pi_{1,q}.
\end{aligned}
\]
By the Kato--Seiler--Simon inequality~\eqref{KatoSeilerSimon}, $q \mapsto \Pi_{2,q}(T_q-\epsilon_F)^{-1} \sqrt{w_\pm}$ and $q \mapsto \sqrt{w_\pm} (T_q-\epsilon_F)\Pi_{1,q}$ both belong to $L^\infty(\overline{\mathfrak{B}}_{\epsilon_F+c},\mathfrak{S}_2)$, so that $q \mapsto \Pi_{2,q}(T_q-\epsilon_F)^{-1} w (T_q-\epsilon_F)\Pi_{1,q}$ is in $L^\infty(\overline{\mathfrak{B}}_{\epsilon_F+c},\mathfrak{S}_1)$. Moreover,
\[
\Pi_{2,q} (T_q-\epsilon_F)^{-1} \left[\frac{d^2}{dz^2},w\right]\Pi_{1,q} = 2 \Pi_{2,q} (T_q-\epsilon_F)^{-1} w' \frac{d}{dz} \Pi_{1,q} + \Pi_{2,q} (T_q-\epsilon_F)^{-1} w''\Pi_{1,q}.
\]
The decomposition
\[
\begin{aligned}
\Pi_{2,q} (T_q-\epsilon_F)^{-1} w' \frac{d}{dz} \Pi_{1,q} & = \left[\Pi_{2,q} (T_q-\epsilon_F)^{-1} \sqrt{(w')_+}\right]\left[\sqrt{(w')_+} \frac{d}{dz} \Pi_{1,q}\right] \\
& \ \ - \left[\Pi_{2,q} (T_q-\epsilon_F)^{-1} \sqrt{(w')_-}\right]\left[\sqrt{(w')_-} \frac{d}{dz} \Pi_{1,q}\right]
\end{aligned}
\]
shows that $q \mapsto \Pi_{2,q} (T_q-\epsilon_F)^{-1} w' \frac{d}{dz}\Pi_{1,q}$ belongs to $L^\infty(\overline{\mathfrak{B}}_{\epsilon_F+c},\mathfrak{S}_1)$ as the sum of products of operator-valued functions in $L^\infty(\overline{\mathfrak{B}}_{\epsilon_F+c},\mathfrak{S}_2)$. It can also similarly be shown that $q \mapsto \Pi_{2,q} (T_q-\epsilon_F)^{-1} w''\Pi_{1,q}$ is in $L^\infty(\overline{\mathfrak{B}}_{\epsilon_F+c},\mathfrak{S}_1)$, which proves the statement of the lemma.
\end{proof}

\bibliographystyle{plain}
\bibliography{rHF_Extended_Defect_Fermi_Sea_revised}

\begin{thebibliography}{10}

\bibitem{AnantharamanCances}
A.~Anantharaman and \'{E}. Canc\`es.
\newblock Existence of minimizers for {Kohn--Sham} models in quantum chemistry.
\newblock {\em Ann. Institut Henri Poincar\'e (C) Non Linear Analysis},
  26(6):2425--2455, 2009.

\bibitem{AJOO07}
W.~Aschbacher, V.~Jak\v{s}i\'{c}, Y.~Pautrat, and C.-A. Pillet.
\newblock Transport properties of quasi-free fermions.
\newblock {\em J. Math. Phys.}, 48(3):032101, 2007.

\bibitem{Bellissard94}
J.~Bellissard, A.~van Elst, and H.~Schulz-Baldes.
\newblock The noncommutative geometry of the quantum {H}all effect.
\newblock {\em J. Math. Phys.}, 35(10):5373--5451, 1994.

\bibitem{BJLP15}
L.~Bruneau, V.~Jak\v{s}i\'{c}, Y.~Last, and C.-A. Pillet.
\newblock Landauer-{B}\"{u}ttiker and {T}houless conductance.
\newblock {\em Comm. Math. Phys.}, 338(1):347--366, 2015.

\bibitem{Cances2008}
\'{E}. Canc\`{e}s, A.~Deleurence, and M.~Lewin.
\newblock A new approach to the modeling of local defects in crystals: The
  reduced {Hartree-Fock} case.
\newblock {\em Commun. Math. Phys.}, 281(1):129--177, 2008.

\bibitem{Cances2008b}
\'{E}. Canc\`{e}s, A~Deleurence, and M~Lewin.
\newblock Non-perturbative embedding of local defects in crystalline materials.
\newblock {\em J. Phys. Condens. Matter}, 20(29):294213, 2008.

\bibitem{CancesLahbabiLewin2013}
\'{E}. Canc\`{e}s, S.~Lahbabi, and M.~Lewin.
\newblock Mean-field models for disordered crystals.
\newblock {\em J. Math. Pures. Appl.}, 100(2):241–274, 2013.

\bibitem{Cances2010}
{\'E}.~Canc{\`e}s and M.~Lewin.
\newblock The dielectric permittivity of crystals in the reduced
  {Hartree--Fock} approximation.
\newblock {\em Arch. Ration. Mech. Anal.}, 197(1):139--177, Jul 2010.

\bibitem{CANCES2012887}
\'{E}. Canc\`{e}s and G.~Stoltz.
\newblock {A mathematical formulation of the random phase approximation for
  crystals}.
\newblock {\em Ann. I. H. Poincare-An.}, 29(6):887 -- 925, 2012.

\bibitem{Cao}
L.~Cao.
\newblock {\em Analyse math\'ematique du transport thermo- \'electronique dans
  les solides d\'esordonn\'es}.
\newblock Universit\'e Paris Est, 2019.

\bibitem{CATTO2001687}
I.~Catto, C.~Le Bris, and P-L. Lions.
\newblock On the thermodynamic limit for {H}artree–{F}ock type models.
\newblock {\em Annales de l'Institut Henri Poincare (C) Non Linear Analysis},
  18(6):687 -- 760, 2001.

\bibitem{Coleman}
A.~J. Coleman.
\newblock Structure of {F}ermion density matrices.
\newblock {\em Rev. Mod. Phys.}, 35:668--686, 1963.

\bibitem{Cornean2008}
H.~D. Cornean, P.~Duclos, G.~Nenciu, and R.~Purice.
\newblock Adiabatically switched-on electrical bias and the
  {L}andauer–{B}üttiker formula.
\newblock {\em J. Math. Phys.}, 49(10):102106, 2008.

\bibitem{Cornean2012}
H.~D. Cornean, P.~Duclos, and R.~Purice.
\newblock Adiabatic non-equilibrium steady states in the partition free
  approach.
\newblock {\em Ann. Henri Poincar{\'e}}, 13(4):827--856, 2012.

\bibitem{Oliveira2009}
C.~R. {de Oliveira} and A.~A. Verri.
\newblock Self-adjoint extensions of {C}oulomb systems in 1, 2 and 3
  dimensions.
\newblock {\em Ann. Phys.}, 324(2):251 -- 266, 2009.

\bibitem{Lang2001}
M.~Di~Ventra and N.~D. Lang.
\newblock Transport in nanoscale conductors from first principles.
\newblock {\em Phys. Rev. B}, 65:045402, 2001.

\bibitem{DreizlerGross}
R.M. Dreizler and E.K.U. Gross.
\newblock {\em {Density Functional Theory}}.
\newblock Springer Berlin Heidelberg, 1990.

\bibitem{Frank2011energy}
R.~L. Frank, M.~Lewin, E.~H. Lieb, and R.~Seiringer.
\newblock Energy cost to make a hole in the {F}ermi sea.
\newblock {\em Phys. Rev. Lett.}, 106:150402, 2011.

\bibitem{frank2013}
R.L. Frank, M.~Lewin, E.H. Lieb, and R.~Seiringer.
\newblock {A positive density analogue of the Lieb–Thirring inequality}.
\newblock {\em Duke Math. J.}, 162(3):435--495, 2013.

\bibitem{RevModPhys.86.253}
C.~Freysoldt, B.~Grabowski, T.~Hickel, J.~Neugebauer, G.~Kresse, A.~Janotti,
  and C.G. Van~de Walle.
\newblock First-principles calculations for point defects in solids.
\newblock {\em Rev. Mod. Phys.}, 86:253--305, 2014.

\bibitem{Friedel}
J.~Friedel.
\newblock The distribution of electrons round impurities in monovalent metals.
\newblock {\em The London, Edinburgh, and Dublin Philosophical Magazine and
  Journal of Science}, 43(337):153--189, 1952.

\bibitem{Gontier2016supercell}
D.~Gontier and S.~Lahbabi.
\newblock Supercell calculations in the reduced {H}artree-{F}ock model for
  crystals with local defects.
\newblock {\em Appl. Math. Res. Express}, pages 1--64, 2016.

\bibitem{Hainzl2005}
Ch. Hainzl, M.~Lewin, and {\'E}.~S{\'e}r{\'e}.
\newblock {Existence of a stable polarized vacuum in the Bogoliubov-Dirac-Fock
  approximation}.
\newblock {\em Commun. Math. Phys.}, 257(3):515--562, 2005.

\bibitem{0305-4470-38-20-014}
Ch. Hainzl, M.~Lewin, and {\'E}.~S{\'e}r{\'e}.
\newblock Self-consistent solution for the polarized vacuum in a no-photon
  {QED} model.
\newblock {\em J. Phys. A-Math. Gen.}, 38(20):4483, 2005.

\bibitem{Hainzl2009}
Ch. Hainzl, M.~Lewin, and {\'E}.~S{\'e}r{\'e}.
\newblock Existence of atoms and molecules in the mean-field approximation of
  no-photon quantum electrodynamics.
\newblock {\em Arch. Ration. Mech. An.}, 2009.

\bibitem{HAINZL2005TheMA}
Ch. Hainzl, M.~Lewin, and J.P. Solovej.
\newblock {The mean-field approximation in quantum electrodynamics. The
  no-photon case}.
\newblock {\em Commun. Pur. Appl. Math.}, LX:546--596, 2007.

\bibitem{Helffer}
B.~Helffer.
\newblock {\em Spectral Theory and its Applications}.
\newblock Cambridge University Press, 2013.

\bibitem{Hundertmark2007}
D.~Hundertmark.
\newblock A short introduction to {A}nderson localization.
\newblock In {\em Analysis and Stochastics of Growth Processes and Interface
  Models}, pages 194--218. Oxford Univ. Press, Oxford, 2008.

\bibitem{kaxiras_2003}
E.~Kaxiras.
\newblock {\em Atomic and Electronic Structure of Solids}.
\newblock Cambridge University Press, 2003.

\bibitem{Koch2006}
H.~Koch and D.~Tataru.
\newblock Carleman estimates and absence of embedded eigenvalues.
\newblock {\em Commun. Math. Phys.}, 267(2):419--449, Oct 2006.

\bibitem{KohnSham}
W.~Kohn and L.~J. Sham.
\newblock Self-consistent equations including exchange and correlation effects.
\newblock {\em Phys. Rev.}, 140:A1133--A1138, 1965.

\bibitem{doi:10.1063/1.3318261}
A.~V. Krasheninnikov and K.~Nordlund.
\newblock Ion and electron irradiation-induced effects in nanostructured
  materials.
\newblock {\em J. Appl. Phys.}, 107(7):071301, 2010.

\bibitem{Lahbabi2014}
S.~Lahbabi.
\newblock {The reduced Hartree-Fock model for short-range quantum crystals with
  nonlocal defects}.
\newblock {\em Ann. Henri. Poincar{\'e}}, 15(7):1403--1452, 2014.

\bibitem{LewinLiebSeiringer}
M.~Lewin, E.H. Lieb, and R.~Seringer.
\newblock The local density approximation in {D}ensity {F}unctional {T}heory.
\newblock {\em arXiv preprint}, 1903.04046, 2019.

\bibitem{LewinSabin1}
M.~{Lewin} and J.~{Sabin}.
\newblock {The Hartree equation for infinitely many particles I. Well-posedness
  theory}.
\newblock {\em Commun. Math. Phys.}, 334:117--170, 2015.

\bibitem{Lieb2005}
E.~H. Lieb and W.~E. Thirring.
\newblock {\em Inequalities for the moments of the eigenvalues of the
  Schr{\"o}dinger Hamiltonian and their relation to Sobolev inequalities},
  pages 205--239.
\newblock Springer Berlin Heidelberg, Berlin, Heidelberg, 2005.

\bibitem{MartinReiningCeperley}
R.M. Martin, L.~Reining, and D.M. Ceperley.
\newblock {\em Interacting Electrons: {T}heory and Computational Approaches}.
\newblock Cambridge University Press, 2016.

\bibitem{Mourre1981}
E.~Mourre.
\newblock Absence of singular continuous spectrum for certain self-adjoint
  operators.
\newblock {\em Commun. Math. Phys.}, 78(3):391--408, 1981.

\bibitem{RevModPhys.50.797}
S.~T. Pantelides.
\newblock The electronic structure of impurities and other point defects in
  semiconductors.
\newblock {\em Rev. Mod. Phys.}, 50:797--858, 1978.

\bibitem{ReeSim4}
M.~Reed and B.~Simon.
\newblock {\em {Method of Modern Mathematical Physics. Vol.4: Analysis of
  Operators}}.
\newblock Academic Press, San Diego, 1978.

\bibitem{ReeSim80}
M.~Reed and B.~Simon.
\newblock {\em {Method of Modern Mathematical Physics. Vol.1: Functional
  Analysis}}.
\newblock Academic Press, San Diego, 1980.

\bibitem{rumin2011balanced}
M.~Rumin.
\newblock Balanced distribution-energy inequalities and related entropy bounds.
\newblock {\em Duke Math. J.}, 160(3):567--597, 2011.

\bibitem{BarryS}
B.~Simon.
\newblock {\em Trace Ideals and their Applications}.
\newblock Amer. Math. Soc., 1979.

\bibitem{Solovej1991}
J.~P. Solovej.
\newblock {Proof of the ionization conjecture in a reduced Hartree-Fock model}.
\newblock {\em Invent. Math.}, 104(1):291--311, 1991.

\bibitem{stoneham2001theories}
M.~Stoneham.
\newblock {\em Theories of Defects in Solids}.
\newblock Oxford Classic Texts in the Physical Sciences. Clarendon Press, 2001.

\bibitem{vonNeuman}
J.~{von Neuman} and E.~{Wigner}.
\newblock {Uber merkw{\"u}rdige diskrete Eigenwerte. Uber das Verhalten von
  Eigenwerten bei adiabatischen Prozessen}.
\newblock {\em Physikalische Zeitschrift}, 30:467--470, 1929.

\end{thebibliography}

\end{document}